\newcommand{\EEE}{\color{black}}
\theoremstyle{plain}
\newtheorem{theorem}{Theorem}[section]
\newtheorem{lemma}[theorem]{Lemma}
\newtheorem{proposition}[theorem]{Proposition}
\theoremstyle{definition}
\newtheorem{definition}[theorem]{Definition}
\newtheorem{remark}[theorem]{Remark}
\begin{document}

\title{Crystallization  in the hexagonal lattice for ionic dimers}

\keywords{Ionic dimers, ground-state, configurational energy minimization, crystallization, hexagonal lattice, net charge}

\author{Manuel Friedrich}
\address[Manuel Friedrich]{Applied Mathematics M\"unster, University of M\"unster\\
Einsteinstrasse 62, 48149 M\"unster, Germany.}
\email{manuel.friedrich@uni-muenster.de}

\author{Leonard Kreutz}
\address[Leonard Kreutz]{Applied Mathematics M\"unster, University of M\"unster\\
Einsteinstrasse 62, 48149 M\"unster, Germany.}
\email{lkreutz@uni-muenster.de}

\date{}

\begin{abstract}

 We consider finite discrete systems consisting of two different atomic types and investigate ground-state configurations  for configurational energies featuring  two-body short-ranged particle interactions. The atomic potentials favor some reference distance between different atomic types and include repulsive terms for atoms of the same type,  which are typical assumptions in models for ionic dimers.  Our goal is to show a two-dimensional crystallization result. More precisely, we give conditions in order to prove that   energy minimizers are connected subsets of the hexagonal lattice where the two atomic types are  alternately arranged in  the crystal lattice. We also provide explicit formulas for the  ground-state energy. Finally, we characterize the net charge, i.e., the difference of the number of the two atomic types. Analyzing the deviation of configurations from the hexagonal Wulff shape, we prove that for ground states consisting of $n$ particles the net charge is at most of order ${\rm O}(n^{1/4})$ where the scaling is sharp.  
\end{abstract}

\subjclass[2010]{82D25.} 
\maketitle

\section{Introduction}

 A fundamental problem in crystallography is to understand why atoms or molecules typically arrange themselves into  crystalline order at low temperature. The challenge to prove rigorously that global minimizers of certain configurational energies arrange in a periodic lattice  is referred to as the \emph{crystallization problem} \cite{Blanc}.  Starting with seminal works in one dimension \cite{Gardner,Hamrick,Radi,Ventevogel}, the last decades have witnessed a remarkable interest in this mathematical problem  for systems consisting of \emph{identical} particles.

At zero or very low temperature, atomic interactions are expected to be governed solely by the geometry of the atomic arrangement. Identifying configurations  with the respective positions of identical atoms  $\lbrace x_1, \ldots, x_n \rbrace$, one is concerned with the  minimization of a configurational energy $\mathcal{E}( \lbrace x_1,\ldots,x_n \rbrace )$ which  comprises classical interaction potentials.   Crystallization then consists in proving  that any ground state of $\mathcal{E}$  is a subset of a periodic lattice. \EEE

Despite of its paramount theoretical and
applicative relevance, rigorous  results on crystallization are scarce and even under simplified assumptions the problem presents many difficulties. In this paper, we  contribute to these fundamental mathematical questions by investigating crystallization of particle systems consisting of \emph{two different atomic types}, also called \emph{dimers}. 

It is well known that the geometry of molecular compounds  often differs from that of their
components. An example in this direction  is sodium chloride (NaCl) which has a FCC structure (Face Centered Cubic), whereas
sodium crystals are BCC (Body Centered Cubic) and chlorine crystals are orthorhombic. The goal of this contribution is to illustrate such a phenomenon in a simplified setting by deriving a two-dimensional crystallization result in the hexagonal lattice for dimers whose components instead would likely assemble themselves in a triangular lattice. 

Similar to the problem for systems of identical particles, we follow the classical molecular-mechanical frame of configurational energy minimization. We identify configurations of $n$ particles with their respective \emph{positions} $\{x_1,\ldots,x_n\}  \subset  \mathbb{R}^{2} \EEE$ and additionally with their \emph{types} $\{q_1,\ldots,q_n\}  \in \{-1,1\}^n\EEE$. Our goal is to determine global minimizers of a corresponding  interaction energy $ \mathcal{E}(\{(x_1,q_1),\ldots,(x_n,q_n)\})$.

More specifically, the energy  $\mathcal{E} =  \mathcal{E}_{\mathrm{a}} + \mathcal{E}_{\mathrm{r}}$ is assumed to consist of two short-ranged two-body interaction potentials $\mathcal{E}_{\mathrm{a}}$ and  $\mathcal{E}_{\mathrm{r}}$, where $\mathcal{E}_{\mathrm{a}}$ represents the interactions between atoms of different type and  $\mathcal{E}_{\mathrm{r}}$ encodes the energy contributions of atoms having the  same type. The   potential $\mathcal{E}_{\mathrm{a}}$ is attractive-repulsive and  favors atoms sitting at some specific reference distance, whereas $\mathcal{E}_{\mathrm{r}}$  is a pure repulsive term. The latter is supposed to be strong enough at short distances such that the hexagonal lattice, with the
two atomic types alternating, is energetically preferred with respect to  closer packed structures.  (Note that weaker short-ranged repulsion may indeed favor crystallization in the square lattice, see \cite{FriedrichKreutzSquare}.)

An example of a dimer  satisfying qualitatively  the assumptions on the attractive and repulsive potentials is  graphitic \EEE boron nitride \cite{DresselhausM, Wang}, which, similar to graphene \cite{Geim}, is a one-atom thick layer of atoms arranged in a hexagonal lattice. Moreover, our choice of the interaction potentials is motivated by the modeling of ions in ionic compounds. Indeed, the two interaction energies may be interpreted as (simplified) Coulomb-interactions between ions of equal and opposite charge. In this spirit, we will frequently refer to the particles as \emph{ions} and to the atomic types $\{q_1,\ldots,q_n\}  \in \{-1,1\}^n$ as positive or negative \emph{charges}.

The paper contains the following three main results:
\begin{itemize}
\item[(1)]   Theorem \ref{TheoremEnergyGroundstates}:  Under suitable assumptions on the attractive and the repulsive potentials, we characterize the ground-state energy of $n$-particle configurations of ions in the plane. In particular, we quantify exactly the surface energy related to atoms at the boundary of the ensemble. Moreover, we show that in our modeling frame  ground states are repulsion free, i.e.,  $\mathcal{E}_{\rm r}(\{(x_1,q_1),\ldots,(x_n,q_n)\}) = 0$.    
\smallskip
\item[(2)] Theorem \ref{TheoremGroundstatesleq31}: We characterize the geometry of ground states. We show that each global minimizer of the configurational energy is essentially a connected subset of the regular hexagonal lattice with the two atomic types alternating. This characterization holds except for possibly one defect occurring at the boundary which can be identified explicitly either as a single atom or as a simple octagonal cycle in the structure. 
\smallskip
\item[(3)] Theorem \ref{TheoremCharge}:  We consider the \emph{net charge} defined by $\mathcal{Q}(\{(x_1,q_1),\ldots,(x_n,q_n)\}) = \sum_{i=1}^n q_i$ or, in other words, the  (signed) difference of the number of the two atomic types.    We provide a fine asymptotic characterization of the net charge as the number of particles grows. More precisely, we show that  its fluctuation can be at most of order ${\rm O}(n^{1/4})$, i.e., $|\mathcal{Q}(\{(x_1,q_1),\ldots,(x_n,q_n)\})| \leq c n^{1/4}$ for some constant  $c>0$ independent of $n$. By giving an explicit construction we further prove that the scaling is sharp.
\end{itemize}

To the best of our knowledge, the present paper provides a first  rigorous mathematical crystallization result for two-dimensional molecular compounds.  Let us mention that in case of different types of particles, simulations are abundant, but   rigorous results seem to be limited to   \cite{Betermin,B43,B195}.  We also refer to \cite{Betermin2} for a recent study about  optimal charge distributions on Bravais lattices. \EEE Our analysis is largely inspired by many contributions on identical particle systems. We include here only a minimal   crystallization literature overview and refer the reader to the recent review \cite{Blanc} for a more general perspective.

Mathematical results on crystallization in one dimension, proving or disproving
periodicity of ground-state configurations for different choices of
the energy, are by now quite classical, see, e.g., \cite{Gardner,Hamrick,Radi,Ventevogel}.  In two dimensions for a finite number of identical particles,  ground states have been proved to be
subsets of the triangular lattice by {\sc Heitman, Radin, and   Wagner} \cite{HR,Radin,Wagner83} for sticky and soft, purely two-body interaction energies. Recently, a new perspective on this problem was given by {\sc De Luca and Friesecke} \cite{Lucia} using a  discrete Gauss-Bonnet method from discrete differential geometry.  Including angular potentials favoring  $2\pi/3$ bond-angles, which is typically the case for graphene, crystallization in the hexagonal lattice has been proved by {\sc Mainini and Stefanelli} \cite{Mainini}, see also \cite{emergence} for a related classification of ground states. In a similar fashion, if $\pi/2$ bond-angles are favored, the
square lattice can be recovered \cite{Mainini-Piovano}. Under less restrictive assumptions on the potentials, various results have been obtained in the thermodynamic limit \cite{ELi, Smereka15}, namely as  the  number of
particles $n$ tends to infinity, most notably the seminal paper by {\sc Theil} \cite{Theil}. The crystallization problem in three dimension appears to be very difficult and only few rigorous results \cite{Flateley1,Flateley2, cronut, Suto06} are  currently available. 

In two dimensions, a fine characterization of ground-state geometries is possible: as  $n$ increases, one observes the emergence of a polygonal cluster, the {\it Wulff shape} \cite{Yuen}. For the triangular \cite{Schmidt-disk},
the hexagonal \cite{Davoli15}, and the square lattice \cite{Mainini-Piovano}, ground states differ from
the Wulff shape by at most ${\rm O}(n^{3/4})$ particles and at most by ${\rm O}(n^{1/4})$ in Hausdorff distance. This sharp bound is called the $n^{3/4}$ law \cite{Davoli16}.

Our general proof strategy follows the induction method on bond-graph layers developed in  \cite{HR, Mainini-Piovano, Mainini, Radin}. In particular, our study is related to  \cite{Mainini}, where crystallization for identical particles in the hexagonal lattice has been investigated under three-body angular potentials. Actually, the ground-state energy in the present context coincides with the one obtained there.  

Concerning the characterization of ground-state geometries, however, richer geometric structures with respect to the ones of \cite{Mainini} are possible since  configurations are in general less rigid without angular potentials: simple examples show that ground states may contain
distortions and defects at the boundary such as octagons. From a technical point of view, novel geometric arguments in the
induction step are necessary, complementing the available approaches, see Remark \ref{rem: method} for more details. Let us mention that, in any case, certain quantitative requirements on the potentials are indispensable since other   assumptions  might favor an assemblence of the atoms  in other periodic structures. (In particular, we refer to the forthcoming \EEE paper \cite{FriedrichKreutzSquare} for the square-lattice case.)

For the characterization of the net charge in terms of the sharp scaling ${\rm O}(n^{1/4})$, we use the fact that in ground states the two atomic types are alternately arranged in the hexagonal lattice. Morever, the argument fundamentally relies on the $n^{3/4}$-law \cite{Davoli15} which states that ground states differ from
the Wulff shape by at most ${\rm O}(n^{3/4})$ particles, or equivalently and more relevant in our context, by ${\rm O}(n^{1/4})$ in Hausdorff distance. Let us emphasize that in our model the net charge is not a priori given but a crucial part of the minimization problem. The investigation of ground-state geometries under preassigned net charge is a different challenging problem which we defer to future studies.

The article is organized as follows. In Section 2 we introduce the precise mathematical setting and state the main results. In Section 3 we discuss elementary geometric properties of ground states. In Section 4 we construct explicitly some configurations in order to  provide sharp upper bounds for the ground-state energy and the (signed) net charge. In Section 5 we introduce the concept of boundary energy and  prove corresponding  estimates which are instrumental for the induction method used in the proof of Theorem \ref{TheoremEnergyGroundstates} and \ref{TheoremGroundstatesleq31}.  In Section 6 we give a lower bound for the ground-state energy matching the one of the configurations constructed in Section 4. In Section 7 we characterize the geometry of ground states and finally Section 8 is devoted to the proof of a $n^{1/4}$-law for the net charge of ground states.

\section{Setting and  main results}

In this section we introduce our model, give some basic definitions, and present our main results. 

\subsection{Configurations and interaction energy}
 
 We consider two-dimensional particle systems consisting of two different atomic types and model their interaction by classical potentials in the frame of Molecular Mechanics \cite{Molecular, Friesecke-Theil15,Lewars}.

Let $n \in \mathbb{N}$.  We indicate the \textit{configuration} of $n$ particles by 
$$C_n = \{ (x_1,q_1),\ldots,(x_{n},q_{n}) \} \subset  (\mathbb{R}^2 \times \lbrace -1,1 \rbrace)^n,$$
identified with the respective \emph{atom positions} $X_n =  (x_1,\ldots,x_n) \in    \mathbb{R}^{2n}$ together with their \emph{types} $Q_n =  ( q_1,\ldots,q_n ) \in   \lbrace -1,1\rbrace^n$.  Referring to a model for ionic compounds, we will frequently call $Q_n$ the \emph{charges} of the atoms, $q = 1$ representing  \emph{cations} and $q=-1$ representing \emph{anions}. Indeed, our choice of the empirical potentials (see below)  is inspired by ions in ionic compounds which are  primarily held together by the electrostatic forces between the net negative charge of the anions and net positive charge of the cations \cite{Pauling}. 

 We \EEE define the  phenomenological energy $\mathcal{E}:  (\mathbb{R}^2 \times \lbrace -1,1 \rbrace)^{n}$ $ \to \overline{\mathbb{R}}$ of a given configuration $\lbrace (x_1,q_1),\ldots,(x_n,q_n) \rbrace  \in (\mathbb{R}^2 \times \lbrace -1,1 \rbrace)^{n}$   by
\begin{align}\label{Energy}
\mathcal{E}(C_n) = \frac{1}{2}\sum_{\underset{q_i = q_j}{i \neq j}} V_{\mathrm{r}}(|x_i-x_j|) + \frac{1}{2}\sum_{\underset{q_i \neq q_j}{i \neq j}} V_{\mathrm{a}}(|x_i-x_j|),
\end{align}
where $V_{\mathrm{r}}, V_{\mathrm{a}} : [0,+\infty)\to \overline{\mathbb{R}}$ are a \emph{repulsive potential} and an \emph{attractive-repulsive potential}, respectively. The factor $\frac{1}{2}$  accounts for the fact that every contribution is counted twice in the sum.  The two potentials are pictured schematically in Fig.~\ref{FigurePotentials}.  Let $r_0 \in [1,(2\sin(\frac{\pi}{7}))^{-1})$  and note that $r_0 \le \sqrt{2}$.  The  attractive-repulsive potential $V_{\mathrm{a}}$ satisfies
\begin{align*}
{\rm [i]}& \ \ V_{\mathrm{a}}(r) =+\infty \text{ for all $r < 1$},\\
{\rm [ii]}& \ \ V_{\mathrm{a}}(r)=-1 \text{ if and only if $r=1$ and $V_{\mathrm{a}}(r) >-1$ otherwise},\\
{\rm [iii]}& \ \  V_{\mathrm{a}}(r) \leq 0 \text{ for all } r\geq 1   \text{ with equality for all $r > r_0$.}
\end{align*}
The distance $r =1 $ represents the (unique) equilibrium distance of two atoms with opposite charge. The constraint $V_{\mathrm{a}}(r) =+\infty $ is usually called the \emph{hard-interaction}  or the  \emph{hard sphere}  assumption, see e.g.\ \cite{Blanc, Mainini}. \EEE The choice of $V_{\rm a}$ reflects a balance between a long-ranged Coulomb attraction and  a short-ranged repulsive force when a pair of ions comes too close to each other. Assumption [iii] restricts the interaction range and ensures that the \emph{bond graph} (see Section \ref{sec: prelimi}) is planar. \EEE

The repulsive potential $V_{\mathrm{r}}$ satisfies
\begin{align*}
{\rm [iv]} & \ \  \text{$V_{\mathrm{r}}(r) = +\infty$ for all $r<1$ and $0 \leq V_{\mathrm{r}}(r) <+\infty$ for all $r\geq 1$},\\
{\rm [v]} & \ \  \text{$V_{\mathrm{r}}$ is non-increasing and convex for $r \geq 1$,}\\
{\rm [vi]} & \ \   \text{$V_{\mathrm{r}}(\sqrt{2}r_0) > 3$,}  \\
{\rm [vii]} & \ \  \text{$V_{\mathrm{r}}(r) = 0 $ iff $r \geq \sqrt{3}$.}
\end{align*}
Assumption [iv] is the hard-interaction assumption for the repulsive potential. The natural assumption [v] is satisfied for example for repulsive Coulomb interactions. We remark that some quantitative requirement of the form [vi] and [vii] are indispensable to obtain a crystallization result in the hexagonal lattice.  Indeed, for vanishing $V_{\rm r}$, ground states could be patches of the triangular lattice. Other quantitative assumptions on the repulsive potential will favor an assemblence of the atoms  in the square-lattice as we prove in \cite{FriedrichKreutzSquare}. Also note that two-body pair interactions for identical particle systems  often \EEE favor crystallization in the triangular lattice \cite{Radin, Theil},  when the interaction is of attractive-repulsive-type (instead of pure repulsive-type),  e.g.\ for the   Lennard Jones or the Morse potential. \EEE  This reflects the  aforementioned \EEE fact that the geometry of molecular compounds often differs from that of their
components.

A main assumption is [vii], i.e., the repulsion vanishes at $\sqrt{3}$ (the distance between second neighbors in the hexagonal lattice). Note that, if $V_{\rm r}(\sqrt{3})$ is  instead assumed  to be positive and sufficiently large, then crystallization in the hexagonal lattice is not expected as, e.g., a one-dimensional chain of atoms  with alternating charges \EEE is energetically favorable. For small, positive $V_{\rm r}(\sqrt{3})$ we still expect crystallization in the hexagonal lattice, but the analysis is much more demanding and beyond the scope of the present contribution.

\begin{figure}[h]
\centering
\begin{tikzpicture}
\draw[->](0,-2)--++(0,6) node[anchor= east] {$V_{\mathrm{a}}(r)$};
\draw[->](-1,0)--++(5,0) node[anchor =north] {$r$};
\draw[decorate, decoration={snake,amplitude=.4mm,segment length=1.5mm}] (0,3) node[anchor =east]{$+\infty$}--++(1,0);
\draw[dashed,thin](1,-1) --++(0,1) node[anchor =north east] {$ 1 $}--++(0,3);
\draw[fill=black](0,-1) node[anchor = east] {$-1$}++(1,0) circle(.025);
\draw[thick](1.6,0)--++(2.4,0);
\draw[thick](1,-1)--(1.1,-0.6);
\draw[thick](1.1,-0.6) parabola[bend at end] (1.6,0) node[anchor=north]{$r_0$};

\begin{scope}[shift={(8,0)}]
\draw[->](0,-2)--++(0,6) node[anchor= east] {$V_{\mathrm{r}}(r)$};
\draw[->](-1,0)--++(5,0) node[anchor =north] {$r$};
\draw[decorate, decoration={snake,amplitude=.4mm,segment length=1.5mm}] (0,3) node[anchor =east]{$+\infty$}--++(1,0);
\draw[dashed,thin](1,0) node[anchor =north] {$ 1 $}--++(0,3);
\draw[thick](1,.75) parabola[bend at end] ({sqrt(3)},0) node[anchor=north]{$\sqrt{3}$};
\draw[thick]({sqrt(3)},0)--(4,0);
\draw[dashed,thin]({sqrt(3)},-.075)++(150:-.125)--++(150:1.125);
\end{scope}

\end{tikzpicture}
\caption{The potentials $V_{\mathrm{a}}$ and $V_{\mathrm{r}}$.}
\label{FigurePotentials}
\end{figure}

Finally, we require the following \emph{slope conditions} \EEE
\begin{itemize}
\item[{\rm[viii]} ] $ \displaystyle
V'_{\mathrm{r},-}(\sqrt{3}) < {-\frac{3}{\pi}},  \ \ \ \ \ \ \    \frac{1}{r-1}(V_{\mathrm{a}}(r)- V_{\mathrm{a}}(1)) > -2V'_{\mathrm{r},+}(1)   \ \ \ \text{for all } r \in (1,r_0],
$
\end{itemize}
where the functions $V_{\mathrm{r},+}',V_{\mathrm{r},-}'$  denote the right and left derivative, respectively.  (They exist due to convexity of $V_{\rm r}$.)  These conditions are reminiscent of  the \emph{soft-interaction} assumption by {\sc Radin} \cite{Radin} and the \emph{slope condition} for an angular potential by {\sc Mainini and Stefanelli} \cite{Mainini}. In particular, we assume that the repulsion grows linearly out of $\sqrt{3}$ and that, roughly speaking,   the slope of $V_{\mathrm{a}}$ is steep enough compared to the slope of $V_{\mathrm{r}}$. We highlight that without assumptions of this kind  finite crystallization is presently not known, not even for identical particle systems.  Assumption [viii] \EEE is only needed in Lemma \ref{LemmaBoundaryEnergy} where the energy contribution of atoms at the boundary of the configuration  is \EEE estimated. See also Remark \ref{rem: angle/bond}  for an in-depth explanation of the assumption and \EEE for a comparison of our model to \EEE \cite{Mainini}.  Note that [viii] also restricts the choice of possible $r_0$. 
 Assumptions  [ii]   and [vii] ensure that the (infinite) hexagonal lattice (see  \eqref{eq: hex-lattice} below) is \emph{stress free}. An assumption of this kind is necessary as otherwise for finite particle systems surface relaxation at the boundary of the configuration could occur, eventually ruling out \EEE finite crystallization in any lattice. \EEE

 We remark that the assumptions are chosen here for the sake of maximizing simplicity rather than generality. Some conditions, e.g., about the hard-interaction assumption or the exact value in [vi], could be weakened at the expense of more elaborated arguments. \EEE In particular, two different repulsive potentials for cations and anions could be considered as long as [iv]-[viii] hold for both potentials. 
  In the following we assume that conditions $[\mathrm{i}]$-$[\mathrm{viii}]$ are always satisfied.

\subsection{Basic notions}\label{sec: prelimi} 

In this section we collect some basic notions. Consider a configuration $C_n  \in  (\mathbb{R}^{2} \times \{\ -1,1\})^{n}$ with finite energy consisting of the positions $X_n  =  (x_1,\ldots,x_n) \EEE \in \mathbb{R}^{2n}$ and the charges $Q_n =  ( q_1,\ldots,q_n )  \EEE \in \lbrace -1,1 \rbrace^n$.

\textbf{Neighborhood, bonds, angles:}  For $i \in \{1,\ldots,n\}$ we define the \textit{neighborhood of} $x_i\in \mathbb{R}^2$ by
\begin{align}\label{eq: neighborhood}
\mathcal{N}(x_i) = \left(X_n \setminus \{x_i\}\right) \cap \lbrace x \in \mathbb{R}^2: \ |x-x_i| \le r_0 \rbrace,
\end{align}
 where $r_0$ is defined in [iii]. \EEE   If $x_j \in \mathcal{N}(x_i)$, we say that $x_i$ and $x_j$ are \emph{bonded}. We will say that $x_i$ is $k$-bonded if $\# \mathcal{N}(x_i) = k$. Given $x_j,x_k \in \mathcal{N}(x_i)$, we define the \textit{bond-angle} between $x_j,x_i,x_k$ as the angle between the two vectors $x_k-x_i$ and $x_j-x_i$ (choose anti-clockwise orientation, for
definiteness). In general, we say that it is an angle at $x_i$. 
\EEE

\textbf{The bond graph:}  The set of   atomic positions $X_n \subset \mathbb{R}^{2n}$ together with the set of  bonds $\{\{x_i,x_j\}: x_j \in \mathcal{N}(x_i)\}$   forms a graph which we call the  \textit{bond graph}. Since   for finite energy configurations we get \EEE $\mathrm{dist}(x_i,X_n \setminus \{x_i\})\geq 1$ and $x_j \in \mathcal{N}(x_i)$ only if $|x_i-x_j| \leq r_0<\sqrt{2}$, we have that   their bond   graph is  planar.  Indeed, given a quadrangle with all sides and one diagonal in $[1,r_0]$, the second diagonal is at least $\sqrt{2} >r_0$. Note that assumption [iii] states that the attractive interactions are restricted to nearest neighbors in the bond graph only. If no ambiguity arises, the number of bonds in the bond graph will be denoted by $b$, i.e.,
\begin{align}\label{eq:b}
b= \# \{\{x_i,x_j\}: x_j \in \mathcal{N}(x_i)\}.
\end{align} 
We say a configuration is \emph{connected} if each two atoms are joinable through a simple path in the bond graph.  \EEE Any simple cycle of the bond graph is a \textit{polygon}.

 \textbf{Acyclic bonds:} We say that a bond is \textit{acyclic} if it is not contained in any simple cycle of the bond graph. Among acyclic bonds we distinguish between \textit{flags} and \textit{bridges}. We call a bridge an acyclic bond which is contained in some simple path connecting two vertices which are included in two distinct cycles. All other acyclic bonds are  called flags, see Fig.~\ref{FigureFlagsBridges}.
 
 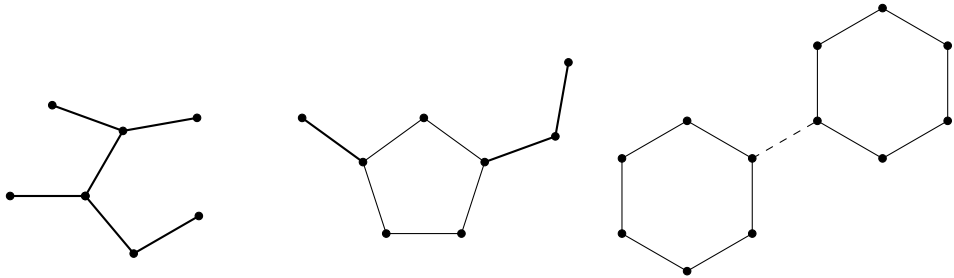
\begin{figure}[htp]
\centering
\begin{tikzpicture}
\draw[thick](0,0)--++(1,0)--++(60:1)--++(10:1);
\draw[thick](1,0)--++(-50:1)--++(30:1);
\draw[thick](1,0)++(60:1)--++(160:1);

\draw[fill=black](0,0)circle(.05)++(1,0)circle(.05)++(60:1)circle(.05)++(10:1)circle(.05);
\draw[fill=black](1,0)circle(.05)++(-50:1)circle(.05)++(30:1)circle(.05);
\draw[fill=black](1,0)++(60:1)++(160:1)circle(.05);

\draw[fill=black](5,-.5)circle(.05)++(1,0)circle(.05)++(72:1)circle(.05)++(144:1)circle(.05)++(216:1)circle(.05)++(288:1);

\draw(5,-.5)--++(1,0)--++(72:1)--++(144:1)--++(216:1)--++(288:1);

\draw[thick](5,-.5)++(1,0)++(72:1)--++(20:1)--++(80:1);

\draw[fill=black](5,-.5)++(1,0)++(72:1)++(20:1)circle(.05)++(80:1)circle(.05);

\draw[fill=black](5,-.5)++(1,0)++(72:1)++(144:1)++(216:1)++(144:1)circle(.05);
\draw[thick](5,-.5)++(1,0)++(72:1)++(144:1)++(216:1)--++(144:1);

\draw(9,-1)--++(30:1)--++(90:1)--++(150:1)--++(210:1)--++(270:1)--++(330:1);

\draw(9,-1)++(30:1)++(90:1)++(210:-1)--++(330:1)--++(30:1)--++(90:1)--++(150:1)--++(210:1)--++(270:1);

\draw[fill=black](9,-1)circle(.05)++(30:1)circle(.05)++(90:1)circle(.05)++(150:1)circle(.05)++(210:1)circle(.05)++(270:1)circle(.05)++(330:1);

\draw[fill=black](9,-1)++(30:1)++(90:1)++(210:-1)circle(.05)++(330:1)circle(.05)++(30:1)circle(.05)++(90:1)circle(.05)++(150:1)circle(.05)++(210:1)circle(.05)++(270:1);

\draw[dashed](9,-1)++(30:1)++(90:1)--++(210:-1);
\end{tikzpicture}
\caption{Examples of flags (bold) and a bridge (dashed)}
\label{FigureFlagsBridges}
\end{figure}

 \textbf{Defects:}  By elementary polygons we denote polygons which do not contain any non-acyclic bonds in its interior region. \EEE We call an  elementary \EEE polygon in the bond graph which is not a hexagon a \emph{defect}. A configuration is said to be \emph{defect-free} if all its  elementary \EEE polygons are hexagons. We also introduce the  \textit{excess of edges} $\eta$ by
\begin{align}\label{Excess}
\eta= \sum_{j\geq 6} (j-6)   f_j,  
\end{align}
 where $f_j$ denotes the number of  elementary \EEE polygons with $j$ vertices in the bond graph. The excess of edges is a tool to quantify the number of {defects} in the bond graph. \EEE { Note that the summation in (\ref{Excess}) runs over $j\geq 6$. This is due to the fact that we use this definition only for configurations whose bond graph contains only $k$-gons with $k \geq 6$,  cf.~Lemma \ref{LemmaPolygon}.  We remark that in \cite{Lucia} the  \textit{excess of edges} (with respect to the triangular lattice) is referred to as \emph{defect measure}.}
 
 In the following we refer to $C_n$ instead of $X_n$ when speaking about its bond graph, acyclic bonds, or connectedness properties, when no confusion may occur. 
 
\textbf{Charges:} We say that a configuration satisfying
\begin{align}\label{SamechargeNeighbourhood}
 \mathcal{N}(x_i) \cap \left\{x_j \in X_n: q_j=q_i\right\} = \emptyset  \text{ for all } i \in \{1,\ldots,n\}
\end{align}
 has \textit{alternating charge distribution}. Moreover, we call a configuration \emph{repulsion-free} if $|x_i - x_j| \ge \sqrt{3}$ for all $x_i \neq x_j$ with $q_i = q_j$. The \emph{net charge} of a configuration is defined  as the (signed) difference of the number of the two atomic types, i.e.,   
 \begin{align} \label{DefinitionNetCharge}
  \mathcal{Q}(C_n) := \sum_{i=1}^n q_i. 
 \end{align}

\EEE

\textbf{Ground state:} A configuration $C_n$ is called \textit{ground state} for the interaction energy (\ref{Energy}) if for all $C'_n \subset ( \mathbb{R}^{2} \times \{ -1, 1\})^{n}$ there holds
\begin{align*}
\mathcal{E}(C_n) \leq \mathcal{E}(C_n').
\end{align*}
In other words, $C_n$ minimizes (\ref{Energy}) among all possible configurations consisting of $n$ atoms.

\begin{figure}[htp]
\centering
\begin{tikzpicture}[scale=0.8]
\foreach \i in {0,...,3}{
\foreach \k in {0,...,4}{
\foreach \j in {0,2,4}{
\draw[thin](\k*1.5,{\k*sqrt(3)*.5})++(-\i*1.5,{\i*sqrt(3)*.5})++(\j*60:1)--++(\j*60+120:1);
\draw[fill=black](\k*1.5,{\k*sqrt(3)*.5})++(-\i*1.5,{\i*sqrt(3)*.5})++(\j*60:1) circle(.05);
}
\foreach \j in {1,3,5}{
\draw[thin](\k*1.5,{\k*sqrt(3)*.5})++(-\i*1.5,{\i*sqrt(3)*.5})++(\j*60:1)--++(\j*60+120:1);
\draw[fill=white](\k*1.5,{\k*sqrt(3)*.5})++(-\i*1.5,{\i*sqrt(3)*.5})++(\j*60:1) circle(.05);

}
}
}
\end{tikzpicture}
\caption{A configuration with atomic positions forming a subset of the hexagonal lattice. The configuration has alternating charge distribution where white indicates $q=1$ and black indicates $q=-1$. \EEE }
\label{FigureHexagonal-new}
\end{figure}
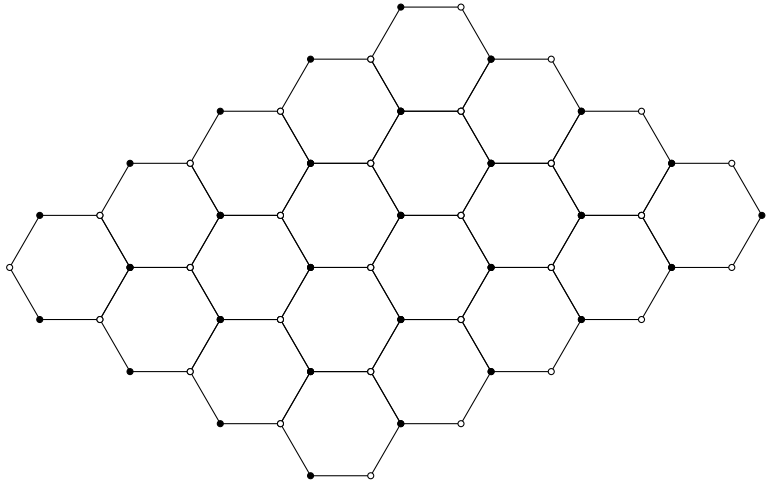

\textbf{The hexagonal lattice:} We define the \textit{hexagonal lattice}  by 
\begin{align}\label{eq: hex-lattice}
 \mathcal{L} := \EEE \left\{p v_1 + q v_2 + r v_0 : p,q \in \mathbb{Z}, r \in \{0,1\}\right\},
\end{align}
with $v_1 = (0,\sqrt{3})$, $v_2=\frac{1}{2}(3,\sqrt{3})$ and $v_0=(1,0)$. We identify it with its bond graph defined above.

Note that the hexagonal lattice is planar, connected, and all edges have unit length. Since it is bipartite, we can associate to all positions $X_n\subset \mathcal{L}$ of atoms a competitor for the minimization of (\ref{Energy}) by two-coloring (and hereby choosing the charge of)  the atoms.   The corresponding configuration $C_n$ \EEE is pictured in Fig.~\ref{FigureHexagonal-new}.  \EEE Then by assumption [ii], [iii], and [vii] \EEE we have  
\begin{align*}
\mathcal{E}(C_n) = -b,
\end{align*}
since all atoms of same charge have at least distance $\sqrt{3}$ and all atoms of $X_n$ are  bonded only to atoms of \EEE opposite charge. This means that for subsets of the hexagonal lattice the energy is computed (up to sign) by counting the number of bonds.

\subsection{Main results}

In this section we state our main results. We will derive a rigorous planar crystallization result in the  spirit of \cite{HR,  Mainini-Piovano, Mainini, Radin}. Afterwards, we will investigate the \emph{net charge} of ground-state configurations. Our first result characterizes the energy of ground states and shows that  all ground states are connected. To this end, we introduce the function 
\begin{align}\label{eq: beta definition}
\beta(n) := \frac{3}{2}n - \sqrt{\frac{3}{2}n} 
\end{align} 
for $n \in \mathbb{N}$. By $\lfloor t \rfloor$ we denote the integer part of $t \in \mathbb{R}$.  \EEE

\begin{theorem}[Ground-state energy]\label{TheoremEnergyGroundstates} Let $n \geq 1$ and let $C_n$ \EEE be a ground state. Then $C_n$ is connected and satisfies \EEE
\begin{align}\label{Energygroundstates}
\mathcal{E}(C_n) = \EEE -b  = -\lfloor \beta(n)\rfloor.
\end{align}
\end{theorem}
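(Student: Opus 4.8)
\emph{Strategy.} The plan is to sandwich the ground-state energy between a matching upper and lower bound and to read off connectedness and the identity $\mathcal{E}(C_n)=-b$ from the simultaneous saturation of all estimates. The upper bound $\min\mathcal{E}\le -\lfloor\beta(n)\rfloor$ is furnished by the explicit clusters of Section~4: these are subsets of $\mathcal{L}$ carrying the alternating charge distribution, so that, by assumptions~[ii], [iii] and~[vii], every bond joins opposite charges at the equilibrium distance $1$ and no two equal charges lie closer than $\sqrt 3$; hence their energy equals $-b$ with exactly $b=\lfloor\beta(n)\rfloor$ bonds. For the lower bound I would first reduce the energy to a bond count. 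Splitting the bonds of an arbitrary finite-energy configuration into those joining opposite charges ($b_{\mathrm a}$) and equal charges ($b_{\mathrm r}$), assumptions~[i]--[iii] give that each attractive bond contributes at least $V_{\mathrm a}(1)=-1$ while opposite charges beyond $r_0$ contribute nothing, and [iv]--[vii] give $\mathcal{E}_{\mathrm r}\ge 0$. Thus $\mathcal{E}(C_n)\ge -b_{\mathrm a}\ge -b$, with equality $\mathcal{E}(C_n)=-b$ precisely when the configuration is repulsion-free (forcing $b_{\mathrm r}=0$ and $\mathcal{E}_{\mathrm r}=0$) and every bond realizes the length $1$.

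\emph{Why the bond bound is energetic.} It remains to show $b\le\lfloor\beta(n)\rfloor$. As noted after~[vii] this cannot be purely combinatorial, since a densely packed (triangular) cluster has of order $3n$ bonds; the repulsive energy must be used to penalize close packing. Concretely, I would invoke Lemma~\ref{LemmaPolygon}, whose proof is where the quantitative assumptions~[vi]--[vii] enter: a triangle, square or pentagon in the bond graph forces equal charges to sit within $\sqrt 3$ (a short side or a short diagonal), and the repulsion this creates exceeds the attractive gain, so that the bond graph of an energetically optimal configuration contains only polygons with at least six edges. This makes the excess of edges $\eta$ in~\eqref{Excess} well defined and non-negative.

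\emph{Euler count and isoperimetry.} I would then reduce to connected configurations: $\beta$ is superadditive (as $\sqrt a+\sqrt b\ge\sqrt{a+b}$), whence $\sum_i\lfloor\beta(n_i)\rfloor\le\lfloor\beta(n)\rfloor$ and the lower bound for a disjoint union follows from the one for its components; connectedness of ground states itself follows from a standard argument, since a disconnected configuration is strictly improved by translating one component until an additional opposite-charge bond of length $1$ is formed. For a connected configuration, Euler's formula gives $f_{\mathrm{int}}=b-n+1$ bounded faces, and the edge--face incidence reads $2b=\sum_j jf_j+P$, where $P$ is the length of the outer boundary (acyclic bonds counted twice). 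Using $\sum_j jf_j=6f_{\mathrm{int}}+\eta$ yields $4b=6n-6-\eta-P$, hence $b\le\tfrac{3}{2} n-\tfrac{1}{4} P$. Combining $\mathcal{E}(C_n)\ge -b$ with a sharp isoperimetric inequality bounding $P$ below in terms of $n$ then produces the surface term and gives $\mathcal{E}(C_n)\ge -\lfloor\beta(n)\rfloor$. To make this rigorous for configurations that are not exact subsets of $\mathcal{L}$ --- where $\mathcal{E}=-b$ fails and the deviation of the boundary atoms must be charged to the energy --- I would implement the estimate through the induction on bond-graph layers of Section~5, the boundary-energy bound of Lemma~\ref{LemmaBoundaryEnergy} quantifying, via the slope condition~[viii], the energy carried by each boundary atom.

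\emph{Conclusion and main obstacle.} Matching the lower bound with the upper bound of Section~4 forces every inequality above to be an equality: the ground state is connected, repulsion-free, defect-free ($\eta=0$), has all bonds at distance $1$, and saturates the isoperimetric inequality; in particular $\mathcal{E}(C_n)=-b=-\lfloor\beta(n)\rfloor$, as claimed. I expect the main obstacle to be this last, sharp step of the lower bound: extracting the exact constant $\sqrt{\tfrac{3}{2} n}$ requires both the classification of admissible boundary geometries through the layer induction and a correspondingly sharp isoperimetric inequality, and it is precisely here that the quantitative hypotheses~[vi]--[viii] are indispensable.
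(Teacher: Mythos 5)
Your overall architecture --- upper bound from the Section~4 constructions, the reduction $\mathcal{E}(C_n)\ge -b$ with equality iff repulsion-free with unit bonds, exclusion of short cycles via Lemma~\ref{LemmaPolygon}, the Euler identity $4b=6n-6-\eta-d$ (this is exactly Lemma~\ref{LemmaBoundaryestimate}), and the eventual fallback on the layer induction with Lemma~\ref{LemmaBoundaryEnergy} --- is the paper's route. But two of your intermediate steps contain genuine gaps. First, the proposed closing step ``combine $\mathcal{E}\ge -b$ with a sharp isoperimetric inequality bounding $P$ below in terms of $n$'' is false at the level of generality available there: for abstract planar bond graphs with girth $\ge 6$ and maximal degree $3$, the outer boundary length need not grow like $\sqrt n$ (think of many nested $6$-cycles joined by single edges, where all interior faces are large, $d$ stays bounded and $n\to\infty$). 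What rules such graphs out is geometry, not combinatorics, and quantifying this is precisely the content of Lemma~\ref{LemmaBoundaryEnergy} (angle sums of the maximal polygon, the slope condition [viii]) combined with the self-consistent recursion closed by the algebraic Lemma~\ref{LemmaSquareroot}: the paper never proves nor uses a lower bound on $d$ alone; it trades $d$, $\eta$ and $b$ against each other through $\mathcal{E}^{\rm bnd}\ge -\frac32 d+3$, $\mathcal{E}^{\rm bulk}\ge -\lfloor\beta(n-d)\rfloor$ and the Euler count. So your ``isoperimetric'' step is not an alternative implementation of Lemma~\ref{LemmaBoundaryEnergy}; it is replaced by it. Relatedly, acyclic bonds need separate treatment (the paper's Step~1, via Lemma~\ref{LemmaPropertiesbeta}), since Lemma~\ref{LemmaBoundaryEnergy} and the maximal-polygon angle count require their absence; folding flags and bridges into the perimeter $P$ does not suffice.

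Second, your concluding paragraph overreaches in a way that reveals a structural misunderstanding: matching the two bounds does \emph{not} force all inequalities to be equalities, because the floor function absorbs slack of size up to one bond. In particular, ``defect-free ($\eta=0$)'' and ``saturates the isoperimetric inequality'' are simply not consequences of $\mathcal{E}(C_n)=-\lfloor\beta(n)\rfloor$: the paper exhibits ground states with an octagon ($\eta=2$, e.g.\ $n=15,18,21,29$), with a bridge ($n=12$), and with flags, all attaining $-\lfloor\beta(n)\rfloor$; defect-freeness requires $n \ge 30$ and all of Section~7. Even the identity $\mathcal{E}(C_n)=-b$ is not obtained by saturation but by a separate contradiction argument (Step~2 of the paper's proof), which exploits that a violation produces a \emph{strict} inequality that can be upgraded by a full integer unit before re-running the recursion --- this integer-gap bookkeeping is exactly what your sketch glosses over. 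Finally, your connectedness argument by translation is delicate: forming the new unit bond may create equal-charge pairs within $\sqrt 3$ (and the extremal atoms of the two components may carry equal charges), so strict improvement is not automatic; the paper instead derives connectedness inside the induction from the strict superadditivity $\lfloor\beta(m)\rfloor+\lfloor\beta(n-m)\rfloor+1\le\lfloor\beta(n)\rfloor$ of Lemma~\ref{LemmaPropertiesbeta} --- note that your non-strict version $\sum_i\lfloor\beta(n_i)\rfloor\le\lfloor\beta(n)\rfloor$ would only bound disconnected configurations, not exclude them as ground states.
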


\begin{remark}\label{rem: repulsionsfree}
{\normalfont 
In view of assumptions ${\rm [ii]}$ and [vii], we have that $\mathcal{E}(C_n) \geq -b$ with equality if and only if the configuration is repulsion-free and all bonds have unit length. In particular, Theorem \ref{TheoremEnergyGroundstates}   implies that ground states satisfy both   properties.
}
\end{remark}

\EEE

The next result states that ground states are essentially subsets of the hexagonal lattice with alternating charge distribution. Moreover, they satisfy some topological properties. Without further notice, all following statements regarding the geometry of ground states hold up to isometry. \EEE

\begin{theorem}[Characterization of ground states]\label{TheoremGroundstatesleq31}  Let $n \geq 30$ and let $C_n$ be a ground state. \EEE Then except for possibly one flag, \EEE $C_n$ is a connected subset of the hexagonal lattice  with alternating charge distribution. Moreover, $C_n$  is defect-free and does not contain any bridges. \EEE
\end{theorem}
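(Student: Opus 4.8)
The plan is to bootstrap from the information already secured in Theorem~\ref{TheoremEnergyGroundstates} and Remark~\ref{rem: repulsionsfree}: a ground state $C_n$ is connected, repulsion-free, has all bonds of unit length, and realizes the maximal bond number $b=\lfloor\beta(n)\rfloor$. First I would extract the local rigidity. Since the configuration is repulsion-free, any two equally charged atoms lie at distance $\ge\sqrt3>r_0$, so they are never neighbors; this is exactly \eqref{SamechargeNeighbourhood}, which yields the alternating charge distribution for free and shows the bond graph is bipartite with color classes the cations and anions. Because all bonds have unit length, the law of cosines applied to two neighbors $x_j,x_k$ of a common atom $x_i$ gives $|x_j-x_k|^2=2-2\cos\theta\ge3$, hence every bond angle satisfies $\theta\ge 2\pi/3$. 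Consequently each atom is at most $3$-bonded, and a $3$-bonded atom has all three angles equal to $2\pi/3$.

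Next I would pass to the global combinatorics. By Lemma~\ref{LemmaPolygon} every elementary polygon has at least six vertices, with six if and only if all its angles equal $2\pi/3$, so the excess $\eta$ in \eqref{Excess} is nonnegative and vanishes precisely when the configuration is defect-free. Applying Euler's formula to the connected planar bond graph and summing edge--face incidences yields an identity of the schematic form
\begin{align*}
b=\tfrac{3}{2}n-\tfrac14\big(\eta+P+A+\mathrm{const}\big),
\end{align*}
where $P$ is the length of the outer boundary and $A\ge0$ collects the contributions of the acyclic bonds, each traversed twice by the boundary. Since $\eta,A\ge0$, maximality of $b$ pushes towards $\eta=0$, no acyclic bonds, and the smallest possible boundary.

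The equality analysis is the heart of the argument. Knowing $b=\lfloor\beta(n)\rfloor$ forces all inequalities used for the matching lower bound on the energy to be saturated. Here I would invoke the boundary-energy estimate of Lemma~\ref{LemmaBoundaryEnergy}, which under the slope condition~[viii] bounds the boundary contribution from below by the sharp hexagonal isoperimetric constant of order $\sqrt{3n/2}$; saturation then forces the boundary to be a minimal Wulff-type polygon and, simultaneously, $\eta=0$, i.e.\ defect-freeness. For the acyclic part I would compare the boundary cost of a bridge against that of a flag: a bridge joins two distinct cyclic blocks and is charged twice to the boundary, a cost strictly exceeding the fractional slack $\beta(n)-\lfloor\beta(n)\rfloor$ available at the optimum, whereas a single pendant flag fits within that slack. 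This rules out bridges and admits at most one flag. Finally, defect-freeness together with unit bonds, exact $2\pi/3$ angles at all $3$-bonded atoms, and connectedness rigidly identifies the cyclic part of $C_n$ with a subset of $\mathcal{L}$, and bipartiteness reproduces the alternating charges; the lone flag is the only permitted deviation.

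I expect the main obstacle to be precisely this last bookkeeping: showing, with the correct constants, that equality in the isoperimetric/boundary estimate forces $\eta=0$ \emph{and} excludes every bridge while still leaving exactly enough room for one flag. The delicate point is to separate the admissible flag from the forbidden bridge within the $O(1)$ slack produced by the floor in $\lfloor\beta(n)\rfloor$, which is where the sharp form of Lemma~\ref{LemmaBoundaryEnergy} and a discrete isoperimetric inequality must be combined carefully.
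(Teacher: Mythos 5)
There is a genuine gap at the decisive step: your claim that saturation of the boundary estimate (Lemma~\ref{LemmaBoundaryEnergy}) together with maximality of $b$ forces $\eta=0$ is false, and no soft ``equality analysis'' of this kind can work. The reason is structural: with unit bonds and repulsion-freeness, your law-of-cosines bound only gives $\theta\ge 2\pi/3$ at each atom, and a $2$-bonded \emph{boundary} atom may have both angles strictly between $2\pi/3$ and $4\pi/3$ at \emph{zero} energy cost, since $V_{\rm r}$ vanishes beyond $\sqrt 3$ and there is no angular potential. Such atoms permit boundary octagons ($\eta=2$) that are energy-neutral with respect to every estimate you invoke, and indeed the paper exhibits ground states \emph{containing octagons} for $n=8,9,12,15,18,21,29$ (Fig.~\ref{FigureOctagons}, Remark~\ref{rem: octogons}), all attaining $-\lfloor\beta(n)\rfloor$. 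So your final step --- ``defect-freeness together with unit bonds and exact $2\pi/3$ angles rigidly identifies $C_n$ with a subset of $\mathcal L$'' --- presupposes exactly what is hardest to prove, and your proposed mechanism for it cannot distinguish $n=29$ (octagon allowed) from $n=30$ (octagon forbidden). The paper needs considerably more machinery here: a separate induction showing non-equilibrated \emph{bulk} atoms cannot exist (Lemma~\ref{LemmaAbulk}, which cannot assume lattice structure in the induction hypothesis, cf.\ Remark~\ref{rem: method}); the reduction to at most one \emph{boundary} octagon via the refined $\eta$-dependent energy inequality (Lemma~\ref{lemma: eta energy}, Proposition~\ref{TheoremGroundstatesleq-new}); and then a three-case analysis on the number of $3$-bonded vertices of the octagon (Lemma~\ref{LemmaOctagon3bdd}), where each case is killed by exact integer arithmetic for $\lfloor\beta(\cdot)\rfloor$ --- e.g.\ $\lfloor\beta(n)\rfloor-\lfloor\beta(n-5)\rfloor=7$ for all $n\ge 30$ but $=6$ for $n\in\{15,18,21,29\}$ (Lemma~\ref{LemmaPropertiesbeta}~3)--4), Table~\ref{table}). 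The $O(1)$ slack you gesture at is precisely this non-uniform floor-function bookkeeping, and it cannot be resolved by a Wulff-type isoperimetric saturation argument.

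Two smaller inaccuracies point the same way. Your bridge exclusion (``charged twice to the boundary, exceeding the fractional slack'') is too coarse: the paper's argument (Lemma~\ref{lemma: bridges}) splits the configuration and uses the strict superadditivity $\lfloor\beta(m)\rfloor+\lfloor\beta(n-m)\rfloor+1\le\lfloor\beta(n)\rfloor$ with equality exactly at $n=12$, $m=6$ --- and there \emph{is} a bridged ground state at $n=12$ (Fig.~\ref{FigureBridge}), so any argument not tracking this equality case proves too much. Also, ground states may carry up to \emph{two} flags (Lemma~\ref{LemmaOctagonflag}, Remark~\ref{rem: main}(i)), one of which need not sit in the lattice; your ``at most one flag'' is not what the combinatorics yields. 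Your opening reductions (alternating charges from repulsion-freeness, $\theta\ge2\pi/3$, $3$-bonded atoms equilibrated) do match Lemmas~\ref{LemmaNeighborhood} and~\ref{LemmaHexagon}, but the core of the theorem --- excluding the octagon for $n\ge 30$ --- is missing.
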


\begin{remark}\label{rem: main}
{\normalfont  (i)  More precisely, we show that the bond graph of ground states  consists only of hexagonal cycles except for at most two flags. Since ground states are repulsion-free and all bonds have unit length, at least one of the flags is contained in the hexagonal lattice.  If two flags exist and are connected, we note that one of them can be rotated in a continuous way without changing the energy.

(ii) The ground states for $n \le 29$ can also be characterized, but due to the smallness of the structures, more degeneracies can occur. In particular, for $n=8,9,12,15, 18, 21, 29$ \EEE there might be one octagon at the boundary. (We refer to Remark \ref{rem: octogons} for more details.) For  $n < 10$   the structure can be much more flexible, see Fig.~\ref{FigureFlexible}. 
}
\end{remark}
\EEE

Our third main result addresses the net charge. Recall  (\ref{DefinitionNetCharge}).
 
\begin{theorem}[Net charge]\label{TheoremCharge} 
The following properties for the net charge of ground states hold:
\begin{itemize}
\item[(i)] (Charge control) There is a universal constant $c>0$ such that for all $n \in \mathbb{N}$ and all ground states $C_n$ the charge satisfies $|\mathcal{Q}(C_n)| \le cn^{1/4}$.

\item[(ii)] (Sharpness of the $n^{1/4}$-scaling) There exists  an increasing \EEE sequence of integers $(n_j)_j$ and ground states $(C_{n_j})_j$ such that 
$$\liminf_{j \to +\infty} n_j^{-1/4}|\mathcal{Q}(C_{n_j})|>0. $$
\end{itemize}
\end{theorem}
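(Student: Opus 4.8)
The plan is to treat the two statements separately, in both cases using that, by Theorem~\ref{TheoremGroundstatesleq31}, for $n\ge 30$ a ground state $C_n$ is --- up to at most one or two flags, each of which is a single atom and hence alters $\mathcal{Q}$ by only $\mathrm{O}(1)$ --- a connected, defect-free subset of $\mathcal{L}$ with no bridges and alternating charge distribution. Writing $\mathcal{L}=\mathcal{L}_A\cup\mathcal{L}_B$ for the bipartition of the honeycomb into its two sublattices (the $r=0$ and $r=1$ classes in \eqref{eq: hex-lattice}), the alternating condition \eqref{SamechargeNeighbourhood} forces all atoms of one sublattice to carry charge $+1$ and all atoms of the other to carry $-1$, so that $\mathcal{Q}(C_n)=N_A-N_B+\mathrm{O}(1)$, the signed sublattice imbalance. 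The absence of bridges and defects guarantees that the bond graph is a simply connected polyhex bounded by a single simple cycle. The whole point is to show that $N_A-N_B$ is a boundary quantity with strong internal cancellation.

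For the upper bound (i) I would first derive an \emph{exact} boundary identity. Each hexagonal face of $C_n$ contains exactly three vertices of each sublattice, and the boundary cycle, being a closed walk in a bipartite graph, alternates between $\mathcal{L}_A$ and $\mathcal{L}_B$ and therefore contains equally many of each. Counting face--vertex incidences $\sum_{\text{faces}}\#(\mathcal{L}_A\cap f)=3f=\sum_{\text{faces}}\#(\mathcal{L}_B\cap f)$ and sorting vertices by the number of faces to which they belong ($3$ in the interior, and $2$ resp.\ $1$ for boundary vertices of degree $3$ resp.\ $2$), the interior and the alternating boundary contributions cancel and one is left with
\[
N_A-N_B=\tfrac13\big(a_2-b_2\big),
\]
where $a_2,b_2$ count the convex (degree-two) boundary vertices lying in $\mathcal{L}_A,\mathcal{L}_B$. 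Encoding the boundary cycle by the directions $\Theta_1,\dots,\Theta_\ell\in\mathbb{Z}/6$ of its consecutive unit edges and summing by parts, this rewrites as
\[
\mathcal{Q}(C_n)=-\tfrac13\sum_{k=1}^{\ell}(-1)^k\,\Theta_k+\mathrm{O}(1),
\]
which for a boundary consisting of six straight zigzag segments collapses to $\tfrac16$ times the alternating sum $M_1-M_2+\dots-M_6$ of the six edge lengths. This alternating sum vanishes for the regular hexagon, since the $60^\circ$ rotation about a face center is a symmetry of the Wulff shape that interchanges $\mathcal{L}_A$ and $\mathcal{L}_B$.

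I would then feed in the $n^{3/4}$-law \cite{Davoli15} in its Hausdorff form: $\partial C_n$ lies within $\mathrm{O}(n^{1/4})$ of a regular hexagonal Wulff shape $W$, whence the six macroscopic edges of $C_n$ all have length within $\mathrm{O}(n^{1/4})$ of the common side $\sim\sqrt{n}$; the alternating length sum is then $\mathrm{O}(n^{1/4})$ and $|\mathcal{Q}(C_n)|\le cn^{1/4}$ follows from the identity. The main obstacle is that a ground state is \emph{not} literally six straight zigzag edges: it carries a partially filled outer layer which accounts for the $\mathrm{O}(n^{3/4})$ atoms of the symmetric difference $C_n\triangle W$, so its boundary genuinely kinks and steps. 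The heart of the argument is to show these irregularities perturb $a_2-b_2$ by only $\mathrm{O}(n^{1/4})$ and not by the naive $\mathrm{O}(n^{3/4})$; this is precisely where the $\mathrm{O}(n^{1/4})$ Hausdorff bound must be used quantitatively, localizing to each macro-edge and controlling the convex-corner imbalance of a nearly straight zigzag run by the $\mathrm{O}(n^{1/4})$ displacement of its endpoints.

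For the sharpness claim (ii) I would reverse the computation and exhibit ground states saturating the identity. Starting from the explicit hexagonal configurations of Section~4, I choose an increasing sequence $(n_j)_j$ for which the energetically optimal shape is forced to be a slightly \emph{irregular} equiangular (zigzag-edged) hexagon with side deviations $\delta\sim n_j^{1/4}$ and alternating length sum of order $n_j^{1/4}$: the perimeter excess of such an irregularity over the regular shape of the same area scales like $\mathrm{O}(\delta^2/\sqrt{n})$, hence is $\mathrm{O}(1)$ exactly at $\delta\sim n_j^{1/4}$ and is thus absorbed by the discretization slack of $\lfloor\beta(n_j)\rfloor$, so that the configuration still attains $b=\lfloor\beta(n_j)\rfloor$ and is a ground state by Theorem~\ref{TheoremEnergyGroundstates}. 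Evaluating the identity on this shape yields $|\mathcal{Q}(C_{n_j})|\ge c' n_j^{1/4}$. The delicate point here is the simultaneous bookkeeping of the atom number, the minimality of the bond count, and the prescribed charge; I expect to resolve it by first fixing a convenient one-parameter family of hexagons with prescribed side lengths and only then reading off the admissible values $n_j$, so that the constraints decouple.
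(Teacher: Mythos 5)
Your reduction via the bipartition and the face--vertex counting identity $\mathcal{Q}(C_n)=\pm\frac13(a_2-b_2)+\mathrm{O}(1)$ is correct for the simply connected polyhexes that Theorem \ref{TheoremGroundstatesleq31} produces, and it is a genuinely different starting point from the paper. But the central step of your part (i) fails: Wulff proximity in Hausdorff distance does \emph{not} control the alternating corner sum. Concretely, take a daisy and replace a contiguous outer layer by $\sim\sqrt{n}$ pairwise separated zig-zag caps of three atoms, each closing a single hexagonal cell. The result is a connected, defect-free subset of $\mathcal{L}$ with alternating charges, no acyclic bonds, within Hausdorff distance $\mathrm{O}(1)$ of the Wulff hexagon --- it satisfies every structural and geometric input your argument invokes --- yet each cap has both endpoints in the same sublattice and contributes net charge $+1$, so $a_2-b_2\sim\sqrt{n}$. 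The endpoint displacement of a boundary run controls the \emph{plain} sum of its edge directions (the displacement vector), not the alternating sum, which is essentially unconstrained by the geometry of the band. What excludes such combs is only minimality: each gap between consecutive caps is what the paper calls a \emph{bridging zig-zag path}, and Lemma \ref{lemma: zig-zag paths} shows a ground state admits at most one, because filling a gap of $m-2$ atoms gains at least $\frac32 m-\frac52$ bonds and (after iterating once) beats the increment of $\lfloor\beta(\cdot)\rfloor$, contradicting Theorem \ref{TheoremEnergyGroundstates}. The paper then sandwiches $X_n$ between two daisies via \cite{Davoli15} (Proposition \ref{prop: davoli}, $k_2-k_1\le cn^{1/4}$), decomposes $X_n\setminus X^{\rm daisy}_{6k_1^2}$ into at most $6(k_2-k_1)+1$ zig-zag paths --- the ``$+1$'' being exactly the one allowed bridging path --- and uses that each such path carries charge in $\{-1,0,1\}$. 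Your proof needs this additional energetic rigidity input; without it the localization step is not merely delicate but false.

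Your part (ii) is in spirit the paper's Section \ref{sec: daisy + something} construction (a daisy with a trapezoid of $r\sim n^{1/4}$ rows, each row of odd length contributing net charge $+1$, total charge $r+1$), but the phrase ``absorbed by the discretization slack of $\lfloor\beta(n_j)\rfloor$'' hides the real issue: the slack is strictly less than one bond, whereas your perimeter excess $\mathrm{O}(\delta^2/\sqrt{n})$ at $\delta\sim n^{1/4}$ is $\mathrm{O}(1)$ with a constant you do not control --- an excess of even two bonds destroys ground-state optimality. One must verify $b=\lfloor\beta(n)\rfloor$ \emph{exactly}; the paper does this by choosing $n=6\hat{k}^2+1$ with $\hat{k}=k+r/6$, $r\le\sqrt{k/7}$, and counting bonds row by row ($3j-2$ bonds in the row of $2j-1$ atoms, $1+7r^2/4$ in the last). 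Your idea of fixing the side lengths first and reading off admissible $n_j$ is exactly how that bookkeeping is organized, so (ii) is fixable; (i) as proposed is not, absent the bridging-path lemma or an equivalent use of minimality.
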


The proof of our main results will be given in Sections \EEE 3-8. In particular, in Section 4 we construct explicitly configurations and give the upper bound for the ground-state energy  (Theorem \ref{TheoremEnergyGroundstates}) as well as the proof of Theorem \ref{TheoremCharge}(ii). In Section 6 we conclude the proof of Theorem \ref{TheoremEnergyGroundstates} by providing the lower bound. Section 7 is then devoted to the characterization of ground-state geometries for  $n \ge 30$ \EEE (Theorem \ref{TheoremGroundstatesleq31}). Finally, in Section 8 we prove Theorem \ref{TheoremCharge}(i).

\section{Elementary geometric properties}

 In this section we provide   some elementary geometric facts  for ground states independently of $n \in \mathbb{N}$.  \EEE
In the proofs, we will use the following convention: we say that  we \textit{relocate} $(x,q) \in C_n$,   and write $C_n-\{(x,q)\}$, \EEE   by considering the configuration   $(C_n\cup \{(x+\tau,q)\})\setminus \{(x,q)\} $ \EEE, where $\tau \in \mathbb{R}^2$ is chosen such that
\begin{align*}
\mathrm{dist}(X_n,x+\tau)\geq \sqrt{3}.
\end{align*}
We relocate a set of atoms $A \subset C_n$ by  relocating successively every $(x,q) \in A$ and write $C_n -A$. Note that $ C_n-A$ still consists of $n$ particles. \EEE

The first lemma addresses the neighbors. Recall \eqref{eq: neighborhood} and \eqref{SamechargeNeighbourhood}.\EEE
\begin{lemma}[Neighbors and charge distribution\EEE]\label{LemmaNeighborhood}
 Let $C_n$ be a ground state. Then $C_n$ has alternating charge distribution and \EEE
\begin{align}\label{NeighbourhoodCard}
\#\mathcal{N}(x_i) \leq 3 \text{ for all } i \in \{1,\ldots,n\}.
\end{align}
\end{lemma}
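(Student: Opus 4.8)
The plan is to prove the two claims separately, starting with the geometric bound \eqref{NeighbourhoodCard} and then deducing the alternating charge distribution, since the degree bound is the more fundamental constraint.

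\textbf{Step 1: The degree bound $\#\mathcal{N}(x_i) \le 3$.} First I would fix an atom $x_i$ and suppose toward a contradiction that it has at least four neighbors. All neighbors lie in the annulus $\{x : 1 \le |x-x_i| \le r_0\}$, since finite energy forces pairwise distances $\ge 1$ (by [i] and [iv]) and by definition of $\mathcal{N}(x_i)$. The key geometric observation is that any two distinct neighbors $x_j, x_k$ subtend a bond-angle at $x_i$ that is bounded below: by the law of cosines applied to the triangle $x_j x_i x_k$, the constraint $|x_j - x_k| \ge 1$ together with $|x_i - x_j|, |x_i - x_k| \le r_0$ yields a minimal angle $\theta_{\min}$ satisfying $\cos\theta_{\min} = 1 - \tfrac{1}{2r_0^2}$. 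Since $r_0 < (2\sin(\pi/7))^{-1}$, one checks that this forces $\theta_{\min} > 2\pi/7$, hence at most six neighbors fit around $x_i$ and, more to the point, at most three can be packed if one additionally exploits that the repulsive/attractive structure makes four or more energetically disadvantageous. The cleaner route is to argue directly: if $x_i$ had four or more neighbors, the angular room would force two of them, say $x_j, x_k$, to have $|x_j - x_k| \le \sqrt{2}\,r_0 < \sqrt{3}$, and then I would show that relocating $x_i$ (or one offending neighbor) to distance $\ge \sqrt{3}$ from everything strictly lowers the energy, contradicting minimality. I expect this packing-plus-relocation argument to be the heart of the matter.

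\textbf{Step 2: Alternating charge distribution.} Once the degree bound (or at least local geometric control) is in hand, I would prove \eqref{SamechargeNeighbourhood}, namely that no two bonded atoms share a charge. Suppose $x_j \in \mathcal{N}(x_i)$ with $q_i = q_j$, so $|x_i - x_j| \le r_0 < \sqrt{3}$. Then this pair contributes $V_{\mathrm{r}}(|x_i - x_j|)$ to the energy, which by [vi] and [vii] (the repulsion is strictly positive below $\sqrt{3}$, indeed $V_{\mathrm{r}}(\sqrt{2}r_0) > 3$) is a large positive amount. I would relocate $x_i$ to a position at distance $\ge \sqrt{3}$ from all remaining atoms: this removes the positive repulsive contribution $V_{\mathrm{r}}(|x_i - x_j|) > 0$ and, by [iii], can only increase any attractive contributions from $0$ up to at most $0$ (attractive terms are $\le 0$ and vanish beyond $r_0$), so the net change in energy is at most $-V_{\mathrm{r}}(|x_i-x_j|) + (\text{loss of negative bonds})$. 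The subtle point is that relocating $x_i$ may destroy beneficial attractive bonds of value $-1$ each; but since $\#\mathcal{N}(x_i) \le 3$ from Step 1, the total attractive energy lost is bounded, and comparing it against the single large repulsive saving guaranteed by [vi] shows the relocation is strictly favorable, contradicting the ground-state property.

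\textbf{Main obstacle.} The delicate step is Step 1, specifically ruling out exactly four neighbors when they might be arranged to avoid any same-charge close pair, so that the naive repulsive-saving argument of Step 2 does not immediately apply. Here I would need the quantitative angular bound from $r_0 < (2\sin(\pi/7))^{-1}$ to show that four neighbors cannot all sit at mutual distances $\ge \sqrt{3}$ while staying within $r_0$ of $x_i$ — some pair is forced closer than $\sqrt{3}$ — and then invoke [vi]/[vii] to extract a strictly positive energy that can be removed by relocation. Balancing this against the at-most-three attractive bonds of value $-1$ that could be lost is exactly where the explicit constant in [vi] ($V_{\mathrm{r}}(\sqrt{2}r_0) > 3$) is designed to win, and I would make that comparison explicit.
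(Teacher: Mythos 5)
Your Step 2 is essentially sound once Step 1 is in hand (with $\#\mathcal{N}(x_i)\le 3$, relocating $x_i$ saves $V_{\mathrm{r}}(|x_i-x_j|)\ge V_{\mathrm{r}}(\sqrt{2}r_0)>3$ against at most $2$ lost unit bonds), but Step 1, as sketched, contains a genuine gap — precisely the point your ``main obstacle'' paragraph gestures at without resolving. The first problem is the charge of the forced close pair: with four neighbors you do get one pair $x_j,x_k$ at distance at most $\sqrt{2}r_0<\sqrt{3}$ (smallest of four angles is at most $\pi/2$), but this yields a repulsive saving only if $q_j=q_k$. If the neighborhood contains atoms of both charges, the close pair may be opposite-charged, and then its interaction is $\le 0$ (indeed $0$ beyond $r_0$ by [iii]), so nothing at all is saved; ``some pair is forced closer than $\sqrt{3}$'' has no energetic consequence by itself. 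The paper avoids this by reversing your order: its Claim 1 first controls same-charge neighbors (at most one, and none if $\#\mathcal{N}(x_i)\le 4$), by relocating $x_i$ itself, using the a priori bound $\#\mathcal{N}(x_i)\le 6$, that each same-charge bond costs more than $3$ by [v],[vi], and that each opposite-charge bond is worth at least $-1$. Only then, in its Claim 2, can it take four neighbors $x_0,\ldots,x_3$ \emph{all} of charge $-q_i$, so that every pair among them is automatically same-charged.

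The second problem is quantitative: even granting a same-charge close pair, you must relocate one of the \emph{neighbors}, say $x_{j_0}$, to cash in the saving, and at this stage of your argument $x_{j_0}$ may itself have up to six bonds — the degree bound is exactly what you are trying to prove — so the attractive loss can be as large as $6$, while a single close pair only gives $V_{\mathrm{r}}(\sqrt{2}r_0)>3$. Your count of ``at most three attractive bonds of value $-1$'' silently uses the conclusion and is circular at that point. The paper's fix, which is the key missing idea, is to extract \emph{two} repulsive chords at a common vertex: among the four opposite-charge neighbors, choose $j_0$ with $\theta_{j_0}+\theta_{j_0-1}\le\pi$ by averaging; then concavity of the sine plus convexity and monotonicity of $V_{\mathrm{r}}$ give
\begin{align*}
V_{\mathrm{r}}(|x_{j_0}-x_{j_0+1}|)+V_{\mathrm{r}}(|x_{j_0}-x_{j_0-1}|)\ \ge\ 2V_{\mathrm{r}}(\sqrt{2}r_0)\ >\ 6,
\end{align*}
which strictly beats the at most $6$ bonds destroyed by relocating $x_{j_0}$. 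So to repair your proposal you need (a) same-charge-neighbor control \emph{before} the degree bound, and (b) the two-chord averaging argument in place of the single-pair estimate.
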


\begin{proof} Since  $C_n$ \EEE is a ground state, it holds that $\mathcal{E}(C_n)<+\infty$. Therefore,  by assumption [i],[iv] \EEE
\begin{align}\label{eq: distance}
\mathrm{dist}(x_i,  X_n \EEE \setminus\{x_i\}) \geq 1 \text{ for all }i\in \{1,\ldots,  n \EEE \}.
\end{align} 
 For brevity, we define $\mathcal{N}_{\rm rep}(x_i) =   \mathcal{N}(x_i) \cap \left\{x_j \in   X_n: q_j=q_i\right\}$  for all $i \in \lbrace 1 ,\ldots, n \rbrace$. We give the proof of the statement in two \EEE steps. \EEE

\medskip

\textbf{Claim 1:}  We have \EEE
\begin{align*}
\# \mathcal{N}_{\rm rep}(x_i)  \leq 1 \text{ for all } i \in \{1,\ldots,n\},
\end{align*}
and if $\#\mathcal{N}(x_i) \leq 4$, then
\begin{align*}
\# \mathcal{N}_{\rm rep}(x_i)  =0.
\end{align*}
\noindent \emph{Proof of Claim 1}:  First, \eqref{eq: distance} \EEE and $r_0 < (2\sin(\frac{\pi}{7}))^{-1}$  entail by an elementary geometric argument that \EEE
\begin{align}\label{eq: six neighbors}
\#\mathcal{N}(x_i) \leq 6 \text{ for all } i \in \{1,\ldots,n\}.
\end{align}
 In fact, if $\#\mathcal{N}(x_i) \ge 7$, two neighbors of $x_i$ would necessarily have distance smaller than $1$. \EEE 
Now assume by contradiction that $\# \mathcal{N}_{\rm rep}(x_i) \geq 2$. Note that \EEE every bond between points of different charge contributes at least $-1$ to the energy by $[\mathrm{ii}]$.  This along with  $V_{\rm r} \ge 0$  and the fact that   the energy per neighbor of same charge is  larger than   $3$ (see  $[\mathrm{v}]$   and   $[\mathrm{vi}]$)  allows us to relocate $(x_i,q_i)$:   we \EEE obtain
 \begin{align*}
\mathcal{E}(  C_n - \{(x_{i},q_i)\} ) <   \mathcal{E}(C_n) + \#\left(\mathcal{N}(x_i) \setminus \mathcal{N}_{\rm rep}(x_i)\right) - 3\#\mathcal{N}_{\rm rep}(x_i) \le \mathcal{E}(C_n) + 4 - 6    < \EEE  \mathcal{E}(C_n).
\end{align*} 
 This contradicts the fact that $C_n$ is a ground state.  The argument in the case $\#\mathcal{N}(x_i)\leq 4$ and $\#\mathcal{N}_{\rm rep}(x_i) \geq 1$ is similar: \EEE due to the fact that every bond between points of different charge contributes at least $-1$ to the energy and the contribution of neighbors of the same charge is larger than $3$, \EEE   we have  
 \begin{align*}
\mathcal{E}(  C_n - \{(x_{i},q_i)\} \EEE ) <   \mathcal{E}(C_n) + \#\left(\mathcal{N}(x_i) \setminus \mathcal{N}_{\rm rep}(x_i)\right) - 3\#\mathcal{N}_{\rm rep}(x_i) \le \mathcal{E}(C_n) + 3 - 3 \EEE   = \EEE  \mathcal{E}(C_n).
\end{align*}
 This contradiction shows that \EEE Claim 1 holds true.

 \textbf{Claim 2:} \EEE $\#\mathcal{N}(x_i) \leq 3$ for all $i \in \{1,\ldots,n\}$.

\noindent \emph{Proof of Claim 2}: Assume by contradiction that there exists $i \in \{1,\ldots,n\}$ such that $\#\mathcal{N}(x_i) \geq 4$. By Claim 1 we may suppose that there exist  \EEE  $\{x_0,\ldots,x_3\}\subset \mathcal{N}(x_{i})$ with $q_j =-q_i$ $j=0,\ldots,3$. We let $\theta_j \in [0,2\pi)$ be the angle between $x_j,x_i,x_{j+1}$.  (Here and in the following the indices have to be understood modulo $4$.) \EEE We can choose $j_0 \in \{0,\ldots,3\}$ such that
\begin{align*}
\theta_{j_0}+\theta_{j_0-1} \leq \frac{1}{4}\sum_{j=0}^3 (\theta_{j}+\theta_{j-1} ) = \pi.
\end{align*} 
 Note that  $|x_{j_0}-x_{j_0+1}| \leq  2r_0 \sin\left(\theta_{j_0}/2\right), |x_{j_0}-x_{j_0-1}| \leq  2r_0\sin\left(\theta_{j_0-1}/2\right)$. By concavity of $\sin(x)$ for $x \in [0,\pi]$ and the fact that $\sin(x)$ is   increasing  for $x\in [0,\frac{\pi}{2}]$,  we have 
\begin{align*}
|x_{j_0}-x_{j_0+1}| + |x_{j_0}-x_{j_0-1}| &\leq r_0\left( 2\sin\left(\frac{\theta_{j_0}}{2}\right) + 2\sin\left(\frac{\theta_{j_0-1}}{2}\right)\right)\notag \\&\leq 4r_0 \sin\left(\frac{\theta_{j_0}+\theta_{j_0-1}}{4}\right)\leq 4r_0 \sin\left(\frac{\pi}{4}\right)=2\sqrt{2}r_0.
\end{align*} 
Since  $V_{\mathrm{r}}$ is convex and non-increasing (see $[\mathrm{v}]$), we find 
\begin{align*}
V_{\mathrm{r}}\left(|x_{j_0}-x_{j_0+1}|\right) + V_{\mathrm{r}}\left(|x_{j_0}-x_{j_0-1}|\right) \geq 2V_{\mathrm{r}}\left(\frac{1}{2}\left(|x_{j_0}-x_{j_0+1}|+ |x_{j_0}-x_{j_0-1}|\right)\right) \geq 2V_{\mathrm{r}}(\sqrt{2}r_0).
\end{align*}

  Using  $[\mathrm{ii}]$,  $[\mathrm{vi}]$, \EEE $V_{\rm r} \ge 0$,  and \eqref{eq: six neighbors} \EEE we finally observe 
 \begin{align*}
 \mathcal{E}(C_n-\{ (x_{j_0},q_{j_0})\})&\leq  \mathcal{E}(C_n)-2V_{\mathrm{r}}(\sqrt{2}r_0) +  \# \mathcal{N}(x_{j_0}) \EEE  \leq  \mathcal{E}(C_n)-2V_{\mathrm{r}}(\sqrt{2}r_0) +  6    < \mathcal{E}(C_n).
\end{align*}
 This contradicts the fact that $C_n$ is a ground state and concludes the proof of    Claim 2. \EEE

 Now, \eqref{NeighbourhoodCard}  holds, see \EEE Claim 2. \EEE The property of alternating charge (see \eqref{SamechargeNeighbourhood}) \EEE follows from \eqref{NeighbourhoodCard} and  the second statement of Claim 1.  This   concludes \EEE the proof of Lemma \ref{LemmaNeighborhood}.
\end{proof}

 We now investigate simple cycles in the bond graph. \EEE

\begin{lemma}[Polygons\EEE]\label{LemmaPolygon} Let $C_n$ be a ground state. Then all polygons have at least $6$ edges and the number of edges is always even. \EEE 
\end{lemma}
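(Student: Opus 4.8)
The plan is to prove the two assertions separately, the parity being immediate and the length bound being the substance. For the parity, I would invoke Lemma \ref{LemmaNeighborhood}: a ground state has alternating charge distribution, so every bond joins an atom of charge $+1$ to one of charge $-1$. Hence the bond graph is bipartite, the two classes being the two charge types. A simple cycle in a bipartite graph alternates between the classes and therefore traverses an even number of edges. This shows at once that the number of edges of any polygon is even, and in particular rules out all odd cycles. It then remains only to exclude $4$-cycles, since together with the parity this forces $6$ to be the smallest admissible length.

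To rule out $4$-cycles I would argue by contradiction. Suppose $x_1 x_2 x_3 x_4$ is a simple $4$-cycle. By the alternating charge distribution $q_1 = q_3$ and $q_2 = q_4$, so both ``diagonals'' $x_1 x_3$ and $x_2 x_4$ join atoms of equal charge, while all four sides are bonds and hence have length in $[1,r_0]$. The geometric core is the elementary quadrilateral identity
\[
|x_1-x_3|^2 + |x_2-x_4|^2 = |x_1-x_2|^2 + |x_2-x_3|^2 + |x_3-x_4|^2 + |x_4-x_1|^2 - 4m^2 \le 4r_0^2,
\]
where $m$ denotes the distance between the midpoints of the two diagonals (this is a pure vector identity, valid for any four labelled points, so no convexity is needed). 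Consequently the shorter diagonal, say $|x_1-x_3|$, satisfies $|x_1-x_3| \le \sqrt{2}\,r_0$.

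With this bound in hand I would pass to an energy comparison by relocation. Since $V_{\mathrm{r}}$ is non-increasing (assumption [v]) and $V_{\mathrm{r}}(\sqrt{2}r_0) > 3$ (assumption [vi]), the same-charge pair $x_1,x_3$ contributes $V_{\mathrm{r}}(|x_1-x_3|) \ge V_{\mathrm{r}}(\sqrt{2}r_0) > 3$ to the energy. Now relocate $(x_1,q_1)$. By Lemma \ref{LemmaNeighborhood} the atom $x_1$ has at most three neighbours, so its attractive contribution is at least $-3$ (each bond being at least $-1$ by [ii]), while all its same-charge interactions are nonnegative and include the term $V_{\mathrm{r}}(|x_1-x_3|) > 3$. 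Hence the total contribution of $x_1$ to $\mathcal{E}(C_n)$ is at least $-3 + V_{\mathrm{r}}(|x_1-x_3|) > 0$. After relocating $x_1$ to distance at least $\sqrt{3}$ from every other atom, this contribution vanishes, using $V_{\mathrm{a}}=0$ beyond $r_0$ (as $r_0 < \sqrt 2 < \sqrt 3$, see [iii]) and $V_{\mathrm{r}}=0$ on $[\sqrt 3,\infty)$ (see [vii]). Thus $\mathcal{E}(C_n - \{(x_1,q_1)\}) < \mathcal{E}(C_n)$, contradicting minimality, and no $4$-cycle can occur.

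The only genuinely delicate point is the diagonal estimate: the whole argument hinges on producing a same-charge pair at distance $\le \sqrt{2}r_0$, and the threshold in [vi] is exactly calibrated so that the subsequent relocation strictly lowers the energy. I would therefore take particular care that the quadrilateral identity is applied to the (possibly non-convex) simple $4$-cycle with the correct pairing of sides and diagonals, and that after relocation the moved atom really sees neither an attractive nor a repulsive partner, which is guaranteed by $r_0 < \sqrt 3$. Everything else reduces to bookkeeping already licensed by Lemma \ref{LemmaNeighborhood} and assumptions [ii], [iii], [v]--[vii].
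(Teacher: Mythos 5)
Your proof is correct and follows essentially the same route as the paper: parity via the alternating charge distribution from Lemma \ref{LemmaNeighborhood}, and exclusion of $4$-cycles via the identical quadrilateral identity (your $4m^2$ is exactly the paper's subtracted term $|x_1-x_2+x_3-x_4|^2$), the bound $d_1^2+d_2^2\le 4r_0^2$, and a relocation argument powered by [vi]. The only difference is a minor streamlining of the last step: you relocate a single endpoint of the shorter diagonal, using $\min(d_1,d_2)\le\sqrt{2}\,r_0$ so that $V_{\mathrm{r}}(|x_1-x_3|)>3$ beats the at most $3$ attractive bonds lost, whereas the paper relocates two adjacent vertices of the square and invokes Young's inequality together with convexity of $V_{\mathrm{r}}$ to get $2V_{\mathrm{r}}(\sqrt{2}r_0)>6$ against at most $5$ lost bonds — both versions are valid, and yours avoids the convexity step.
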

\begin{proof}

As $C_n$ has alternating charge distribution,  two \EEE successive vertices of a path of the bond graph of a ground state have to be of different charge. This prohibits cycles of odd length. In particular, \EEE this rules out triangles and pentagons. We are thus   left to prove that there is no simple square. \EEE

Now assume by contradiction that the bond graph of a ground state contains a simple square. As two successive vertices of the square have  different charge, \EEE the vertices connected by the diagonal have the same charge. Denote the lengths of the two diagonals by $d_1$ and \EEE $d_2$, respectively. An elementary computation shows
\begin{align} \label{ineq: di}
d_1^2+d_2^2 \leq 4r_0^2.
\end{align}
In fact, denoting the vertices of the square by $x_1,\ldots,x_4$ (counterclockwise) one can use the elementary expansion 
$$|x_1 - x_3|^2 + |x_2 - x_4|^2 = |x_1 - x_2|^2 + |x_2 - x_3|^2 + |x_3 - x_4|^2 + |x_4 - x_1|^2 - |x_1 - x_2 + x_3 - x_4|^2    $$
along with the fact that each bond length is smaller or equal to $r_0$. \EEE Now (\ref{ineq: di}) together with Young's inequality gives 
\begin{align*}
(d_1 +d_2)^2 \leq 2(d_1^2 +d_2^2)\leq 8r_0^2.
\end{align*}
Since $V_{\mathrm{r}}$ is convex and non-increasing  (see ${\rm [v]}$), \EEE we obtain
\begin{align*}
V_{\mathrm{r}}(d_1) +V_{\mathrm{r}}(d_2)\geq 2 V_{\mathrm{r}}\left(\frac{1}{2}( d_1+d_2)\right) \geq 2V_{\mathrm{r}}(\sqrt{2}r_0).
\end{align*}

\begin{figure}[htp]
\centering
\begin{tikzpicture}[scale=.8]
\draw[thin](0,0)node[anchor =east]{$x_1$}--++(1,0)--++(70:1)--++(-1,0)--++(70:-1);
\draw[thin](0,0)--++(220:1);
\draw[fill= white](0,0)++(220:1) circle(.05);
\draw[thin](1,0)--++(-20:1);
\draw[fill= black](1,0)++(-20:1) circle(.05);
\draw[thin](70:1)--++(100:1);
\draw[fill= black](70:1)++(100:1) circle(.05);
\draw[thin](1,0)++(70:1)--++(50:1);
\draw[fill= white](1,0)++(70:1)++(50:1) circle(.05);
\draw[fill=black](0,0)circle(.05)++(1,0)++(70:1)circle(.05);
\draw[fill=white](0,0)++(1,0)circle(.05)++(70:1)++(180:1)circle(.05) node[anchor =east]{$x_2$};
\end{tikzpicture}
\caption{ Relocating $(x_1,q_1)$ and $(x_2,q_2)$ \EEE strictly decreases the energy.}
\label{FigureBondgraphsquare}
\end{figure}
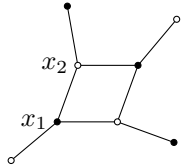

We relocate two  neighboring particles in the square, denoted by $(x_1,q_1)$  and $(x_2,q_2)$.  \EEE  By \eqref{NeighbourhoodCard} and the fact that  $x_1$ and $x_2$ share a bond, we observe that hereby \EEE we remove at most $5$ bonds between atoms \EEE of different charge, see Fig.~\ref{FigureBondgraphsquare}. Using $[\mathrm{ii}]$,   $[\mathrm{vi}]$, \EEE and $V_{\rm r} \ge 0$, we obtain
\begin{align*}
\mathcal{E}(C_n- \{(x_1,q_1),(x_2,q_2)\}) \leq  \mathcal{E}(C_n) -2V_{\mathrm{r}}(\sqrt{2}r_0) + 5 <   \mathcal{E}(C_n) -1 <  \mathcal{E}(C_n).
\end{align*}
This contradicts the fact that $C_n$ is a ground state and concludes the proof. 
\end{proof}

\begin{lemma}\label{lemma:bondangles} Let $C_n$ be a ground state. Then all the bond angles $\theta$ satisfy
\begin{align}\label{ineq:bondangles}
\frac{\pi}{3}\leq \theta \leq \frac{5\pi}{3}.
\end{align}
\end{lemma}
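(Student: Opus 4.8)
The plan is to reduce the two-sided estimate to a single geometric fact: the undirected angle $\alpha \in [0,\pi]$ spanned by any two bonds emanating from a common vertex $x_i$ is at least $\pi/3$. Indeed, for an ordered pair of neighbors the anti-clockwise bond angle $\theta$ equals either $\alpha$ or $2\pi-\alpha$; and if $\alpha \in [\pi/3,\pi]$, then both $\alpha$ and $2\pi-\alpha$ lie in $[\pi/3,5\pi/3]$. Hence establishing $\alpha \ge \pi/3$ yields \eqref{ineq:bondangles} for every orientation, and it suffices to bound the undirected angle from below.

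Fix $x_i$ and two distinct neighbors $x_j,x_k \in \mathcal{N}(x_i)$. First I would exploit the charge structure: by Lemma~\ref{LemmaNeighborhood} the ground state has alternating charge distribution, so $q_j = q_k = -q_i$. Thus $x_j$ and $x_k$ carry equal charge, and \eqref{SamechargeNeighbourhood} forbids same-charge atoms from being neighbors; therefore $x_k \notin \mathcal{N}(x_j)$, which by the definition \eqref{eq: neighborhood} of the neighborhood means precisely $|x_j - x_k| > r_0$. This strict separation---stronger than the mere hard-core bound $|x_j-x_k|\ge 1$ coming from [i], [iv]---is the crux of the argument.

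Next, write $a=|x_i-x_j|$ and $b=|x_i-x_k|$, so that $a,b \in [1,r_0]$ by [i], [iv], and [iii], and let $\alpha$ be the undirected angle at $x_i$. The law of cosines gives $|x_j-x_k|^2 = a^2+b^2-2ab\cos\alpha$. Arguing by contradiction, suppose $\alpha < \pi/3$, i.e.\ $\cos\alpha > \frac{1}{2}$. Then
\[
|x_j-x_k|^2 < a^2+b^2-ab \le \max\{a,b\}^2 \le r_0^2,
\]
where the middle inequality is the elementary bound $a^2+b^2-ab \le \max\{a,b\}^2$ (valid for $a,b\ge 0$: assuming $a\ge b$ it reduces to $b(b-a)\le 0$), and the last step uses $a,b\le r_0$. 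This contradicts $|x_j-x_k|>r_0$, so $\alpha \ge \pi/3$, which closes the argument.

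The proof is short once the right separation estimate is in place, so I expect the only genuine subtlety---and the step easiest to get wrong---to be the appeal to alternating charge. Using merely the hard-core distance $|x_j-x_k|\ge 1$ would be insufficient, since $r_0>1$ would permit neighbors at equal radial distance to subtend angles strictly below $\pi/3$; it is precisely the same-charge repulsion encoded in \eqref{SamechargeNeighbourhood}, upgrading the separation to $|x_j-x_k|>r_0$, that matches the upper bound $a,b\le r_0$ and pins the minimal angle at exactly $\pi/3$.
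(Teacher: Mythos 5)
Your proof is correct and follows essentially the same route as the paper: assuming an angle below $\pi/3$, a law-of-cosines estimate (the paper's ``easy trigonometric argument'') forces $|x_j-x_k|\le r_0$, which is impossible --- the paper concludes via the forbidden triangle of Lemma~\ref{LemmaPolygon}, while you conclude directly from the alternating charge distribution of Lemma~\ref{LemmaNeighborhood}, the same underlying fact, since odd cycles are excluded precisely by the alternating charges. Your treatment of the upper bound (any oriented bond angle equals $\alpha$ or $2\pi-\alpha$ for the undirected angle $\alpha\in[0,\pi]$) replaces the paper's angle-sum argument at a vertex by an equally valid and slightly more direct observation.
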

\begin{proof} Assume  by contradiction  that there exists a bond angle $\theta$ between $x_1,x_0,x_2$ satisfying $\theta < \frac{\pi}{3}$.  As $|x_1-x_0|, |x_2-x_0|\leq r_0$,  an easy trigonometric argument shows that then also $|x_1-x_2| \leq r_0$. Therefore,  the points $\{x_0,x_1,x_2\}$ form a triangle in the bond graph. This contradicts Lemma \ref{LemmaPolygon} and the first inequality in \eqref{ineq:bondangles} follows. Now, if there exists a bond angle  $\theta > 5\pi/3$, then, since the bond angles at every point  sum to $2\pi$, there exists a bond angle $\tilde{\theta} < \pi/3$,  which is impossible.  Therefore, also the second inequality in \eqref{ineq:bondangles} is proved.
\end{proof}

\EEE

\section{Upper bound on the ground-state energy}
This section is devoted to the explicit construction of  configurations $D_n$ \EEE with alternating charge distribution which are \EEE  subsets of the hexagonal lattice. These configurations provide a reference energy value for every $n$, namely $\mathcal{E}(D_n) \EEE=-\lfloor\beta(n)\rfloor$, cf. \eqref{eq: beta definition}.  This already gives the upper bound in \eqref{Energygroundstates}. Moreover, we will construct explicitly configurations  $C_n$ \EEE with net charge (see  \eqref{DefinitionNetCharge}) \EEE of the order $n^{1/4}$ which establishes \EEE Theorem \ref{TheoremCharge}(ii). We defer the lower bound on the ground-state energy, the upper bound on the net charge, and the characterization of the ground states to the subsequent sections. \EEE

By the special geometry of the hexagonal lattice, it is quite natural to give an interpretation of the two terms appearing in $\beta$. The leading order term of the energy is given by $-\frac{3}{2}n$. It is related to the fact that each atom which is not contained in the exterior face of the bond graph is bonded to exactly \EEE three atoms of opposite \EEE charge and every bond connects two atoms. The repulsive term in the energy is zero for such configurations since the distance of each \EEE two atoms \EEE of the same charge is bigger or equal than $\sqrt{3}$. The additional lower order correction term in the energy  is \EEE due to the fact that a certain proportion of the \EEE atoms touching the exterior face of the bond graph is \EEE only bonded to  two atoms of opposite  charge. Their cardinality scales like $\sqrt{n}$.

\subsection{Special subsets of the hexagonal lattice}\label{sec: daisy}

We exhibit special configurations that are subsets of the hexagonal lattice with energy  \EEE $-\lfloor \beta(n)\rfloor$. This provides an upper bound for the ground-state energy. We give \EEE a recursive construction for these geometries following the ideas  in \cite[Section 4,5]{Mainini}. \EEE

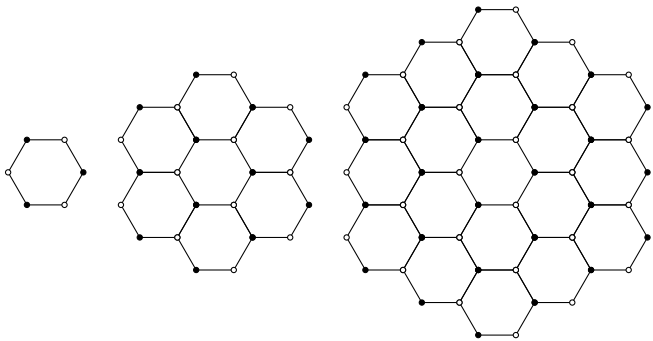
\begin{figure}[htp]
\centering
\begin{tikzpicture}[scale=.5]
\foreach \j in {0,2,4}{
\draw[thin](\j*60:1)--(\j*60+60:1);
\draw[fill=black](\j*60:1) circle(.07);
}
\foreach \j in {1,3,5}{
\draw[thin](\j*60:1)--(\j*60+60:1);
\draw[fill=white](\j*60:1) circle(.07);
}

\begin{scope}[shift={(4.5,0)}]
\foreach \k in {0,...,5}{
\foreach \j in {0,2,4}{
\draw[thin](\k*60+30:{sqrt(3)})++(\j*60:1)--++(\j*60+120:1);
\draw[fill=black](\k*60+30:{sqrt(3)})++(\j*60:1) circle(.07);
}
\foreach \j in {1,3,5}{
\draw[thin](\k*60+30:{sqrt(3)})++(\j*60:1)--++(\j*60+120:1);
\draw[fill=white](\k*60+30:{sqrt(3)})++(\j*60:1) circle(.07);
}
}

\end{scope}

\begin{scope}[shift={(12,0)}]
\foreach \i in {1,2}{
\foreach \k in {0,...,5}{
\foreach \j in {0,2,4}{
\draw[thin](\k*60+30:{\i*sqrt(3)})++(\j*60:1)--++(\j*60+120:1);
\draw[fill=black](\k*60+30:{\i*sqrt(3)})++(\j*60:1) circle(.07);
}
\foreach \j in {1,3,5}{
\draw[thin](\k*60+30:{\i*sqrt(3)})++(\j*60:1)--++(\j*60+120:1);
\draw[fill=white](\k*60+30:{\i*sqrt(3)})++(\j*60:1) circle(.07);

}
}
}

\foreach \k in {0,...,5}{
\foreach \j in {0,2,4}{
\draw[thin](\k*60:3)++(\j*60:1)--++(\j*60+120:1);
\draw[fill=black](\k*60:3)++(\j*60:1) circle(.07);
}
\foreach \j in {1,3,5}{
\draw[thin](\k*60:3)++(\j*60:1)--++(\j*60+120:1);
\draw[fill=white](\k*60:3)++(\j*60:1) circle(.07);

}
}

\end{scope}
\end{tikzpicture}
\caption{ $D_6$, $D_{24}$ and $D_{54}$.\EEE}
\label{FigureDaisies}
\end{figure}

First, we \EEE provide the construction for $n=6k^2$, $k\in \mathbb{N}$. For $k=1$ we define  $D_{6}$ \EEE to be a regular hexagon $\{x_1,\ldots,x_6\}$, where the points $x_1,\ldots,x_6$ are arranged in a counter-clockwise sense as the vertices of the hexagon and with charges $q_i=(-1)^{i+1}, i=1,\ldots,6$. Once we have constructed  $D_{6(k-1)^2}$\EEE, we construct  $D_{6k^2}$ \EEE by attaching hexagons on all boundary sides of  $D_{6(k-1)^2}$ \EEE such that every  atom  has two or three   atoms \EEE of opposite charge in its neighborhood. (This is possible since the hexagonal lattice is bipartite.) \EEE These configurations are pictured in Fig.~\ref{FigureDaisies}.  Due to their symmetry, these configurations are called \emph{daisies} and we will indicate their atomic positions also by $X_{6k^2}^{\rm daisy}$. Note that daisies have net charge zero. \EEE

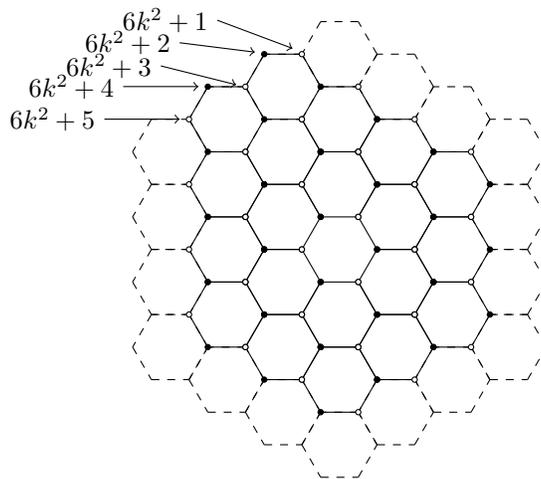
\begin{figure}[htp]
\centering
\begin{tikzpicture}[scale=0.5]

\foreach \i in {0,1,2}{
\foreach \k in {0,...,5}{
\foreach \j in {0,2,4}{
\draw[thin,dashed](\k*60+30:{3*sqrt(3)})++(\j*60:1)++(\k*60+150:{\i*sqrt(3)})--++(\j*60+120:1);
}
\foreach \j in {1,3,5}{
\draw[thin,dashed](\k*60+30:{3*sqrt(3)})++(\k*60+150:{\i*sqrt(3)})++(\j*60:1)--++(\j*60+120:1);

}
}
}

\draw[thin](120:3)++(60:2)--++(120:1)--++(180:1)--++(60:-1)--++(180:1)--++(60:-1);
\draw[thin](120:3)++(60:2)++(120:1)++(180:1)++(60:-1)--++(-60:1);
\draw[thin](120:3)++(60:2)++(120:1)++(180:1)++(60:-2)++(180:1)--++(-60:1);
\draw[fill=white](120:3)++(60:2)++(120:1) circle(.07);
\draw[fill=black](120:3)++(60:2)++(120:1)++(180:1) circle(.07);
\draw[fill=white](120:3)++(60:2)++(120:1)++(180:1)++(60:-1) circle(.07);
\draw[fill=black](120:3)++(60:2)++(120:1)++(180:2)++(60:-1) circle(.07);
\draw[fill=white](120:3)++(60:2)++(120:1)++(180:2)++(60:-2) circle(.07);
\foreach \i in {1,2}{
\foreach \k in {0,...,5}{
\foreach \j in {0,2,4}{
\draw[thin](\k*60+30:{\i*sqrt(3)})++(\j*60:1)--++(\j*60+120:1);
\draw[fill=black](\k*60+30:{\i*sqrt(3)})++(\j*60:1) circle(.07);
}
\foreach \j in {1,3,5}{
\draw[thin](\k*60+30:{\i*sqrt(3)})++(\j*60:1)--++(\j*60+120:1);
\draw[fill=white](\k*60+30:{\i*sqrt(3)})++(\j*60:1) circle(.07);

}
}
}

\foreach \k in {0,...,5}{
\foreach \j in {0,2,4}{
\draw[thin](\k*60:3)++(\j*60:1)--++(\j*60+120:1);
\draw[fill=black](\k*60:3)++(\j*60:1) circle(.07);
}
\foreach \j in {1,3,5}{
\draw[thin](\k*60:3)++(\j*60:1)--++(\j*60+120:1);
\draw[fill=white](\k*60:3)++(\j*60:1) circle(.07);

}
}
\draw[->](-3.25,6.1)node[anchor =east]{$6k^2+1 $}--++(2,-.75);
\draw[->](-4.25,5.5)node[anchor =east]{$6k^2+2 $}--++(2,-.25);
\draw[->](-4.25,5.75)++(60:-1)node[anchor =east]{$6k^2+3 $}--++(2,-.5);
\draw[->](-5.25,5.2)++(60:-1)node[anchor =east]{$6k^2+4 $}--++(2,0);
\draw[->](-5.25,5.2)++(60:-2)node[anchor =east]{$6k^2+5 $}--++(2,0);
\end{tikzpicture}
\caption{Construction of the $(6k^2+m)$-configuration for $k=3$ and $m=5$.}
\label{FigureCkn}
\end{figure}

Now for $n\neq 6k^2$  we can assume that $n=6k^2+m$ for some $1\leq m< 12k+6$. We start \EEE from  $D_{6k^2}$ \EEE and add a new atom \EEE to the bond graph in such a way that it gets bonded to the leftmost \EEE among the uppermost atoms \EEE of  $D_{6k^2}$ \EEE and that it has distance  larger \EEE or equal than $\sqrt{3}$ to all the other atoms. We choose the charge to be the opposite charge of the leftmost \EEE among the uppermost atoms. Then we add atoms in a counter-clockwise fashion such that the latest atom added is bonded to the atom \EEE added in the previous step and possibly to some other atom of  $D_{6k^2}$.\EEE  Moreover,  its distance to all the other atoms is at least $\sqrt{3}$. \EEE We choose the charge to be the opposite of the charge of the atom \EEE added in the previous step. One can realize that this defines uniquely \EEE a procedure in order to add $m$ atoms as shown in Fig.~\ref{FigureCkn}.

The first main result of this section is the following upper bound for the ground-state energy. \EEE

\begin{proposition}[Upper bound for ground-state energy\EEE]\label{PropositionDaisy} For  each $n \in \mathbb{N}$, let $D_n$ be the configuration  introduced above. There holds  
$
\mathcal{E}(D_n) = -\lfloor \beta(n)\rfloor.
$
In particular, for all ground states $C_n$ there holds 
$
\mathcal{E}(C_n) \leq  -\lfloor \beta(n)\rfloor.
$\EEE
\end{proposition}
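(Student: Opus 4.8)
The plan is to compute $\mathcal{E}(D_n)$ exactly and then read off the ground-state bound by minimality. The second assertion is immediate: since $D_n$ is an admissible $n$-particle configuration and $C_n$ is a ground state, the definition of ground state gives $\mathcal{E}(C_n)\le\mathcal{E}(D_n)=-\lfloor\beta(n)\rfloor$. Thus all the content lies in proving $\mathcal{E}(D_n)=-\lfloor\beta(n)\rfloor$. The first reduction is to a pure bond count: by construction $D_n$ is a subset of the hexagonal lattice with alternating charge distribution, so every bond joins atoms of opposite charge and contributes exactly $-1$ by [ii], while any two atoms of equal charge lie at distance at least $\sqrt3$ and hence carry no repulsive energy by [vii]. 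As already observed after \eqref{eq: hex-lattice}, this yields $\mathcal{E}(D_n)=-b$, where $b$ is the number of bonds, so it suffices to show $b=\lfloor\beta(n)\rfloor$.

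For the base case of the daisies $n=6k^2$ I would argue by face counting. The patch $X^{\rm daisy}_{6k^2}$ is connected and simply connected with all bounded faces hexagonal, and an elementary induction on $k$ (the ring from radius $k-1$ to $k$ adding $6(k-1)$ hexagons) gives that the number of hexagonal faces is $h=3k^2-3k+1$. Euler's formula $n-b+(h+1)=2$ for the planar bond graph then produces $b=n+h-1=6k^2+(3k^2-3k+1)-1=9k^2-3k$. Since $\beta(6k^2)=9k^2-\sqrt{9k^2}=9k^2-3k$ is already an integer, this settles $b=\lfloor\beta(6k^2)\rfloor$.

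For the filling phase $n=6k^2+m$ with $1\le m<12k+6$, I would add the atoms one at a time counterclockwise as in the construction. Each new atom bonds either to one existing atom or to two, so the bond graph stays connected and Euler's relation $b=n+h-1$ holds at every step; consequently the increment satisfies $b(n)-b(n-1)=1+\big(h(n)-h(n-1)\big)\in\{1,2\}$, the value $2$ occurring precisely when the new atom closes a hexagon. I would then make this $1/2$-pattern along the boundary explicit, governed by the side-and-corner structure of the hexagonal ring, and sum it into a closed expression $b(6k^2+m)=9k^2-3k+B(m)$. The remaining step is the arithmetic identity $9k^2-3k+B(m)=\lfloor\beta(6k^2+m)\rfloor$ for every admissible $m$, i.e.\ checking that the integer jumps of $\lfloor\beta(\cdot)\rfloor$ (which are $1$ or $2$) land exactly on the steps at which a hexagon is closed.

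The main obstacle I anticipate is precisely this final matching. Because $\beta$ carries the square-root term $\sqrt{\tfrac32 n}$, one must compare the piecewise-linear bond count $B(m)$ against the floor of a strictly concave function and verify that the locations of the $+2$ jumps coincide with the geometric hexagon-closing steps, uniformly over a full ring $0\le m<12k+6$ and uniformly in $k$. Controlling this rounding — rather than the (routine) Euler bookkeeping — is where the care is needed, and it is exactly the point handled in the analogous construction of \cite[Section 4,5]{Mainini}.
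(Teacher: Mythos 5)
Your proposal is correct and follows essentially the same route as the paper, whose proof consists of the observation that $D_n$ is repulsion-free with all unit-length bonds between opposite charges (so $\mathcal{E}(D_n)=-b$) together with a citation of \cite[Propositions 4.1, 5.1]{Mainini} for the counting identity $b=\lfloor\beta(n)\rfloor$. You carry out the daisy case $n=6k^2$ explicitly via Euler's formula, and for the filling phase you defer the delicate matching of the bond increments with the jumps of $\lfloor\beta(\cdot)\rfloor$ to exactly the same construction in \cite{Mainini} that the paper itself invokes.
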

\begin{proof} The proof follows as in \cite[Proposition 4.1, Proposition 5.1]{Mainini}, additionally \EEE noting that in the provided construction, all atoms \EEE in the bond graph are bonded \EEE to points of opposite charge only.
\end{proof}

\subsection{Daisies with an additional  trapezoid\EEE}\label{sec: daisy + something}

Note that the above configurations have net charge in $\lbrace -1 ,0 , 1\rbrace$. Starting with a daisy and attaching a  trapezoid \EEE in a suitable way, we can also construct configurations with energy $- \lfloor \beta(n) \rfloor$ having a net charge of order $n^{1/4}$. The following construction is inspired by the one of \cite{Davoli15} applied in connection to the derivation of the so-called $n^{3/4}$-law.

We choose $k \in \mathbb{N}$ with  $k \ge 252$, $r \in 6 \mathbb{N}$ with $r \le \sqrt{k/7}$, and let 
\begin{align}\label{eq: the-n-def}
n := 6k^2 + 2kr + \frac{1}{6}r^2 + 1 = 6\hat{k}^2+1,   
\end{align}
where $\hat{k} = k + r/6$. We construct a configuration $C_n$ as follows. We   start  from $C_{6k^2}$ and add a new  atom  to the bond graph in such a way that it gets bonded to the  leftmost  among the uppermost atoms  of $C_{6k^2}$ and that it has distance greater or equal than $\sqrt{3}$ to all the other atoms. We choose the charge to be the opposite charge of the  leftmost  among the uppermost atoms.  (Say, charge $1$ for definiteness.) \EEE Then we add atoms in a counter-clockwise way similar to the previous construction until `the line is completed'.
%
Overall, we add $2k-1$ atoms by this procedure where $k$ atoms have charge $1$ and $k-1$ atoms have charge $-1$. We then repeat this construction by adding another row of atoms on the top where we need to add $2(k-1)-1$ atoms. We repeat this until we have added $r$ rows of atoms corresponding to 
$$ \sum_{j= k - r+1}^k (2j-1) = 2kr - r^2$$
added atoms. Note that in each row the number of added atoms of charge $1$ exceeds the number of added atoms of charge $-1$ by exactly one.  The construction is sketched in Fig.~\ref{FigureDaisyParallelogram}. \EEE

 Finally, in a last row we add $m :=7r^2/6 +1$ atoms. Note that by the assumptions $k \ge 5$ and  $r \le \sqrt{k/7}$  we have
\begin{align*}
\frac{7}{6}r^2 + 1 \le k/6+ 1 \le  \frac{k - r}{2}, \EEE
\end{align*}
 i.e.,  the chain of atoms added in row $r$ is `long enough' so that $m$ atoms can be added in the last row. Note that the resulting configuration consists of $n$ atoms, see \eqref{eq: the-n-def}. 
 
 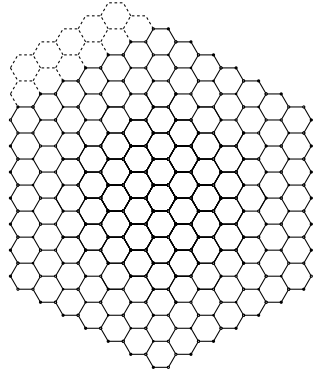
\begin{figure}[htp]
\centering
\begin{tikzpicture}[scale=0.2]

\foreach \k in {0,...,5}{
\begin{scope}[rotate=60*\k]
\draw[thin](8,0)++(60:2)++(-60:-1)++(0:1)++(60:2)++(120:2)--++(0:-1);
\foreach \j in {0,...,5} {
\draw[thin](8,0)++(60:-2+\j)++(-60:3-\j)--++(0:1)--++(60:1)--++(120:1);
\draw[fill=white](8,0)++(60:-2+\j)++(-60:3-\j)++(0:1)circle(.07)++(60:1)++(120:1)circle(.07);
\draw[fill=black](8,0)++(60:-2+\j)++(-60:3-\j)++(0:1)++(60:1)circle(.07)++(120:1);
}
\end{scope}
}

\begin{scope}[rotate=60*2]
\draw[thin, dash pattern=on 1pt off 1pt on 1pt off 1pt](9,0)++(60:-2+5)++(-60:4-5)++(0:1)++(60:1)++(120:1)--++(0:-1);
\foreach \j in {0,...,5} {
\draw[thin,dash pattern=on 1pt off 1pt on 1pt off 1pt](9,0)++(60:-2+\j)++(-60:4-\j)--++(0:1)--++(60:1)--++(120:1);
}

\draw[thin, dash pattern=on 1pt off 1pt on 1pt off 1pt](10,0)++(60:3)++(-60:5-5)++(0:1)++(60:1)++(120:1)--++(0:-1);
\foreach \j in {0,...,4} {
\draw[thin,dash pattern=on 1pt off 1pt on 1pt off 1pt](10,0)++(60:-1+\j)++(-60:4-\j)--++(0:1)--++(60:1)--++(120:1);
}
\end{scope}

\foreach \k in {0,...,5}{
\begin{scope}[rotate=60*\k]
\draw[thin](7,0)++(60:2)++(-60:-2)++(0:1)++(60:1)++(120:1)--++(0:-1);
\foreach \j in {0,...,4} {
\draw[thin](7,0)++(60:-2+\j)++(-60:2-\j)--++(0:1)--++(60:1)--++(120:1);
\draw[fill=white](7,0)++(60:-2+\j)++(-60:2-\j)++(0:1)circle(.07)++(60:1)++(120:1)circle(.07);
\draw[fill=black](7,0)++(60:-2+\j)++(-60:2-\j)++(0:1)++(60:1)circle(.05)++(120:1);
}
\end{scope}
}

\foreach \k in {0,...,5}{
\foreach \j in {-1,1}{
\draw[thin](60*\k:\j*7)--++(120+60*\k:\j*1);
\draw[thin](60*\k:\j*7)--++(240+60*\k:\j*1);
}
}

\foreach \k in {0,...,2}{
\draw[fill=black](120*\k:7)circle(.07);
\draw[fill=white](120*\k+60:7)circle(.07);
}

%
%
%

%

\foreach \j in {0,...,2}{
\draw[thin](\j*60:1)--++(120+\j*60:1);
\draw[thin](\j*60+180:1)--++(300+\j*60:1);
}
\foreach \j in {0,...,2}{
\draw[fill=black](\j*120:1)circle(.05);
\draw[fill=white](\j*120+60:1)circle(.07);
}
\foreach \m in {0,1,2,3,4,5}{
\foreach \k in {1,...,3}{
\foreach \i in {1,...,6}{
\begin{scope}[shift={(\m*60+90:{-sqrt(3)})}]
\begin{scope}[shift={(\i*60+30:{\k*sqrt(3)})}]
\foreach \j in {0,...,2}{
\draw[thin](\j*60:1)--++(120+\j*60:1);
\draw[thin](\j*60+180:1)--++(300+\j*60:1);
}
\foreach \j in {0,...,2}{
\draw[fill=black](\j*120:1)circle(.07);
\draw[fill=white](\j*120+60:1)circle(.07);
}
\end{scope}
\end{scope}
}
}
}

\end{tikzpicture}
\caption{Construction of a daisy \EEE with an additional  trapezoid. \EEE Two rows of atoms have been added where the first and the last atom in the added rows have charge $+1$. Thus, the net charge of the configuration is $+2$.\EEE}
\label{FigureDaisyParallelogram}
\end{figure}

 Concerning the energy, we observe that in the row where $2j-1$ atoms are added \EEE we add exactly $3j-2$ bonds to the bond graph for $k-r+1 \le j \le k$. \EEE In the ultimate row we add $1 + 3(m-1)/2  = 1+ 7r^2/4$ bonds. Consequently, using Proposition \ref{PropositionDaisy} the energy of $C_n$ is given by
\begin{align*}
\mathcal{E}(C_n) & = \mathcal{E}(C_{6k^2}) - \sum_{j=k-r+1}^k (3j-2)   - 1 - 7r^2/4 =   -9k^2 + 3k - 3kr  + 3r^2/2  + r/2 - 1 -  7r^2/4 \\& = -\frac{3}{2}6(k + r/6)^2 + \sqrt{\frac{3}{2}6(k+r/6)^2} -1 \EEE = -\lfloor \beta( 6\hat{k}^2) \rfloor  -  1 \EEE = -\lfloor \beta(n) \rfloor. 
\end{align*}  
We now determine the \emph{net charge} of the configuration. Recall that the configuration $C_{6k^2}$, from which we started our construction, has net charge zero. As explained above, in each added row the number of added atoms of charge $1$ exceeds the number of added atoms of charge $-1$ by exactly one, i.e., $\mathcal{Q}(C_n) = r+1$. 

We are now in the position to give the proof of Theorem \ref{TheoremCharge}(ii). To this end, consider the sequence of integers $n_j = 6(j+r_j/6)^2+1$, where $r_j = 6 \lfloor \frac{1}{6} \sqrt{j/7} \rfloor$, and the ground states $C_{n_j}$ constructed above. Note that by  $r_j \le j$ we  get $n_j \le 9j^2$. Thus, we calculate
$$\mathcal{Q}(C_{n_j}) = r_j+1 \ge  \sqrt{j/7}-5 \ge \frac{1}{\sqrt{21}}n_j^{1/4} - 5.$$
This yields  $\displaystyle\liminf_{j \to +\infty} n_j^{-1/4}|\mathcal{Q}(C_{n_j})|>0$.

\EEE

\section{Boundary Energy}

In this section  we introduce the concept of boundary energy and prove a corresponding estimate which will be fundamental for the characterization of ground states and their energy in Section \ref{sec: characterization-energy}, \ref{sec: characterization}. \EEE

 \textbf{Boundary atoms, boundary energy:} Within the bond graph, we say that an atom   is a \textit{boundary atom\EEE} if it is not contained in the interior region of any simple cycle. Otherwise, we call it \emph{bulk atom}. \EEE We denote the union of the boundary atoms \EEE by $\partial X_n$. A \textit{boundary bond} is a bond containing a boundary atom. All other bonds are called \textit{bulk bonds}. Similarly, a bond-angle will be called \emph{boundary angle} or \emph{bulk angle}, depending on whether the associated atom is a boundary atom or not. \EEE  

We denote by $d$ the number of boundary atoms of $X_n$. Given $C_n$, we define its \textit{bulk}, denoted by $C_n^{\mathrm{bulk}}$, as the sub-configuration obtained by dropping all boundary atoms (and the corresponding charges). Similarly, the particle positions are indicated by $X_n^{\mathrm{bulk}}$. \EEE With the above definition, we have that the bulk is an \EEE $(n-d)$-atom configuration. We define $\mathcal{E}^{\mathrm{bulk}}(C_n)$ as the energy of $C_n^{\mathrm{bulk}}$. We then have two contributions to the energy of $C_n$, namely $\mathcal{E}^{\mathrm{bnd}}$ and $\mathcal{E}^{\mathrm{bulk}}$, where
\begin{align*}
\mathcal{E}^{\mathrm{bnd}}(C_n) := \mathcal{E}(C_n) - \mathcal{E}^{\mathrm{bulk}}(C_n).
\end{align*}

  \textbf{Maximal polygon:} We introduce an additional notion in the case that $C_n$ is connected and does not contain acyclic bonds. In this case, the bond graph is  delimited by a simple cycle which we call the \emph{maximal polygon}.  We denote the atoms of the maximal polygon  by $\{x_1,\ldots,x_d\}$   and  the interior angle  at the point  $x_i \in \partial X_n$ by $\theta_i$. \EEE Moreover, we indicate by
\begin{align*}
&I_2=\{  x_i \in \partial X_n: \EEE \#\mathcal{N}(x_i)=2\},\\&I_3=\{  x_i \in \partial X_n: \EEE \#\mathcal{N}(x_i)=3\},
\end{align*}
the set of $2$-bonded and $3$-bonded boundary atoms, respectively.  If $C_n$ is a ground state, we note that $\# I_2 + \# I_3 = d$ by Lemma \ref{LemmaNeighborhood}. \EEE

 We now provide an estimate for the boundary energy. Its proof is inspired by \cite[Lemma 6.2]{Mainini}. The precise estimates, however, deviate significantly from the study in \cite{Mainini}   due to the presence of the repulsive potential $V_{\rm r}$ instead of an angular potential. We defer a discussion in that direction after the proof, see Remark \ref{rem: angle/bond}. \EEE

\begin{lemma}[Boundary energy\EEE]
\label{LemmaBoundaryEnergy}
Let $n \geq 6$ and let $C_n$ be a connected ground state with no acyclic bonds.   Then
\begin{align}\label{BoundaryEnergyEstimate}
\mathcal{E}^{\mathrm{bnd}}(C_n) \geq -\frac{3}{2}d +3
\end{align}
with equality only if  all the three \EEE  following conditions are satisfied:
\begin{align}
& \text{All boundary bonds are of unit length},\label{eq: 1}\\
&\#I_3 = \frac{1}{2}d -3,\label{eq: 2}\\
&\theta_i = \frac{4\pi}{3} \text{ if } x_i \in I_3 \text{ and } \theta_i = \frac{2\pi}{3} \text{ if } x_i \in I_2.\label{eq: 3}
\end{align}

\end{lemma}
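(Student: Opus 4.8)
The plan is to split $\mathcal{E}^{\mathrm{bnd}}$ into an attractive part $A$ (the contributions $V_{\mathrm a}$ of bonds, which by the alternating charge distribution of Lemma \ref{LemmaNeighborhood} all join opposite charges) and a repulsive part $R\ge 0$ (the contributions $V_{\mathrm r}$ of same-charge pairs closer than $\sqrt3$), and then to balance the bond count against an \emph{angular excess} read off from the maximal polygon. I denote the boundary atoms by the polygon vertices $x_1,\dots,x_d$, with interior angles $\theta_i$ satisfying $\sum_{i=1}^d\theta_i=(d-2)\pi$; by Lemma \ref{LemmaNeighborhood} each $x_i$ is $2$- or $3$-bonded, so $\#I_2+\#I_3=d$. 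The heuristic behind \eqref{eq: 2}--\eqref{eq: 3} is that an atom in $I_2$ prefers $\theta_i=2\pi/3$ and one in $I_3$ prefers $\theta_i=4\pi/3$; the angle sum then pins down $\#I_3=\tfrac d2-3$, while every deviation is paid for by the repulsion, which here plays the role of the angular potential in \cite{Mainini}.

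For $A$ I would count boundary bonds. Writing $e_{bb},e_{bi}$ for the numbers of boundary--boundary and boundary--bulk bonds, the degree sum over boundary atoms reads $2\#I_2+3\#I_3=2e_{bb}+e_{bi}$, and since the $d$ polygon edges are boundary--boundary we have $e_{bb}\ge d$. Hence the number of boundary bonds is $e_{bb}+e_{bi}=2\#I_2+3\#I_3-e_{bb}\le d+\#I_3$, and because each bond satisfies $V_{\mathrm a}\ge -1$ (equality iff unit length, by [ii]) we obtain $A\ge -(d+\#I_3)$, with equality forcing $e_{bb}=d$ and \eqref{eq: 1}. For $R$ I would introduce the effective angular potential $g(\phi):=V_{\mathrm r}\big(2\sin(\phi/2)\big)$, the repulsion between two same-charge atoms placed at unit distance from a common atom and subtending an angle $\phi$ there. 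As $V_{\mathrm r}$ is convex and non-increasing and $\phi\mapsto 2\sin(\phi/2)$ is concave, $g$ is convex; it vanishes on $[2\pi/3,4\pi/3]$ by [vii], and $g'_-(2\pi/3)=\tfrac12 V'_{\mathrm r,-}(\sqrt3)<-\tfrac{3}{2\pi}$ by the first slope condition in [viii]. A supporting-line estimate for $\phi\le 2\pi/3$ together with $g\ge 0$ for $\phi\ge 2\pi/3$ then yields the global bound $g(\phi)\ge \tfrac{3}{2\pi}\big(\tfrac{2\pi}{3}-\phi\big)$, with equality only at $\phi=2\pi/3$.

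Next I would attribute repulsive pairs to boundary atoms: to $x_i\in I_2$ the pair of its two same-charge polygon neighbours (angle $\theta_i$), and to $x_i\in I_3$ the two pairs formed by its inward neighbour with each polygon neighbour (angles $\alpha_i,\beta_i$ with $\alpha_i+\beta_i=\theta_i$). By Lemma \ref{LemmaPolygon} there are no triangles (so these pairs are unbonded and $g$ applies) and no squares (so the pairs are pairwise distinct and each lies in $R$), whence $R\ge \sum_{I_2}g(\theta_i)+\sum_{I_3}\big(g(\alpha_i)+g(\beta_i)\big)$. Applying the supporting-line bound and $\sum_i\theta_i=(d-2)\pi$,
\[
R\ge \tfrac{3}{2\pi}\Big(\tfrac{2\pi}{3}\#I_2+\tfrac{4\pi}{3}\#I_3-\sum_{i=1}^d\theta_i\Big)=(\#I_2+2\#I_3)-\tfrac32(d-2)=\#I_3-\tfrac d2+3 .
\]
Adding the two estimates gives $\mathcal{E}^{\mathrm{bnd}}=A+R\ge -\tfrac32 d+3$. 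For equality, $A$ must attain $-(d+\#I_3)$ and $R$ must attain $\#I_3-\tfrac d2+3$; the latter forces equality in every supporting-line bound, hence every attributed angle equals $2\pi/3$. Thus $\theta_i=2\pi/3$ on $I_2$ and $\theta_i=4\pi/3$ on $I_3$, i.e.\ \eqref{eq: 3}, and reinserting these into the angle sum yields \eqref{eq: 2}.

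The main obstacle is that bonds need not have unit length, which invalidates both the identity $g(\phi)=V_{\mathrm r}(2\sin(\phi/2))$ and the equality $A=-(d+\#I_3)$. Lengthening a bond increases the same-charge distances along it (lowering $R$) but raises $A$; the second slope condition in [viii], $\tfrac{V_{\mathrm a}(r)+1}{r-1}>-2V'_{\mathrm r,+}(1)$ for $r\in(1,r_0]$, is calibrated exactly so that the attractive cost of a non-unit bond dominates the repulsive saving along the at most two same-charge pairs it affects, reducing the general case to the unit-bond computation above. Turning this into a rigorous \emph{joint} estimate in bond length and angle, rather than the decoupled count, is the technical heart of the lemma; a secondary subtlety is to verify the attribution near $3$-bonded atoms and at reflex angles $\theta_i>4\pi/3$, which is controlled by Lemmas \ref{LemmaPolygon} and \ref{lemma:bondangles}.
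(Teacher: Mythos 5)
Your unit-bond-length analysis is sound and in one respect streamlines the paper's argument: where the paper first applies Jensen's inequality to the convex function $\theta\mapsto V_{\mathrm{r}}\big(2\sin(\theta/2)\big)$, reduces to the mean angle $\alpha(\delta)$, and then optimizes over $\delta=(\#I_2+2\#I_3)/d$ with the threshold $\delta_0=\frac32-\frac3d$ (see \eqref{deltageqdelta1}--\eqref{deltageqdelta0}), you apply the supporting-line bound $g(\phi)\ge\frac{3}{2\pi}\big(\frac{2\pi}{3}-\phi\big)$ — valid on all of $[0,2\pi]$ by convexity, the first slope condition in [viii], and $g\ge 0$ — angle by angle, and the term $\#I_3$ cancels directly against your attractive count $A\ge-(d+\#I_3)$, recovering $-\frac32 d+3$ with no optimization. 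Your attribution of repulsive pairs to boundary angles and the no-double-counting argument via the absence of squares (Lemma \ref{LemmaPolygon}) coincide with the paper's \eqref{eq: boundary energy}, and your derivation of \eqref{eq: 3} and then \eqref{eq: 2} from the equality case is the same as the paper's.

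There is, however, a genuine gap, and you flag it yourself: everything is carried out only for unit bond lengths, and you merely assert that the second condition in [viii] is ``calibrated exactly'' to reduce the general case to this one. That reduction is not a routine verification — it is the core estimate of the paper's proof. Since $r_0<\sqrt2$ and bond angles are at least $\pi/3$ (Lemma \ref{lemma:bondangles}), one has $r_2\cos\theta\le r_0/2<1\le r_1$ and hence $\partial_{r_1}\ell(\theta,r_1,r_2)>0$ throughout, so lengthening a boundary bond strictly \emph{decreases} every repulsive term it enters; your identity $g(\phi)=V_{\mathrm{r}}(2\sin(\phi/2))$ and the supporting-line bound therefore fail for non-unit lengths, and one must show quantitatively that the attractive penalty dominates the repulsive saving. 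The paper does this in \eqref{Estimateunitlength}: for $r,s\in[1,r_0]$ it chains $\frac{1}{2(r-1)}\big(V_{\mathrm{a}}(1)-V_{\mathrm{a}}(r)\big)\le V'_{\mathrm{r},+}(1)\le V'_{\mathrm{r},+}\big(\ell(\theta,s,r_2)\big)\le V'_{\mathrm{r},+}\big(\ell(\theta,s,r_2)\big)\,\partial_{r_1}\ell(\theta,s,r_2)$, using $\ell\ge1$, monotonicity of $V'_{\mathrm{r},+}$ from convexity [v], $V'_{\mathrm{r},+}\le0$, and the elementary bound $\partial_{r_1}\ell\le1$, and then integrates in $s$ from $1$ to $r$. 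This yields $V_{\mathrm{a}}(1)+V_{\mathrm{r}}(\ell(\theta,1,1))\le\frac12\big(V_{\mathrm{a}}(r_1)+V_{\mathrm{a}}(r_2)\big)+V_{\mathrm{r}}(\ell(\theta,r_1,r_2))$, where the factor $\frac12$ is precisely how each bond's attractive energy is split between its two endpoints, so each non-unit bond is charged half its penalty at each of the (at most two) local configurations it affects — the rigorous form of your ``at most two same-charge pairs'' heuristic, applied once at $I_2$ atoms and twice at $I_3$ atoms as in \eqref{Twobondedestimate}--\eqref{Threebondedestimate}. Without this estimate, neither the inequality \eqref{BoundaryEnergyEstimate} nor the necessity of \eqref{eq: 1} — which in the paper comes from the strict version \eqref{eq: strict inequality} propagating strictness through \eqref{boundaryestimate1} — is established: as written, your proof covers only configurations a priori known to have unit boundary bonds.
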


\begin{remark}\label{rem: bdy}
{\normalfont (i) Note  by Lemma \ref{LemmaNeighborhood} that  $C_n$ has alternating charge distribution and thus  $d$ is even. 
(ii)  Observe by  \eqref{eq: 2} that equality in \eqref{BoundaryEnergyEstimate} implies that $\frac{3}{2}d-3$ bonds contribute to the boundary energy. Thus, equality in \eqref{BoundaryEnergyEstimate} together with [ii] and [vii] imply that for   all boundary atoms $x_i$ one has $\min\lbrace |x_i  - x_j|: \  j \in \lbrace 1, \ldots, n \rbrace, j\neq i, \ q_j = q_i \rbrace \ge \sqrt{3}$.\EEE}
\end{remark}

\begin{proof}  Suppose that  $\{x_1,\ldots,x_d\}$ are ordered such that \EEE  $x_i\in \mathcal{N}(x_{i+1})$, $i=1,\ldots,d$.  Here and in the following, we use \EEE the identification $x_{d+1} = x_1$ and $x_{0} = x_d$. For a $3$-bonded atom $x_i$, denote by  $x_i^b \in X_n \setminus \{x_{i-1},x_{i+1}\}$ the atom that is connected to $x_i$ with the third bond.  The boundary energy can be  estimated by
\begin{align}\label{eq: boundary energy}
\begin{split}
\mathcal{E}^{\mathrm{bnd}}(C_n) &\geq   \sum_{i=1}^d \Big(\frac{1}{2}\big( V_{\mathrm{a}}(|x_i-x_{i+1}|)+V_{\mathrm{a}}(|x_i-x_{i-1}|)\big)+V_{\mathrm{r}}(|x_{i+1}-x_{i-1}|) \Big) \\&\quad+ \sum_{x_i \in I_3}\Big(  V_{\mathrm{a}}(|x_i-x_i^b|) + V_{\mathrm{r}}(|x_{i-1}-x_i^b|) + V_{\mathrm{r}}(|x_{i+1}-x_i^b|)  \Big).
\end{split}
\end{align}
Here, we split the interactions into two sums. The first sum represents \EEE the energy between successive boundary atoms. The second sum is a lower bound for the energy of  $3$-bonded \EEE atoms that may be bonded to other boundary atoms or to bulk atoms:   note that we might double count (negative) attractive interaction if also $x_i^b$ is a boundary atom. However, we never double count  (positive) repulsive interaction.  To see this, suppose that a repulsive interaction would be double counted in the sum of the right hand side, i.e., there exist $i,j  \in \{1,\ldots,d\}$, $i \neq j$, such that,  e.g.,   $\{x_{j-1},x_j^b\} = \{x_{i-1},x_i^b\}$.  More precisely, as $i\neq j$, this means $x_{j-1} = x^b_i$ and $x_{i-1} =x_j^b$.  This implies that $x_i$ as well as $x_j$  are  both  bonded to $x_{j-1}$ and   $x_{i-1}$.  Then, however, the atoms $\{x_j,x_{j-1},  x_{i-1},  x_i\}$ form a square in the bond graph which contradicts Lemma \ref{LemmaPolygon}.  \EEE

For a $3$-bonded atom $x_i$, denote by $\theta_i^1,\theta_i^2\in [0,2\pi]$ the two angles forming $\theta_i$ enclosed by the  three bonds at $x_i$. Finally, we define $\delta : = \frac{\#I_2+2\#I_3}{d}$ and note that $\delta \in [1,2]$ since $\# I_2 + \# I_3  =d$. \EEE

We will prove that 
\begin{align}\label{boundarylowerbound-new}
\mathcal{E}^{\mathrm{bnd}}(C_n) & \ge -\delta d + \sum_{x_i \in I_2} V_{\mathrm{r}}\Big(2\sin\Big(\frac{\theta_i}{2}\Big)\Big) + \sum_{x_i \in I_3}   \sum_{j=1,2} V_{\mathrm{r}}\Big(2\sin\Big(\frac{\theta_i^j}{2}\Big)\Big)  \EEE \nonumber \\& \geq  - \delta d +\delta d V_{\mathrm{r}}\Big(2\sin\Big(\frac{\pi(d-2)}{2\delta d}\Big)\Big),
\end{align}
where the first inequality is strict if not all  lengths of boundary bonds \EEE  are equal to $1$. We defer the proof of this estimate  and its strict version \EEE and first show that it implies the statement of the lemma. \EEE

First, introducing \EEE
\begin{align*}
\alpha(\delta) := \frac{\pi(d-2)}{2\delta d},
\end{align*}
estimate \eqref{boundarylowerbound-new}  can be written as \EEE
\begin{align}\label{Ebndestimate}
\begin{split}
\mathcal{E}^{\mathrm{bnd}}(C_n) &\geq -\delta d + \delta dV_{\mathrm{r}}\Big(2\sin\Big(\frac{\pi(d-2)}{2\delta d}\Big)\Big) =  \delta d\Big(
V_{\mathrm{r}}\Big(2\sin\Big(\alpha(\delta)\Big)\Big)-1\Big).
\end{split}
\end{align}
We obtain (\ref{BoundaryEnergyEstimate}) by minimizing the right hand side of (\ref{Ebndestimate}) with respect to $\delta$. To see this, \EEE set $\delta_0 = \frac{3}{2}-\frac{3}{d}$. For $\delta \leq \delta_0$ we have 
\begin{align*}
\alpha(\delta) \geq \alpha(\delta_0) = \frac{\pi}{3}.
\end{align*}
By $[\mathrm{vii}]$ \EEE we get $V_{\mathrm{r}}( 2\sin(\alpha(\delta)))=0$ for all $1\leq \delta\leq \delta_0$\EEE. Therefore, we find
\begin{align}\label{deltageqdelta1}
\delta d\Big( \EEE V_{\mathrm{r}}\Big(2\sin\Big(\alpha(\delta)\Big)\Big)-1\Big)= -\delta d\geq -\delta_0 d=-\Big(\frac{3}{2}d-3\Big)
\end{align}
and we obtain estimate (\ref{BoundaryEnergyEstimate})  \EEE for $\delta\leq \delta_0$. Now for $\delta >\delta_0$, we have $\alpha(\delta) < \alpha(\delta_0)$.  By  $(\mathrm{v})$ \EEE we get
$$V_{\mathrm{r}}(2\sin(\alpha(\delta))) \geq V_{\mathrm{r}}(2\sin(\alpha(\delta_0))) + 2V_{\mathrm{r},-}^\prime(2\sin(\alpha(\delta_0))(\sin(\alpha(\delta))-\sin(\alpha(\delta_0)).$$
 Then by \EEE the fact that $\sin(\theta)$ is concave for $\theta \in [0,\pi]$, $V_{\mathrm{r},-}^\prime(\sqrt{3}) < - 3/\pi<0$ by $[\mathrm{viii}]$,  $\alpha(\delta_0) = \frac{\pi}{3}$, \EEE and $\alpha(\delta)-\alpha(\delta_0) < 0$ \EEE we derive
\begin{align}\label{deltageqdelta0}
\begin{split}
V_{\mathrm{r}}(2\sin(\alpha(\delta))) &\geq V_{\mathrm{r}}(\sqrt{3}) + 2V_{\mathrm{r},-}^\prime (\sqrt{3})\cos(\alpha(\delta_0))\big(\alpha(\delta)-\alpha(\delta_0)\big) \\&=V_{\mathrm{r}}(\sqrt{3}) + V_{\mathrm{r},-}^\prime(\sqrt{3})\Big(\frac{\pi(d-2)}{2\delta d}-\frac{\pi}{3}\Big) > -\frac{3}{\pi}\Big(\frac{\pi(d-2)}{2\delta d}-\frac{\pi}{3}\Big) \\&=\frac{1}{\delta d}\Big(\delta d-\frac{3}{2}d+3\Big).
\end{split}
\end{align}
 Here, we also used that $\cos(\alpha(\delta_0)) = \frac{1}{2}$ and   $V_{\mathrm{r}}(\sqrt{3}) = 0$. \EEE From the previous calculation and (\ref{Ebndestimate}),  estimate (\ref{BoundaryEnergyEstimate})  \EEE follows  also for $\delta >\delta_0$. \EEE 

 We now show that we have strict inequality in (\ref{BoundaryEnergyEstimate})  if one of the conditions \eqref{eq: 1}-\eqref{eq: 3} is violated.  First,  we have a strict inequality in \eqref{boundarylowerbound-new} if a boundary bond is not of unit length and therefore also in \eqref{Ebndestimate}.  (Recall that we defer the proof of \eqref{boundarylowerbound-new} and its strict version to the end of the proof.) \EEE If \eqref{eq: 2} is violated, we find $\delta \neq \delta_0$ after a short computation. Then we obtain strict inequalities from \eqref{deltageqdelta1} and \eqref{deltageqdelta0}, respectively. Finally, let use suppose that \eqref{eq: 3} is violated. We can assume that $\delta=\delta_0$ and \eqref{eq: 1}-\eqref{eq: 2} hold as otherwise the inequality in  (\ref{BoundaryEnergyEstimate}) is strict. If  \EEE equality holds in (\ref{BoundaryEnergyEstimate}), then equality also holds in \eqref{boundarylowerbound-new}.  As $V_{\rm r}(2\sin(\alpha(\delta_0)))= 0$, \EEE this implies  
\begin{align*}
\sum_{ x_i \in I_2} V_{\mathrm{r}}\Big(2\sin\Big(\frac{\theta_i}{2}\Big)\Big) + \sum_{x_i \in I_3} \Big( V_{\mathrm{r}}\Big(2\sin\Big(\frac{\theta_i^1}{2}\Big)\Big) + V_{\mathrm{r}}\Big(2\sin\Big(\frac{\theta_i^2}{2}\Big)\Big) \Big)=0.
\end{align*} 
In view of  ${\rm [iv]}$ and  ${\rm [vii]}$\EEE, this gives 
\begin{align}\label{eq: anglerange}
\theta_i \in [\tfrac{2\pi}{3},\tfrac{4\pi}{3}] \ \ \ \text{for all $x_i \in I_2$}, \ \ \ \ \ \theta^1_i, \theta^2_i \in [\tfrac{2\pi}{3},\tfrac{4\pi}{3}] \ \ \ \text{for all $x_i \in I_3$}.
\end{align}
  Under the assumption that \eqref{eq: 3} is violated, using  \eqref{eq: anglerange} and  $\theta_i = \theta_i^1 + \theta_i^2$ we find some  $x_i \in I_3$ with  $\theta_i > \frac{4\pi}{3}$ or some $x_i \in I_2$ with $\theta_i > \frac{2\pi}{3}$. Then \eqref{eq: anglerange} implies \EEE
\begin{align*}
  \pi(d-2) = \EEE \frac{4\pi}{3}\Big(\frac{1}{2}d-3\Big)+ \frac{2\pi}{3}\Big(\frac{1}{2}d +3\Big) =\frac{4\pi}{3}\#I_3 + \frac{2\pi}{3}\#I_2 < \sum_{x_i \in I_2} \theta_i + \sum_{x_i \in I_3} \theta_i = \pi(d-2),
\end{align*}
 where the last step follows from the fact that the maximal polygon has $d$ vertices. \EEE This is a contradiction and shows strict inequality in \eqref{BoundaryEnergyEstimate} if \eqref{eq: 3} is violated.

To complete the proof, it remains to show \eqref{boundarylowerbound-new}  and its strict version. \EEE In the case of  a \EEE $2$-bonded $x_i$,  define $r^1_i = |x_i-x_{i-1}|$, $r^2_i = |x_i-x_{i+1}|$. In the case of  a \EEE $3$-bonded $x_i$,  define  additionally \EEE $r^3_i = |x_i-x_{i}^b|$.  
By the cosine rule we obtain  
  \begin{align}\label{xidistance}
\begin{split}
&|x_{i+1}-x_{i-1}|=  \ell(\theta_i,r_i^1,r_i^2), \ \ \  |x_i^b-x_{i-1}| = \ell(\theta_i^1,r_i^1,r_i^3), \ \ \ |x_i^b-x_{i+1}| =  \ell(\theta_i^2,r_i^2,r_i^3),
\end{split}
\end{align}
where we have used the shorthand
\begin{align}\label{eq: ell} 
 \ell(\theta,r_1,r_2) = \sqrt{r_1^2 + r_2^2 - 2r_1r_2 \cos(\theta)}.
 \end{align}
\EEE We want to prove that for every boundary atom $x_i$ its contribution to the energy can be controlled by the energy contribution in a modified configuration which has \EEE  the same angles but   unit bond lengths instead of $r_i^1, r_i^2$. Recall that \EEE  by $[\mathrm{viii}]$ we have for all $r \in (1,r_0]$ \EEE
\begin{align}\label{eq: strict inequality}
\frac{1}{2(r-1)}(V_{\mathrm{a}}(1)-V_{\mathrm{a}}(r))  <  V'_{\mathrm{r},+}(1). \EEE
\end{align}
 Let $\theta \in [\pi/3,5\pi/3]$, $r_1,r_2 \in [1,r_0]$. \EEE Then $\ell(\theta,r_1,r_2)\geq 1$, $V_{\mathrm{r},+}'(r)\leq 0$ for $r \geq 1$ and $V_{\mathrm{r},+}'$ is monotone increasing \EEE due to the convexity assumption in  $(\mathrm{v})$ \EEE on $V_{\mathrm{r}}$.  Moreover, we have  $\partial_{r_1} \ell(\theta,r_1,r_2)\leq 1$ by an elementary computation. (This can also be seen by a geometric argumentation: by changing the length of $r_1$, the length change of $\ell$ is always smaller or equal  to \EEE the length change of $r_1$,  equal only if $\theta\in\lbrace 0 , \pi \rbrace$.)  We therefore obtain  for $r, s \in [1,r_0]$ \EEE
\begin{align*}
\frac{1}{2(r-1)}(V_{\mathrm{a}}(1)-V_{\mathrm{a}}(r))  \le  V_{\mathrm{r},+}'(1) \EEE \leq V_{\mathrm{r},+}'(\ell(\theta,s,r_2)) \leq V_{\mathrm{r},+}'(\ell(\theta,s,r_2)) \partial_{ r_1  }\ell(\theta,s,r_2).
\end{align*}
Integrating this from $1$ to $r$ in the variable $s$ and using the fundamental theorem of calculus, we get \EEE
\begin{align*}
\frac{1}{2}\left( V_{\mathrm{a}}(1)-V_{\mathrm{a}}(r) \right)\leq \int_{1}^r V_{\mathrm{r},+}'(\ell(\theta,s,r_2)) \partial_{r_1} \EEE \ell(\theta,s,r_2) \mathrm{d}s = V_{\mathrm{r}}(\ell(\theta,r,r_2))-V_{\mathrm{r}}(\ell(\theta,1,r_2)).
\end{align*}
Applying this estimate in the second \EEE as well as in the third \EEE component of $\ell(\theta,r_1,r_2)$ with  $1$ \EEE and $r_2$ respectively, we derive  \EEE
\begin{align}\label{Estimateunitlength}
V_{\mathrm{a}}(1) + V_{\mathrm{r}}(\ell(\theta,1,1)) \leq \frac{1}{2}\left( V_{\mathrm{a}}(r_1)+V_{\mathrm{a}}(r_2)\right)  +V_{\mathrm{r}}(\ell(\theta,r_1,r_2)).
\end{align}
Note that, due to ${\rm [i]}$, \eqref{eq: neighborhood}, and Lemma \ref{lemma:bondangles}, for $2$-bonded $x_i $, we have $r_i^1,r_i^2 \in [1,r_0] $, $\theta_i \in [\pi/3,5\pi/3]$, and for $3$-bonded $x_i$ we have  $r_i^1,r_i^2, r_i^3 \in [1,r_0]$, and $\theta_i^1,\theta_i^2 \in [\pi/3,5\pi/3]$. Now for all  $2$-bonded $x_i$,  using (\ref{Estimateunitlength})  with $r_i^1$, $r_i^2$, and $\theta_i$, \EEE   we have
\begin{align}\label{Twobondedestimate}
\frac{1}{2}\left( V_{\mathrm{a}}(r_i^1) + V_{\mathrm{a}}(r_i^2)\right) + V_{\mathrm{r}}(\ell(\theta_i,r_i^1,r_i^2)) \geq  V_{\mathrm{a}}(1) + V_{\mathrm{r}}(\ell( \theta_i, \EEE 1,1)).
\end{align}
On the other hand, for  all \EEE $3$-bonded $x_i$,  using (\ref{Estimateunitlength}) twice  with $r_i^1$, $r_i^3 $, and $\theta_i^1$ and $r_i^2$, $r_i^3 $, and $\theta_i^2$\EEE, we have
\begin{align}\label{Threebondedestimate}
\begin{split}
&\frac{1}{2}\left( V_{\mathrm{a}}(r_i^1) +V_{\mathrm{a}}(r_i^3)\right) + V_{\mathrm{r}}(\ell(\theta_i^1,r_i^1,r_i^3)) \geq V_{\mathrm{a}}(1) + V_{\mathrm{r}}(\ell(\theta_i^1,1,1)), \\&
\frac{1}{2} \left(V_{\mathrm{a}}(r_i^2) +  V_{\mathrm{a}}(r_i^3)\right) +  V_{\mathrm{r}}(\ell(\theta_i^2,r_i^2,r_i^3)) \geq V_{\mathrm{a}}(1)  + V_{\mathrm{r}}(\ell(\theta_i^2,1,1)).
\end{split}
\end{align}
Using (\ref{xidistance}), (\ref{Twobondedestimate})-(\ref{Threebondedestimate}), ${\rm [ii]}$, and $V_{\rm r} \ge 0$   we obtain  by \eqref{eq: boundary energy}
 \begin{align}\label{boundaryestimate1}
\mathcal{E}^{\mathrm{bnd}}(C_n) &\geq \sum_{i=1}^d \Big(\frac{1}{2}\big( V_{\mathrm{a}}(|x_i-x_{i+1}|)+V_{\mathrm{a}}(|x_i-x_{i-1}|)\big)+V_{\mathrm{r}}(|x_{i+1}-x_{i-1}|) \Big)\notag \\&\quad+ \sum_{x_i \in I_3}\Big(  V_{\mathrm{a}}(|x_i-x_i^b|) + V_{\mathrm{r}}(|x_{i-1}-x_i^b|) + V_{\mathrm{r}}(|x_{i+1}-x_i^b|)  \Big)\notag \\&  = \EEE  \sum_{x_i \in I_2}  \Big(\frac{1}{2}\big(V_{\mathrm{a}}(r_i^1) + V_{\mathrm{a}}(r_i^2)\big) + V_{\mathrm{r}}(\ell(\theta_i,r_i^1,r_i^2)) \Big)  + \sum_{x_i \in I_3} V_{\rm r}(|x_{i+1}-x_{i-1}|) \EEE \notag \\&\quad+\sum_{x_i \in I_3} \Big(\frac{1}{2}\big(V_{\mathrm{a}}(r_i^1)+V_{\mathrm{a}}(r_i^2) + 2 V_{\mathrm{a}}(r_i^3)\big) + V_{\mathrm{r}}(\ell(\theta_i^1,r_i^1,r_i^3))+  V_{\mathrm{r}}(\ell(\theta_i^2,r_i^2,r_i^3)) \Big) \notag\\&\geq -(\#I_2+2\#I_3) + \sum_{i \in I_2} V_{\mathrm{r}}(\ell(\theta_i,1,1))  + \sum_{i \in I_3}  \sum_{j=1,2}  V_{\mathrm{r}}(\ell(\theta_i^j,1,1)).
\end{align}
 For later purposes, we remark that this inequality is strict if one bond has not unit length. This follows from the strict inequality in \eqref{eq: strict inequality}. \EEE

 Recall   $\delta= \frac{\#I_2+2\#I_3}{d}$ and note that $\ell(\theta,1,1) = 2\sin(\theta/2)$ by  \eqref{eq: ell}. Using \EEE  $\theta_i = \theta_i^1 + \theta_i^2$ for $x_i \in I_3$ and \eqref{boundaryestimate1} \EEE we obtain
\begin{align*} 
\mathcal{E}^{\mathrm{bnd}}(C_n) &\geq -\delta d + \sum_{x_i \in I_2} V_{\mathrm{r}}(\ell(\theta_i,1,1)) + \sum_{x_i \in I_3}  \sum_{j=1,2}V_{\mathrm{r}}(\ell(\theta_i^j,1,1))  \EEE \notag \\& = -\delta d + \sum_{x_i \in I_2} V_{\mathrm{r}}\Big(2\sin\Big(\frac{\theta_i}{2}\Big)\Big) + \sum_{x_i \in I_3}  \sum_{j=1,2} V_{\mathrm{r}}\Big(2\sin\Big(\frac{\theta_i^j}{2}\Big)\Big). \EEE  
\end{align*}
This yields the first inequality in \eqref{boundarylowerbound-new}. We note that this inequality is strict if one bond has not unit length since then \eqref{boundaryestimate1} is strict. 

 The second inequality in \eqref{boundarylowerbound-new} follows by a convexity argument: \EEE 
  \EEE since $V_{\mathrm{r}}$ is convex and non-increasing  by ${\rm [v]}$ \EEE and $\sin(\theta/2)$ is concave for $\theta \in [0,2\pi]$, we have  for all $\lambda \in [0,1]$ \EEE
\begin{align*}
\lambda V_{\mathrm{r}}\Big(2\sin\Big(\frac{\theta_1}{2}\Big)\Big)+ (1-\lambda)V_{\mathrm{r}}\Big(2\sin\Big(\frac{\theta_2}{2}\Big)\Big) &\geq V_{\mathrm{r}}\Big(\lambda 2\sin\Big(\frac{\theta_1}{2}\Big) + (1-\lambda) 2\sin\Big(\frac{\theta_2}{2}\Big)\Big) \\& \geq V_{\mathrm{r}}\Big( 2\sin\Big(\frac{\lambda\theta_1+(1-\lambda)\theta_2}{2}\Big)\Big).
\end{align*}
Hence, $\theta \mapsto V_{\mathrm{r}}\Big(2\sin\Big(\frac{\theta}{2}\Big)\Big)$ is a convex function for $\theta \in [0,2\pi]$. This together  with \EEE the fact that  $\# I_2 + 2 \# I_3 = \delta d$   and 
$$
\pi(d-2)=\sum_{i=1}^d \theta_i = \sum_{x_i \in I_2} \theta_i + \sum_{x_i \in I_3}(\theta_i^1+\theta_i^2)
$$
 yields the second inequality in  \eqref{boundarylowerbound-new}. This \EEE concludes the proof. \EEE
\end{proof}

\begin{remark}\label{rem: angle/bond}

 (i)  We briefly  explain assumption [viii] from a technical point of view.  The condition prevents two phenomena concerning surface relaxation:   the  first is the occurrence of more atoms on the boundary of the configuration than  one would  expect   for hexagonal configurations. This is  achieved \EEE by the first condition of  [viii], cf.\  \eqref{deltageqdelta0}.   The second phenomenon is the presence of elastically deformed boundary bonds. This is prevented by the second condition,  cf.\ \eqref{Estimateunitlength}-\eqref{Threebondedestimate}.

 (ii) \EEE At this stage, let us highlight the difference of our analysis to \cite{Mainini}. In \cite{Mainini}, an empirical angular potential is considered which penalizes deviations of the bond-angles from $\frac{2\pi}{3}$,  modeling covalent bonding for carbon nanostructures. In our model for ionic compounds, the energy contribution can also be expressed in terms of the bond-angle.  More precisely, by \eqref{eq: ell} we have  energy contributions of the form 
$$V_{\rm r} \big( \ell(\theta,r_1,r_2) \big) = V_{\rm r}\Big(\sqrt{r_1^2 + r_2^2 - 2r_1r_2 \cos(\theta)}\Big).$$
In view of assumption [vii], only lengths $\ell(\theta,r_1,r_2) < \sqrt{3}$ are penalized. In particular, as $r_1,r_2 \ge 1$, bond-angles $\theta \ge \frac{2\pi}{3}$ never penalize the energy and, if the bond lengths $r_1$ and $r_2$ exceed one, also bond-angles  less  than $\frac{2\pi}{3}$ might not penalize the energy. 

In principle, this implies more geometric flexibility of ground-state configurations with respect to \cite{Mainini}. This  calls for refined arguments for controlling the boundary energy and, in particular, for characterizing the ground-state geometries in Section \ref{sec: characterization}. Let us highlight that, in spite of the weaker penalization of bond-angles deviating from $\frac{2\pi}{3}$, it is still possible to prove that ground states assemble themselves  in the hexagonal lattice. 
\end{remark}

 We recall the definition of $b$ in \eqref{eq:b}. Recall also \EEE the excess of edges $\eta = \sum_{j\geq 6} (j-6)  f_j $ introduced in \eqref{Excess}, where $f_j$ denotes the number of  elementary polygons \EEE with $j$ vertices in the bond graph.  \EEE By Lemma \ref{LemmaPolygon} for any ground state we have that $\eta \in 2\mathbb{N}$. Moreover, \EEE   $\eta=0$ if only if the bond graph consists of hexagons only. We now use this notion to estimate the cardinality of the  bulk. \EEE  It will turn out useful in Section \ref{sec: characterization} to exclude the existence \EEE of other  elementary polygons \EEE than hexagons. \EEE

\begin{lemma}[Cardinality of the bulk\EEE]\label{LemmaBoundaryestimate} Suppose that $C_n$ is a  connected ground state \EEE and that it does not contain any acyclic bonds. Then
\begin{align*}
n-d   = 4b \EEE + 6+\eta - 5n.
\end{align*}
\end{lemma}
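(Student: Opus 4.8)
The plan is to derive the identity purely from Euler's formula for the planar bond graph, combined with a double count of edge--face incidences; no geometry of the potentials is needed beyond the structural facts already established. By hypothesis $C_n$ is connected and contains no acyclic bonds, and the bond graph is planar as observed in Section \ref{sec: prelimi}. Its vertices are the $n$ atoms and its edges are the $b$ bonds of \eqref{eq:b}. Since there are neither bridges nor flags, the outer boundary is exactly the maximal polygon, a simple cycle on the $d$ boundary atoms $\{x_1,\dots,x_d\}$, hence bordered by precisely $d$ edges. Every bounded face is an elementary polygon, and by Lemma \ref{LemmaPolygon} each has at least six edges, so $f_j=0$ for $j<6$. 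Writing $f=\sum_{j\ge 6}f_j$ for the total number of bounded faces, Euler's formula $n-b+(f+1)=2$ yields
\begin{align*}
f = b - n + 1.
\end{align*}

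Next I would count edge--face incidences. Each bond borders exactly two faces of the planar graph, so summing the number of bounding edges over all faces gives $2b$. The unbounded face contributes exactly $d$ incidences, while each bounded $j$-gon contributes $j$; therefore $\sum_{j\ge 6} j\,f_j = 2b-d$. Combining this with the definition \eqref{Excess} of the excess of edges and the value of $f$ above,
\begin{align*}
\eta = \sum_{j\ge 6}(j-6)f_j = \sum_{j\ge 6} j f_j - 6f = (2b-d) - 6(b-n+1) = 6n - d - 4b - 6.
\end{align*}
Rearranging $6n-d = 4b+6+\eta$ produces the claimed identity $n-d = 4b+6+\eta-5n$.

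The computation is routine; the only genuinely delicate point is the structural input that the outer boundary is a \emph{simple} cycle with exactly $d$ edges. This is where the hypotheses enter: the absence of acyclic bonds rules out an edge being traversed twice by the boundary walk, and Lemma \ref{LemmaNeighborhood} (degree at most three) prevents two cycles from meeting at a single cut vertex, so no boundary atom is visited more than once. These are precisely the facts underlying the definition of the maximal polygon recalled in Section \ref{sec: prelimi}, so in the write-up I would simply invoke that definition rather than reprove it. I would also take care to record that every bounded face is an elementary polygon counted by some $f_j$, ensuring that the incidence count and the excess $\eta$ refer to the same collection of faces; with that in place the two displayed computations combine directly into the assertion.
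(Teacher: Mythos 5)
Your proposal is correct and follows essentially the same route as the paper: Euler's formula for the planar bond graph combined with the edge--face incidence count $\sum_{j\ge 6} j f_j = 2b-d$ (the paper phrases this as boundary bonds counted once and interior bonds twice), then elimination of $f$ via the definition of $\eta$. Your additional remarks on why the outer boundary is a simple cycle (no acyclic bonds plus the degree bound of Lemma \ref{LemmaNeighborhood} excluding cut vertices) make explicit a structural point the paper absorbs into its definition of the maximal polygon, but the argument is the same.
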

\begin{proof}  By $f_j$ we denote the number of  elementary polygons \EEE in the bond graph with $j$ vertices and set $f = \sum_{j \ge 3} f_j$.  Note that $f$ is the number of faces of the bond graph (omitting the exterior face). \EEE We have
\begin{align*}
 \sum_{j \ge 3} jf_j \EEE  = 2  b\EEE -d,
\end{align*} 
since by the summation on the left all bonds  contained in the maximal polygon are counted only once whereas all other bonds \EEE are counted twice. From Lemma \ref{LemmaPolygon}   and the definition of $\eta$ \EEE we obtain
\begin{align*}
6  f  = \EEE  2  b \EEE -d-\eta.
\end{align*}
Using this together with Euler's formula $n-  b + f \EEE  =1$ (omitting the exterior face) we get
\begin{align*}
n-d  = 4b \EEE +6+\eta - 5n.
\end{align*}
\end{proof}

\section{Characterization of the  ground-state  energy}\label{sec: characterization-energy}

This section is devoted to the proof of Theorem \ref{TheoremEnergyGroundstates}. We only need to provide   a lower bound on the ground-state energy  since the  upper bound has already been obtained by an explicit construction, see Proposition \ref{PropositionDaisy}.

\EEE

We state two algebraic lemmas that will be used in the sequel.

\begin{lemma}\label{LemmaSquareroot} 
 Let $j,n,m \in \mathbb{N}$ and let \EEE $x \in \mathbb{R} $ satisfy
\begin{align*}
 \frac{m}{4} - \frac{5}{4}n \EEE  \ge x \geq -\frac{3}{2}n + j +\sqrt{\frac{3}{2}(-4x-5n+m)}.
\end{align*}
Then $x \geq -\frac{3}{2}n +j -3 + \sqrt{\frac{3}{2}(-4j+m+n+6)}  $.
\end{lemma}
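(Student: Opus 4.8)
The plan is to reduce everything to a single upward-opening quadratic via the affine substitution $a := x + \tfrac{3}{2}n - j$, which absorbs the two ``bulk'' terms $-\tfrac32 n + j$ that appear on both sides of the hypotheses and leaves a clean quadratic with constant coefficients in front of $a$.

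First I would record that the first hypothesis $x \le \tfrac{m}{4} - \tfrac{5}{4}n$ is precisely $-4x - 5n + m \ge 0$, so the square root occurring in the second hypothesis is well defined; this nonnegativity is the only role played by the upper bound on $x$. Under the substitution one computes $-4x - 5n + m = -4a + n - 4j + m$, so the second hypothesis reads
\[
a \ge \sqrt{\tfrac{3}{2}(-4a + n - 4j + m)}.
\]
Since the right-hand side is a nonnegative square root, this forces $a \ge 0$; in particular $n - 4j + m \ge 4a \ge 0$, which guarantees that all radicands appearing below are positive.

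Now both sides of the displayed inequality are nonnegative, so squaring is legitimate and yields $a^2 \ge \tfrac{3}{2}(-4a + n - 4j + m)$, that is, the quadratic inequality
\[
a^2 + 6a - \tfrac{3}{2}(n - 4j + m) \ge 0.
\]
This parabola opens upward with roots $-3 \pm \sqrt{9 + \tfrac{3}{2}(n - 4j + m)} = -3 \pm \sqrt{\tfrac{3}{2}(-4j + m + n + 6)}$. The smaller root is strictly negative, so together with $a \ge 0$ the inequality forces $a$ to the right of the larger root, giving $a \ge -3 + \sqrt{\tfrac{3}{2}(-4j + m + n + 6)}$. Undoing the substitution through $x = a - \tfrac{3}{2}n + j$ produces exactly the claimed bound.

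I do not expect a genuine obstacle here: this is a routine manipulation once the substitution is chosen. The only points needing a little care are the bookkeeping ensuring the radicand $-4x - 5n + m$ is nonnegative (supplied by the first hypothesis) and the verification that squaring is valid (both sides nonnegative, which follows from $a \ge 0$). The one step where a sign error would be fatal is selecting the correct (larger) root of the quadratic rather than the spurious negative one; it is precisely the constraint $a \ge 0$ that rules out the negative root and makes the final bound match the target expression.
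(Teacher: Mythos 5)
Your proof is correct and is in substance the same elementary argument as the paper's: the paper simply observes that $x \mapsto x + \frac{3}{2}n - j - \sqrt{\frac{3}{2}(-4x-5n+m)}$ is strictly increasing on the domain cut out by the first hypothesis and vanishes at the claimed value, which is exactly the larger root of your quadratic in $a = x + \frac{3}{2}n - j$. Your version derives that root by squaring rather than verifying it by substitution, and your bookkeeping (nonnegativity of the radicand, $a \ge 0$, discarding the negative root) makes explicit precisely what the paper's one-line monotonicity claim leaves implicit.
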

\begin{proof} The proof is elementary: we note \EEE that the function
\begin{align*}
x \mapsto  x +\frac{3}{2}n - j -\sqrt{\frac{3}{2}(-4x-5n+m)}
\end{align*}
is  strictly increasing \EEE and vanishes for $x =-\frac{3}{2}n +j -3 + \sqrt{\frac{3}{2}(-4j+m+n+6)}$.
\end{proof}

We use the following properties of the function $\beta$ which has been defined in \eqref{eq: beta definition}. \EEE

\begin{lemma}\label{LemmaPropertiesbeta} The function $\beta : \mathbb{N} \to \mathbb{R}$ satisfies
\begin{itemize}
\item[1)] $\lfloor \beta(n-1)\rfloor +1 \leq \lfloor \beta(n)\rfloor$.
\item[2)] $\lfloor \beta(m)\rfloor + \lfloor \beta(n-m)\rfloor +1 \leq \lfloor \beta(n)\rfloor$ for all $n\geq 12$, $n \geq m \geq 6$ and equality holds if and only if $n=12$ and $m=6$.
\item[3)]  $\lfloor \beta(n)\rfloor \geq \lfloor \beta(n-k)\rfloor +2 + k$ for all  $n \ge 13$ and $ n \ge  k  \ge 6$. \EEE
\item[4)]  $\lfloor \beta(n)\rfloor \geq \lfloor \beta(n-5)\rfloor + 7$ for all $n \geq 13$ except for $n \in \lbrace 15,18,21,29\rbrace$. \EEE
\end{itemize}
\end{lemma}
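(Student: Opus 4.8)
The plan is to compare the integer quantity $\lfloor\beta(\cdot)\rfloor$ with the smooth function $\beta$ of \eqref{eq: beta definition}, exploiting that $\beta$ is strictly increasing and convex and that $\sqrt n-\sqrt{n-k}=k/(\sqrt n+\sqrt{n-k})$ is strictly decreasing in $n$. The two elementary bounds $\lfloor t\rfloor\le t$ and $\lfloor t\rfloor>t-1$ turn statements about floors into inequalities for $\beta$ itself; the finitely many values of $n$ for which the resulting continuous estimate is too weak are then settled by inspecting a short table of $\lfloor\beta(n)\rfloor$. Throughout I will use that $\beta(n)-\beta(n-1)=\tfrac32-\sqrt{\tfrac32}(\sqrt n-\sqrt{n-1})<\tfrac32$, so that the increments of $\lfloor\beta(\cdot)\rfloor$ never exceed $2$.

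First I dispatch 1). For $n\ge 3$ one has $\sqrt n-\sqrt{n-1}=(\sqrt n+\sqrt{n-1})^{-1}\le 6^{-1/2}$, hence $\beta(n)-\beta(n-1)=\tfrac32-\sqrt{\tfrac32}(\sqrt n-\sqrt{n-1})\ge 1$, so that $\lfloor\beta(n)\rfloor\ge\beta(n-1)+1\ge\lfloor\beta(n-1)\rfloor+1$; the single case $n=2$ is checked by hand. In particular $\lfloor\beta(\cdot)\rfloor$ is strictly increasing with increments in $\{1,2\}$, a fact I use freely below.

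For 2) I write $\lfloor\beta(m)\rfloor+\lfloor\beta(n-m)\rfloor\le\beta(m)+\beta(n-m)$ and $\lfloor\beta(n)\rfloor>\beta(n)-1$, so it suffices to prove $\beta(n)-\beta(m)-\beta(n-m)\ge 2$. Since the linear parts cancel, this is exactly $F(m,n):=\sqrt{\tfrac32 m}+\sqrt{\tfrac32(n-m)}-\sqrt{\tfrac32 n}\ge 2$ for $6\le m\le n-6$. As $m\mapsto\sqrt m+\sqrt{n-m}$ is concave, $F$ is minimized at $m=6$, where $F(6,n)=3+\sqrt{\tfrac32(n-6)}-\sqrt{\tfrac32 n}$ is increasing in $n$; one checks $F(6,17)>2$, so the sufficient inequality holds for all $n\ge 17$. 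The finitely many pairs with $12\le n\le 16$ are verified directly, and this is also where one reads off that equality occurs precisely at $(n,m)=(12,6)$.

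The items 3) and 4) are the crux, and both are handled by the same fractional-part refinement. For 3) I first treat $k=6$: the claim $\lfloor\beta(n)\rfloor\ge\lfloor\beta(n-6)\rfloor+8$ follows once $\beta(n)\ge\lfloor\beta(n-6)\rfloor+8$, i.e. $\{\beta(n-6)\}\ge\sqrt{\tfrac32}(\sqrt n-\sqrt{n-6})-1$, whose right-hand side is $\le 0$ for $n\ge 17$; thus only $n\in\{13,14,15,16\}$ require an explicit check of the fractional part. The general case $k\ge 6$ then follows by adding to the $k=6$ estimate the bound $\lfloor\beta(n-6)\rfloor-\lfloor\beta(n-k)\rfloor\ge k-6$, obtained by iterating 1) over the indices $j=n-k+1,\dots,n-6$ (all $\ge 2$ as soon as $k\le n-1$), the remaining edge case $k=n$ being settled directly from $\lfloor\beta(n)\rfloor\ge n+2$ for $n\ge 13$. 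For 4) the analogous reduction yields the sufficient condition $\{\beta(n-5)\}\ge\sqrt{\tfrac32}(\sqrt n-\sqrt{n-5})-\tfrac12$, whose right-hand side is $\le 0$ once $\sqrt n+\sqrt{n-5}\ge 5\sqrt 6$, that is for $n\ge 41$; hence the statement is automatic for $n\ge 41$, while the range $13\le n\le 40$ is decided by direct computation. The main obstacle is therefore not conceptual but arithmetical: producing and checking the finite table for 4), where the four exceptional values $n\in\{15,18,21,29\}$ — precisely the length-five windows carrying only one double increment of $\lfloor\beta\rfloor$ — are identified, together with the smaller tables for 2) and 3) and the separate treatment of the degenerate low-index endpoints in the iteration for 3).
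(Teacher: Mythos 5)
Your proof is correct and takes essentially the same route as the paper: items 1)--2) rest on the monotonicity/convexity of $\beta$ (which the paper simply delegates to \cite[Lemmas 6.4, 6.5]{Mainini}), and for 3)--4) your fractional-part condition with nonpositive right-hand side is exactly equivalent to the paper's continuous inequality \eqref{eq: help}, with the identical thresholds ($n\ge 17$ for $k=6$, $n\ge 41$ for $k=5$), the same finite table checks on $13\le n\le 16$ resp.\ $13\le n\le 40$, and the same exceptional set $\{15,18,21,29\}$; your reduction of $k\ge 7$ to the $k=6$ case by iterating 1) (with the $k=n$ endpoint done separately) is a harmless variant of the paper's direct verification of \eqref{eq: help} for $n\ge 13$, $k\ge 7$. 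One cosmetic repair: in 1) the chain should read $\lfloor\beta(n)\rfloor\ge\lfloor\beta(n-1)+1\rfloor=\lfloor\beta(n-1)\rfloor+1$ rather than $\lfloor\beta(n)\rfloor\ge\beta(n-1)+1$, since the latter middle inequality can fail (e.g.\ at $n=3$) even though your conclusion is correct.
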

\begin{proof}
The proof of 1) and 2) is elementary \EEE and can be found in \cite[Lemma 6.4, 6.5]{Mainini}.  It relies on monotonicity and convexity properties of $\beta$.  As a preparation for  3) and 4), we observe that for $n \geq 41$, $k = 5$ or for $n \ge 17$, $k=6$ or for $n \ge 13$, $n \ge k \ge 7$ one has 
\begin{align}\label{eq: help}
\frac{3}{2}n -\sqrt{\frac{3}{2}n}\geq \frac{3}{2}(n-k) -\sqrt{\frac{3}{2}(n-k)} + 2+ k.
\end{align}
Indeed, after some manipulations, we see that this  is equivalent to $\sqrt{\frac{3}{2}n} +\sqrt{\frac{3}{2}(n- k ) \EEE } \geq  3k/(k-4)\EEE$.  The latter holds true  for  $n \geq 41$,  $k = 5$ or  for $n \ge 17$, $k=6$ or for $n \ge 13$, $n \ge k \ge 7$. 

One can check directly that $\lfloor \beta(n)\rfloor = \lfloor \beta(n-6)\rfloor +8$ for $n =13,\ldots,16$,  see Table \ref{table2}. \EEE This together with  \eqref{eq: help} yields 3). Property 4) follows from \eqref{eq: help} and an explicit computation for the cases  $13 \le n \le 40$, see Table \ref{table}. \EEE
\end{proof}

   \begin{table}[h]\centering
\begin{tabular}{|c||c|c|c|c|c|c|c|c|c|c|c|c|c|c|c|c|c|c|c|c|c|c|c|c|c|c|c|}
\hline
$n$  & 2 & 3 & 4 & 5 & 6 & 7 & 8 & 9 & 10 & 11 & 12 & 13 & 14 &15     \\  \hline 
$\lfloor\beta(n)\rfloor$ & 1 & 2 & 3&4&6&7&8&9&11&12&13&15&16&17\\
\hline 
\hline
$n$ & 16 &17 &18 &19 &20 &21 & 22 & 23 & 24 &25 & 26 &27 &28 &29 \\
\hline
$\lfloor\beta(n)\rfloor$ &19 &20&21&23&24&25&27&28&30&31&32&34&35&36 \\
\hline 
  \end{tabular}
\vspace{8mm}
\caption{The function \EEE $\lfloor\beta(n)\rfloor$ for $2 \leq n \leq 29$. The table  together with Theorem \ref{TheoremEnergyGroundstates}  \EEE can be used to see that the configurations in Fig.~\ref{FigureFlexible}, Fig.~\ref{FigureBridge}, and Fig.~\ref{FigureOctagons} are ground states.\EEE}  \label{table2}
\end{table}

\begin{table}[h]\centering
\begin{tabular}{|c||c|c|c|c|c|c|c|c|c|c|c|c|c|c|c|c|c|c|c|c|c|c|c|c|c|c|}
\hline
$n$    & 13 & 14 &15 & 16 &17 &18 &19 &20 &21 & 22 & 23 & 24 &25 & 26  \\  \hline 
$\gamma(n)$   & 7&7&6&7&7&6&7&7&6&7&7&7&7&7\\
\hline 
\end{tabular}
\vspace{0.8cm}
\hspace{0.02cm}
\vspace{0.3cm}
\begin{tabular}{|c||c|c|c|c|c|c|c|c|c|c|c|c|c|c|c|c|c|c|c|c|c|c|c|c|c|c|}
\hline
$n$  &27 &28 &29 &30 &31 & 32 & 33 & 34 &35 & 36 &37 &38 &39 &40  \\
\hline
$\gamma(n)$  &7&7&6&7&7&7&7&7&7&7&7&7&7&7  \\
\hline 
  \end{tabular}
\vspace{-3mm}
\caption{The function $\gamma(n) := \lfloor \beta(n) \rfloor - \lfloor \beta(n-5)\rfloor$ for $13 \le n \le 40$.\EEE}  \label{table}
\end{table}

Before we proceed with the   proof of Theorem \ref{TheoremEnergyGroundstates}, we will consider the cases \EEE $1 \le n \le 6$ which will serve as the induction base. \EEE

 \begin{lemma}[Cases  $1 \le n \le 6$]\label{lemma: small n}
 For $1 \le n \le 6$ every ground state $C_n$ is connected and satisfies $\mathcal{E}(C_n) = -b = \lfloor \beta(n) \rfloor$. 
 \end{lemma}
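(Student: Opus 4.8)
The plan is to combine the already-established lower bound $\mathcal{E}(C_n) \ge -b$ from Remark~\ref{rem: repulsionsfree} with the upper bound $\mathcal{E}(C_n) \le -\lfloor \beta(n)\rfloor$ from Proposition~\ref{PropositionDaisy}, so that $-b \le \mathcal{E}(C_n) \le -\lfloor\beta(n)\rfloor$ and hence $b \ge \lfloor\beta(n)\rfloor$. It then suffices to prove the reverse bond count $b \le \lfloor\beta(n)\rfloor$: the two inequalities together force $b = \lfloor\beta(n)\rfloor$ and $\mathcal{E}(C_n) = -b = -\lfloor\beta(n)\rfloor$. The combinatorial input I would use is that a ground state is bipartite with alternating charge distribution (Lemma~\ref{LemmaNeighborhood}) and that every simple cycle in its bond graph has at least six edges (Lemma~\ref{LemmaPolygon}). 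I would first record the elementary numerology $\lfloor\beta(n)\rfloor = n-1$ for $1 \le n \le 5$ and $\lfloor\beta(6)\rfloor = 6$ (cf.\ Table~\ref{table2}), and then split into the two regimes accordingly.

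For $1 \le n \le 5$, a cycle would need at least six vertices, so the bond graph can contain no cycle; that is, it is a forest, whence $b \le n-1 = \lfloor\beta(n)\rfloor$. The sandwich then gives $b = n-1$, so the forest on $n$ vertices has exactly $n-1$ edges and is therefore a spanning tree. In particular $C_n$ is connected, and $\mathcal{E}(C_n) = -b = -\lfloor\beta(n)\rfloor$. The endpoint $n=1$ is checked directly: there are no bonds, $b = 0 = \lfloor\beta(1)\rfloor$, and connectivity is vacuous.

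For $n = 6$, the sandwich yields $b \ge 6$, but a forest on six vertices carries at most five edges, so the bond graph must contain a cycle. By Lemma~\ref{LemmaPolygon} this cycle has at least six edges, and since only six vertices are available it is a hexagon using all of them. Any further bond would be a chord of this hexagon; because bonds join only atoms of opposite charge, such a chord would connect two antipodal vertices and thereby close a simple $4$-cycle, contradicting Lemma~\ref{LemmaPolygon}. Hence there are no chords, $b = 6 = \lfloor\beta(6)\rfloor$, the bond graph is exactly the spanning hexagon (so $C_6$ is connected), and $\mathcal{E}(C_6) = -6 = -\lfloor\beta(6)\rfloor$.

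Since these are base cases, most of the argument is bookkeeping with the structural lemmas already in hand, and I expect no serious obstacle. The only genuinely nontrivial point is the exclusion of chords in the hexagonal case $n=6$, which hinges on the absence of squares (Lemma~\ref{LemmaPolygon}) together with the bipartite structure that restricts chords to antipodal pairs. I would finally double-check the exactness of the sandwich by verifying the identities $\lfloor\beta(n)\rfloor = n-1$ on $\{1,\dots,5\}$ and $\lfloor\beta(6)\rfloor = 6$ against Table~\ref{table2}, so that equality in $\mathcal{E}(C_n)=-b$ is pinned down in every case.
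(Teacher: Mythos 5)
Your proof is correct, and its overall skeleton matches the paper's: both split off $1\le n\le 5$ from $n=6$, both use Lemma \ref{LemmaPolygon} to forbid cycles of fewer than six edges (giving the forest bound $b\le n-1$ and, with $b=n-1=\lfloor\beta(n)\rfloor$, connectedness via the spanning tree), and both close with a matching competitor (you cite Proposition \ref{PropositionDaisy} uniformly, while the paper constructs the chain, resp.\ the regular hexagon, in place --- a cosmetic difference). The one genuine divergence is the $n=6$ step: the paper bounds $b\le 6$ \emph{before} invoking any competitor, by asserting that the number of bounded faces satisfies $f\le 1$ (every elementary polygon has at least six edges) and applying Euler's inequality $6-b+f\ge 1$; you instead extract $b\ge 6$ from the energy sandwich first, conclude that the bond graph contains a hexagonal cycle through all six atoms, and then exclude chords by combining the alternating charge distribution of Lemma \ref{LemmaNeighborhood} (distance-two vertices on the cycle share a charge, so only antipodal chords could be bonds) with the absence of squares from Lemma \ref{LemmaPolygon}, since an antipodal chord closes a $4$-cycle. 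Your route is slightly longer but buys more: it pins down the bond graph of every $6$-atom ground state as exactly a spanning hexagon, and it replaces the paper's rather terse claim $f\le 1$ by a self-contained combinatorial argument; the paper's Euler argument, conversely, avoids any chord case analysis. One point worth making explicit in your write-up is that Lemmas \ref{LemmaNeighborhood} and \ref{LemmaPolygon} are established in Section 3 for arbitrary $n$ without induction, so invoking them in these base cases is legitimate and no circularity arises.
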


 \begin{proof}
 Let $n \leq 5$. By Lemma \ref{LemmaPolygon} we have that the bond graph does not contain any  polygon \EEE with less  than or equal to $5$ edges. \EEE Hence, the bond graph is cycle free and $b \leq n-1$.  This provides  the lower bound $
\mathcal{E}(C_n) \geq -(n-1)$ by   Remark \ref{rem: repulsionsfree}. \EEE Clearly, one can construct configurations with $n$ atoms and $n-1$ bonds having alternating charge distribution.  Note also that $b = n-1 =  \lfloor \beta(n) \rfloor\EEE$ is only possible if the configuration is connected.  

 In the case $n=6$, note that the number of polygons $f$ satisfies $f \le 1$ since  by Lemma \ref{LemmaPolygon} we have that every polygon  has at least $6$ edges. \EEE By Euler's formula we get $6-b+f \geq 1$ where the inequality is due to the fact that we may have more than one connected component. This implies  $b \leq 6$ and thus $\mathcal{E}(C_6) \geq -b$ by   Remark \ref{rem: repulsionsfree}. Exactly a regular hexagon with alternating charge distribution and unit bond length yields a configuration with energy equal to $-6 = -\lfloor \beta(6) \rfloor$. This concludes the proof. \EEE
  \end{proof}

 We now proceed with the proof of Theorem \ref{TheoremEnergyGroundstates}. We follow the strategy devised in \cite[Theorem 6.1]{Mainini} with the adaptions needed due to the  presence of different atomic types and repulsive potentials \EEE instead of angle potentials. In contrast to  \cite{Mainini}, however, we split the proof of the characterization of the ground-state energy and the characterization of the ground states. Indeed, the latter is more involved in our setting and the investigation is deferred to Section \ref{sec: characterization}.

\begin{proof}[Proof of  Theorem \ref{TheoremEnergyGroundstates}] We start by noting that every ground state $C_n$ has alternating charge distribution by  Lemma \ref{LemmaNeighborhood}.  \EEE  By Proposition \ref{PropositionDaisy} we have that the ground-state energy satisfies
\begin{align}\label{Groundstateineq1}
\mathcal{E}(C_n) \leq -\lfloor \beta(n)\rfloor.
\end{align}
 We proceed by induction. Suppose that the statement has been proven for all $m < n$ (for $1 \le m \le 6$ see Lemma \ref{lemma: small n}). We first show connectedness of the ground state (Claim 1) and then the energy equality (Claim 2).  \EEE  
 
\textbf{Claim 1:}  $C_n$ is connected.

 \noindent \emph{Proof of Claim 1}: \EEE Assume by contradiction that $C_n$ was \EEE not connected, i.e., \EEE  $C_n$ consists of  two or more connected components.  Let $C'_m$ and $C'_{n-m}$ be two sub-configurations consisting of $m$ and $n-m$ atoms, respectively, which  do not have any bonds between them.  \EEE  If $m\geq 6, n\geq 12$, we can apply  the induction hypothesis,   Lemma \ref{LemmaPropertiesbeta} $2)$, and $V_{\rm r} \ge 0$ \EEE to get
\begin{align*}
\mathcal{E}(C_n) \ge \mathcal{E}(C_m') + \mathcal{E}(C_{n-m}') \EEE \geq -\lfloor\beta(n-m)\rfloor -\lfloor\beta(m)\rfloor >-\lfloor \beta(n)\rfloor.
\end{align*}
 If \EEE $m< 6$,  we can apply Lemma \ref{lemma: small n} \EEE and  Lemma \ref{LemmaPropertiesbeta} $1)$ iteratively $m$ times \EEE  to get
\begin{align*}
\mathcal{E}(C_n) \geq -\lfloor\beta(n-m)\rfloor -  \lfloor\beta(m)\rfloor  = -\lfloor\beta(n-m)\rfloor  - (m-1)\EEE  > -\lfloor \beta(n)\rfloor.
\end{align*}
The case  $n<12$ is already included in this argument since  \EEE at least one connected component consists of less than $6$ \EEE atoms.  In \EEE view of  (\ref{Groundstateineq1}), we obtain a contradiction to the fact that $C_n$ is a ground state.

\textbf{Claim 2:} Energy equality $\mathcal{E}(C_n) = -b = -\lfloor \beta(n) \rfloor$.

 \noindent \emph{Proof of Claim 2}: \EEE We divide the proof into three steps. We first treat the case that $C_n$ contains acyclic bonds (Step 1). Afterwards, we consider only configurations $C_n$ without acyclic bonds and show $\mathcal{E}(C_n) = -b$ (Step 2) and $\mathcal{E}(C_n) = -\lfloor \beta(n) \rfloor$ (Step 3). 

\emph{Step 1: $C_n$ contains acyclic bonds:}  If there exist flags, we can find an atom $x_i$ such that removing $x_i$ removes exactly one flag.  We can count the energy contribution of  this  flag by at least $-1$ and we estimate the energy of the rest of the configuration   by induction. By Lemma \ref{LemmaPropertiesbeta} 1) we get  
\begin{align*}
\mathcal{E}(C_n) \ge  -1 + \mathcal{E}(C_n \setminus \lbrace (x_i,q_i)\rbrace) \EEE  \geq -1 -\lfloor \beta(n-1)\rfloor \geq -\lfloor \beta(n)\rfloor.
\end{align*}
 Equality also shows that $C_n \setminus \lbrace (x_i,q_i)\rbrace$ has $\lfloor \beta(n-1) \rfloor$ bonds by induction and $C_n$ has  $\lfloor \beta(n-1) \rfloor+1 = \lfloor \beta(n) \rfloor$ bonds. \EEE

 We now suppose that a bridge exists. \EEE  Consider the two sub-configurations $C'_m$ and $C'_{n-m}$ which are connected by the bridge.  By the definition of   bridges we have that both $C'_m$ and $C'_{n-m}$ contain at least one simple cycle  and therefore, by Lemma \ref{LemmaPolygon}, $m,n-m \geq 6$. The energy contribution of the bridge is greater or equal to $-1$. Using the induction assumption and Lemma \ref{LemmaPropertiesbeta} 2) we get
\begin{align*}
\mathcal{E}(C_n)  \ge \mathcal{E}(C'_m) + \mathcal{E}(C'_{n-m})  - 1 \EEE \ge  -\lfloor \beta(m)\rfloor - \lfloor \beta(n-m)\rfloor -1 \geq -\lfloor \beta(n)\rfloor.
\end{align*}
 As before, equality also implies that $C_n$ has $\lfloor \beta(m)\rfloor + \lfloor \beta(n-m)\rfloor +1 = \lfloor \beta(n) \rfloor$ bonds.

\emph{Step 2: $\mathcal{E}(C_n) = -b$ for connected $C_n$ with no acyclic bonds:} Suppose by contradiction that \EEE the statement was false. Then,  by Remark \ref{rem: repulsionsfree}, \EEE   there exist  $x_1,x_2 \in X_n$ such that $q_1=q_2$ and $|x_1-x_2|<\sqrt{3}$ or $q_1 =-q_2$ and $1<|x_1-x_2|\leq r_0$. If \EEE  $x_1 \in \partial X_n$ or $x_2 \in \partial X_n$, by \EEE using Lemma \ref{LemmaBoundaryEnergy} and Remark \ref{rem: bdy}(ii) \EEE we have the strict inequality \EEE
\begin{align*}
\mathcal{E}^{\mathrm{bnd}}(C_n) > -\frac{3}{2}d + 3
\end{align*}
and by  induction assumption   we have
\begin{align*}
\mathcal{E}^{\mathrm{bulk}}(C_n) \geq -\lfloor \beta(n-d)\rfloor.
\end{align*}
On the other hand, if we have   $x_1, x_2 \notin \partial X_n$, we calculate by Lemma \ref{LemmaBoundaryEnergy} and induction \EEE
\begin{align*}
\mathcal{E}^{\mathrm{bnd}}(C_n) \geq -\frac{3}{2}d + 3, \ \ \ \ \ \ \  \mathcal{E}^{\mathrm{bulk}}(C_n) > -\lfloor \beta(n-d)\rfloor.
\end{align*}
 Here, the estimate for the bulk part is strict. Indeed, since in this case $C_n^{\rm bulk}$, which consists of $n-d$ particles, \EEE  is not repulsion-free or has bonds  longer than 1, \EEE  by the induction assumption  it cannot be a ground state, see Remark \ref{rem: repulsionsfree}. \EEE
In both cases it holds that
\begin{align*}
\mathcal{E}(C_n) > -\left\lfloor \frac{3}{2}n -\sqrt{\frac{3}{2}(n-d)}\right\rfloor +3.
\end{align*}
Since the right hand side is an integer, we obtain
\begin{align}\label{Integerinequality1}
-(\lfloor -\mathcal{E}(C_n)\rfloor +1) \geq -\frac{3}{2}n +\sqrt{\frac{3}{2}(n-d)} + 3.
\end{align}
 Recall that we assumed $\mathcal{E}(C_n) > -b$ by contradiction, \EEE which   implies  $-(\lfloor-\mathcal{E}(C_n)\rfloor+1) \geq -b$. \EEE Now by Lemma \ref{LemmaBoundaryestimate}  we obtain
\begin{align*}
n-d \geq 4(\lfloor-\mathcal{E}(C_n)\rfloor+1) + 6 -5n,
\end{align*}
 where we used that $\eta \ge 0$. \EEE Using the above inequality and (\ref{Integerinequality1}) we obtain
\begin{align*}
-(\lfloor -\mathcal{E}(C_n)\rfloor +1) \geq -\frac{3}{2}n +\sqrt{\frac{3}{2}( 4 \EEE (\lfloor-\mathcal{E}(C_n)\rfloor+1) + 6 -5n)} + 3.
\end{align*}
Now we can use Lemma \ref{LemmaSquareroot} with  $j=3$, $m=6$, and $x = -(\lfloor -\mathcal{E}(C_n)\rfloor +1)$ \EEE  to obtain
\begin{align*}
-(\lfloor -\mathcal{E}(C_n)\rfloor +1) \geq -\frac{3}{2}n + \sqrt{\frac{3}{2}n}.
\end{align*}
The last inequality implies $\mathcal{E}(C_n) >  -\lfloor \beta(n) \rfloor$ \EEE  contradicting \eqref{Groundstateineq1}.

 \emph{Step 3: $\mathcal{E}(C_n) = -\lfloor \beta(n) \rfloor$ for connected $C_n$ with no acyclic bonds:} \EEE
  Due to (\ref{Groundstateineq1}), it suffices to prove $\mathcal{E}(C_n) \geq -\lfloor \beta(n)\rfloor$. Again we proceed by induction.   By Lemma \ref{LemmaBoundaryestimate}  and the induction assumption \EEE we obtain 
\begin{align*}
\mathcal{E}^{\mathrm{bnd}}(C_n) \geq -\frac{3}{2}d+3,\quad \mathcal{E}^{\mathrm{bulk}}(C_n) \geq -\lfloor \beta(n-d)\rfloor.
\end{align*}
This gives
\begin{align*}
\mathcal{E}(C_n) \geq -\frac{3}{2}n +\sqrt{\frac{3}{2}(n-d)} + 3.
\end{align*}
By Lemma \ref{LemmaBoundaryestimate} and Step 2 \EEE we obtain $n-d \geq  - 4 \mathcal{E}(C_n) \EEE + 6 -5n$.
This yields
\begin{align*}
\mathcal{E}(C_n) \geq -\frac{3}{2}n + \sqrt{\frac{3}{2}(-4\mathcal{E}(C_n)-5n +6)}  + 3.\EEE
\end{align*}
Applying Lemma \ref{LemmaSquareroot} with $j=3$, $m=6$,  and $x = \mathcal{E}(C_n)$ \EEE  we obtain $\mathcal{E}(C_n) \geq -\beta(n)$. \EEE Finally, since $\mathcal{E}(C_n)$ is an integer due to Step 2, we conclude $\mathcal{E}(C_n) \ge -\lfloor \beta(n) \rfloor$. \EEE 
\end{proof}

For later purposes, we observe that the calculations of Step 2 and Step 3 in the previous proof can be refined.  Recall  the definition of the  excess of edges $\eta$ in \eqref{Excess}. \EEE

\begin{lemma}[Refined energy inequality for $\eta$]\label{lemma: eta energy}
 Let $n \ge 6$ and  let $C_n$ be a ground state  with no acyclic bonds. \EEE Then  
$$ \mathcal{E}(C_n) \ge -\frac{3}{2}n  + \sqrt{\frac{3}{2}(n+\eta)}.$$
If $\mathcal{E}^{\rm bnd}(C_n) > -\frac{3}{2}d + 3$ or $\mathcal{E}^{\rm bulk}(C_n) > \lfloor \beta(n-d)\rfloor$, then 
$$ \mathcal{E}(C_n) \ge -\frac{3}{2}n  + \sqrt{\frac{3}{2}(n+\eta-4)}+1.$$

\end{lemma}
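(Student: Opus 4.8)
The plan is to reuse the bookkeeping of Step 3 in the proof of Theorem \ref{TheoremEnergyGroundstates}, but to carry the excess-of-edges term $\eta$ through the computation instead of discarding it via $\eta\ge 0$. First I would record the structural facts available here. Since $C_n$ is a ground state with no acyclic bonds, Theorem \ref{TheoremEnergyGroundstates} gives that $C_n$ is connected and $\mathcal{E}(C_n)=-b$, while Remark \ref{rem: repulsionsfree} yields that $C_n$ is repulsion-free with all bonds of unit length; moreover $d$ is even by the alternating charge distribution (Remark \ref{rem: bdy}(i)). The crucial consequence is \emph{integrality}: the bulk $C_n^{\mathrm{bulk}}$ inherits repulsion-freeness and unit bond lengths, so $\mathcal{E}^{\mathrm{bulk}}(C_n)$ equals minus the number of bulk bonds, hence is an integer, and therefore so is $\mathcal{E}^{\mathrm{bnd}}(C_n)=\mathcal{E}(C_n)-\mathcal{E}^{\mathrm{bulk}}(C_n)$. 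Since $d$ is even, $-\tfrac32 d+3$ and $-\lfloor\beta(n-d)\rfloor$ are integers as well. This is exactly what will let a strict inequality in either constituent bound be upgraded to a full unit gain.

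For the first estimate I would add the boundary bound $\mathcal{E}^{\mathrm{bnd}}(C_n)\ge -\tfrac32 d+3$ from Lemma \ref{LemmaBoundaryEnergy} to the bulk bound $\mathcal{E}^{\mathrm{bulk}}(C_n)\ge -\lfloor\beta(n-d)\rfloor\ge -\tfrac32(n-d)+\sqrt{\tfrac32(n-d)}$, which after cancellation gives $\mathcal{E}(C_n)\ge -\tfrac32 n+3+\sqrt{\tfrac32(n-d)}$. Substituting the identity of Lemma \ref{LemmaBoundaryestimate}, written as $n-d=-4\mathcal{E}(C_n)+6+\eta-5n$ via $b=-\mathcal{E}(C_n)$, I reach $\mathcal{E}(C_n)\ge -\tfrac32 n+3+\sqrt{\tfrac32(-4\mathcal{E}(C_n)+6+\eta-5n)}$. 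Applying Lemma \ref{LemmaSquareroot} with $x=\mathcal{E}(C_n)$, $j=3$ and $m=6+\eta$ then yields precisely $\mathcal{E}(C_n)\ge -\tfrac32 n+\sqrt{\tfrac32(n+\eta)}$. The auxiliary upper bound $x\le \tfrac{m}{4}-\tfrac54 n$ required by that lemma reduces, after inserting $\mathcal{E}(C_n)=\tfrac14(d+\eta+6-6n)$, to the trivial inequality $d\le n$.

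For the refined estimate I would use the hypothesis that one of the two constituent bounds is strict, i.e.\ $\mathcal{E}^{\mathrm{bnd}}(C_n)>-\tfrac32 d+3$ or $\mathcal{E}^{\mathrm{bulk}}(C_n)>-\lfloor\beta(n-d)\rfloor$. By the integrality noted above, any strict inequality among integers improves by at least one, so summing the two bounds now gives $\mathcal{E}(C_n)\ge -\tfrac32 n+4+\sqrt{\tfrac32(n-d)}$. Repeating the substitution of Lemma \ref{LemmaBoundaryestimate} and invoking Lemma \ref{LemmaSquareroot} with the \emph{same} $m=6+\eta$ but now $j=4$ produces $\mathcal{E}(C_n)\ge -\tfrac32 n+1+\sqrt{\tfrac32(n+\eta-4)}$, which is the claimed refined inequality.

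The only genuinely delicate point is the integrality argument that justifies the full unit gain in the refined case; everything else is the algebra already performed in Step 3 of Theorem \ref{TheoremEnergyGroundstates}, merely with $m=6+\eta$ in place of $m=6$ and $j\in\{3,4\}$. I would therefore present the proof compactly, first emphasizing that repulsion-freeness and unit bond lengths (Theorem \ref{TheoremEnergyGroundstates}, Remark \ref{rem: repulsionsfree}) make $\mathcal{E}^{\mathrm{bnd}}(C_n)$, $\mathcal{E}^{\mathrm{bulk}}(C_n)$ and $-\tfrac32 d+3$ all integers, and then quoting Lemma \ref{LemmaSquareroot} twice.
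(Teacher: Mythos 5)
Your proposal is correct and follows essentially the same route as the paper's proof: the same boundary/bulk decomposition via Lemma \ref{LemmaBoundaryEnergy} and Theorem \ref{TheoremEnergyGroundstates}, the substitution $n-d=-4\mathcal{E}(C_n)+6+\eta-5n$ from Lemma \ref{LemmaBoundaryestimate}, and Lemma \ref{LemmaSquareroot} with $m=6+\eta$ and $j=3$ resp.\ $j=4$. The only (harmless) variation is where integrality is invoked for the unit gain --- you upgrade the strict constituent bound using integrality of $\mathcal{E}^{\mathrm{bnd}}(C_n)$ and $\mathcal{E}^{\mathrm{bulk}}(C_n)$, whereas the paper upgrades the summed inequality using that $\mathcal{E}(C_n)=-b$ is an integer --- and your explicit verification of the side condition $x\le \frac{m}{4}-\frac{5}{4}n$ (reducing to $d\le n$) makes precise a point the paper leaves implicit.
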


\begin{proof}
By Lemma \ref{LemmaBoundaryEnergy} and Theorem \ref{TheoremEnergyGroundstates} applied on $C_n^{\mathrm{bulk}}$ we have
\begin{align}\label{eq: strict?}
\mathcal{E}^{\mathrm{bnd}}(C_n) \ge -\frac{3}{2}d +3, \ \ \ \ \  \mathcal{E}^{\mathrm{bulk}}(C_n) \geq -\lfloor \beta(n-d)\rfloor.
\end{align}
 By Lemma \ref{LemmaBoundaryestimate} and Theorem \ref{TheoremEnergyGroundstates} we get  $n-d = -4\mathcal{E}(C_n) + 6 + \eta -5n$.  This together with the summation of the two terms in \eqref{eq: strict?} yields \EEE
\begin{align}\label{eq: strict?2}
\mathcal{E}(C_n) \geq -\frac{3}{2}n + \sqrt{\frac{3}{2}(-4\mathcal{E}(C_n) -5n+6+\eta)} +3.
\end{align}
By applying Lemma \ref{LemmaSquareroot} with $j=3$ and $m=6+\eta$ we obtain
\begin{align*}
\mathcal{E}(C_n) \geq -\frac{3}{2}n + \sqrt{\frac{3}{2}(n+\eta)}.
\end{align*}
Finally, if $\mathcal{E}^{\rm bnd}(C_n) > -\frac{3}{2}d + 3$ or $\mathcal{E}^{\rm bulk}(C_n) > \lfloor \beta(n-d)\rfloor$, i.e., one inequality in \eqref{eq: strict?} is strict, we can replace $3$ by $4$ in \eqref{eq: strict?2} since $\mathcal{E}(C_n)$ is an integer. Then applying again Lemma  \ref{LemmaSquareroot} with $j=4$, $m=6+\eta$ we obtain 
$
\mathcal{E}(C_n) \geq -\frac{3}{2}n + \sqrt{\frac{3}{2}(n+\eta-4)} +1.
$
\end{proof}

\section{Characterization of  ground states}\label{sec: characterization}

 In this section we characterize the ground states of (\ref{Energy}). We do not provide a complete characterization for  $n < 10$ since the system is highly flexible  in those cases. Some of the ground states for $n<10$ are pictured in Fig.~\ref{FigureFlexible}. We will start by  providing some geometric facts about ground states.  Afterwards, \EEE we formulate and prove the first main result of the section which shows \EEE that ground states  consist of hexagonal cycles  except for possibly (at most)   two flags   or   one octagon at the boundary, see Proposition \ref{TheoremGroundstatesleq-new}. \EEE   Finally, for $n \ge 30$, we will be able to  prove \EEE that no octagons occur which will conclude \EEE    the proof of Theorem \ref{TheoremGroundstatesleq31}. The proof will also show that, among the ground states for $10 \le n \le 29$, an octagon can only occur in the cases  $n=12, 15,18,21,29$, \EEE see Remark \ref{rem: octogons} and Fig.~\ref{FigureOctagons}. \EEE

\begin{remark}\label{rem: method}
{\normalfont

We briefly remark that in the following our strategy deviates considerably from the one in \cite{Mainini} due to the different modeling assumptions concerning repulsive and angular potentials, see Remark \ref{rem: angle/bond} for  details. On the one hand, for $n \le 29$ indeed more flexible structures may occur \EEE which are not subsets  of the hexagonal lattice.  On the other hand, although we eventually will prove that for $n \ge 30$ ground states essentially assemble themselves in the hexagonal lattice, we need a different approach compared to \cite{HR, Mainini-Piovano, Mainini, Radin}: differently  to the proof by induction performed there, we cannot use the property that ground states are subsets of the hexagonal lattice in the induction hypothesis (consider, e.g., the step from $29$ to $30$). \EEE Therefore,  finer geometric considerations are necessary which are developed in two steps: first, we prove by induction that the breaking of the hexagonal symmetry due to presence of non-hexagonal  elementary \EEE polygons can only occur on the boundary. Then we show that for $n \ge 30$ the existence of such defects leads to an energy exceeding \eqref{Energygroundstates}.

}
\end{remark}

\begin{figure}[htp]
\centering
\begin{tikzpicture}[scale=.8]
\draw[thin](0,0)--(3,0);
\foreach \j in {0,1}{
\draw[fill=black](2*\j,0) circle(.05);
\draw[fill=white](2*\j+1,0) circle(.05);
}
\begin{scope}[shift={(4,0)}]
\draw[thin](0,0)--++(70:1)--++(10:1)--++(-10:1)--++(-70:1)--++(70:-1)--++(190:1)--++(170:1)--++(110:1);

\draw[fill=white](0,0)++(70:1)circle(.05)++(10:1)++(-10:1)circle(.05)++(-70:1)++(70:-1)circle(.05)++(190:1)++(170:1)circle(.05)++(110:1);

\draw[fill=black](0,0)circle(.05)++(70:1)++(10:1)circle(.05)++(-10:1)++(-70:1)circle(.05)++(70:-1)++(190:1)circle(.05)++(170:1)++(110:1);
\end{scope}

\begin{scope}[shift={(10.5,-.75)}]
%
%
\draw[thin](0,0)--++(-60:1);
\draw[thin](120:1)++(180:1)--++(180:1);
\draw[thin](120:1)++(60:1)--++(60:1);

\draw[fill=white](0,0)++(-60:1)circle(.05);
\draw[fill=white](120:1)++(180:1)++(180:1)circle(.05);
\draw[fill=white](120:1)++(60:1)++(60:1)circle(.05);
\draw[thin](60:1)--++(120:1)--++(180:1)--++(240:1)--++(300:1)--++(0:1)--++(60:1);
\draw[fill=black](60:1)++(120:1)circle(.05)++(180:1)++(240:1)circle(.05)++(300:1)++(0:1)circle(.05);

\draw[fill=white](60:1)circle(.05)++(120:1)++(180:1)circle(.05)++(240:1)++(300:1)circle(.05);
\end{scope}

\begin{scope}[shift={(12,0)},rotate=-90]

\draw[thin](0,0)--++(30:1)--++(90:1)--++(90:1)--++(150:1)--++(210:1)--++(270:1)--++(270:1)--++(330:1);

\draw[thin](0,0)++(150:1)++(90:1)++(90:1)--++(150:1);
\draw[fill=black](0,0)++(150:1)++(90:1)++(90:1)++(150:1)circle(.05);

\draw[fill=black](0,0)circle(.05)++(30:1)++(90:1)circle(.05)++(90:1)++(150:1)circle(.05)++(210:1)++(270:1)circle(.05)++(270:1)++(330:1);
\draw[fill=white](0,0)++(30:1)circle(.05)++(90:1)++(90:1)circle(.05)++(150:1)++(210:1)circle(.05)++(270:1)++(270:1)circle(.05)++(330:1);
\end{scope}
\end{tikzpicture}
\caption{Some ground states for $n<10$.}
\label{FigureFlexible}
\end{figure}
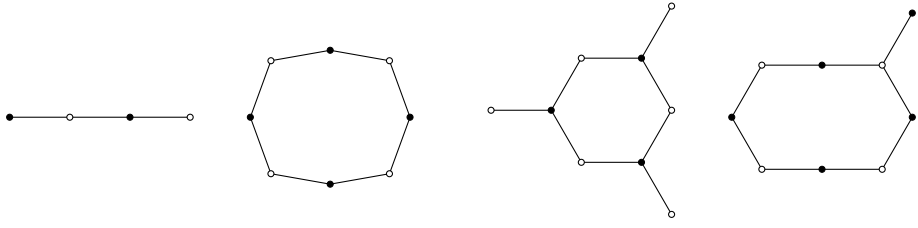

\subsection{Geometric properties of ground states}

In this section we collect some geometric properties of ground states. \EEE
 We start with an elementary property.

\begin{lemma}[Bridges\EEE]\label{lemma: bridges}
Ground states for $n \ge 13$ do not contain bridges.
\end{lemma}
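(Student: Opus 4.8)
The plan is to argue by contradiction, leaning entirely on the structural conclusions of Theorem \ref{TheoremEnergyGroundstates}: for $n \ge 13$ every ground state $C_n$ is connected, has energy exactly $-\lfloor \beta(n)\rfloor$, and (via Remark \ref{rem: repulsionsfree}) is repulsion-free with all bonds of unit length. So I would assume $C_n$ is a ground state containing a bridge $e = \{x,y\}$. Since a bridge is by definition an acyclic bond, it lies in no simple cycle, hence it is a cut edge: removing $e$ from the connected bond graph splits it into exactly two components, with sub-configurations $C'_m$ and $C'_{n-m}$. By the defining property of a bridge, each of $C'_m$ and $C'_{n-m}$ contains at least one simple cycle, so Lemma \ref{LemmaPolygon} forces $m \ge 6$ and $n-m \ge 6$.

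The essential step is to show that $e$ carries the only nonzero interaction between the two components, i.e.
\[
\mathcal{E}(C_n) = \mathcal{E}(C'_m) + \mathcal{E}(C'_{n-m}) + V_{\mathrm{a}}(|x-y|).
\]
Here I would use the rigidity from Theorem \ref{TheoremEnergyGroundstates}: any equal-charge pair across the two parts is at distance $\ge \sqrt{3}$ by repulsion-freeness and thus contributes $0$ through $V_{\mathrm{r}}$ by [vii]; any opposite-charge cross pair that is not bonded has distance $> r_0$ and contributes $0$ through $V_{\mathrm{a}}$ by [iii]; and $e$ is the unique cross-bond, since a second cross-bond would contradict that removing $e$ disconnects the graph. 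Hence the cross-energy equals $V_{\mathrm{a}}(|x-y|) \ge -1$ by [ii].

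Bounding each component below by the respective ground-state energy via Theorem \ref{TheoremEnergyGroundstates} then gives
\[
\mathcal{E}(C_n) \ge -\lfloor \beta(m)\rfloor - \lfloor \beta(n-m)\rfloor - 1.
\]
The decisive input is the strict form of Lemma \ref{LemmaPropertiesbeta} 2): since $n \ge 13 > 12$, equality in $\lfloor \beta(m)\rfloor + \lfloor \beta(n-m)\rfloor + 1 \le \lfloor \beta(n)\rfloor$ is excluded, so the right-hand side is strictly greater than $-\lfloor \beta(n)\rfloor$. This yields $\mathcal{E}(C_n) > -\lfloor \beta(n)\rfloor$, contradicting Theorem \ref{TheoremEnergyGroundstates}.

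I expect the only genuinely delicate point to be the energy decomposition in the second paragraph, namely ruling out hidden attractive or repulsive interactions between the two components; this is exactly where repulsion-freeness, unit bond lengths, and the restricted interaction range [iii] and [vii] are indispensable. The threshold $n=13$ is sharp for this scheme, because the estimate of Lemma \ref{LemmaPropertiesbeta} 2) degenerates to equality precisely at $n=12,\ m=6$, mirroring the non-strict bridge bound already appearing in Step 1 of the proof of Theorem \ref{TheoremEnergyGroundstates}.
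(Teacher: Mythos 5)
Your proposal is correct and follows essentially the same route as the paper's proof: split off the two cyclic components $C'_m$, $C'_{n-m}$ (each of size at least $6$ by Lemma \ref{LemmaPolygon}), bound the bridge's contribution by $-1$, apply Theorem \ref{TheoremEnergyGroundstates} to both components, and invoke the strictness in Lemma \ref{LemmaPropertiesbeta} 2) for $n \ge 13$. The only difference is cosmetic: you establish an exact energy decomposition via repulsion-freeness, whereas the paper only needs the one-sided inequality $\mathcal{E}(C_n) \ge \mathcal{E}(C'_m) + \mathcal{E}(C'_{n-m}) - 1$, which already follows from $V_{\rm r} \ge 0$ and assumptions [ii]--[iii] without invoking Remark \ref{rem: repulsionsfree}.
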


\begin{proof}
Suppose that a bridge exists. Consider the two sub-configurations $C'_m$ and $C'_{n-m}$ which are connected by the bridge. As the energy contribution of the bridge is greater or equal to $-1$, we get by Theorem \ref{TheoremEnergyGroundstates}, Lemma \ref{LemmaPropertiesbeta} 2),  the fact that $n \ge 13$, and $V_{\rm r} \ge 0$ \EEE
\begin{align*}
\mathcal{E}(C_n) \ge -\lfloor \beta(m)\rfloor - \lfloor \beta(n-m)\rfloor -1 > -\lfloor \beta(n)\rfloor.
\end{align*}
This contradicts the fact that $C_n$ is a ground state.
\end{proof}

 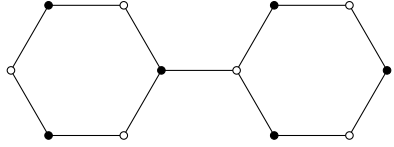
\begin{figure}[htp]
 \begin{tikzpicture}
 \draw[thin](0,0)--++(0:1)--++(60:1)--++(120:1)--++(180:1)--++(240:1)--++(300:1);
  \draw[thin](2,0)++(60:1)++(300:1)--++(0:1)--++(60:1)--++(120:1)--++(180:1)--++(240:1)--++(300:1);
  \draw[thin](1,0)++(60:1)--++(0:1);
  
 \draw[fill=black](0,0)circle(.05)++(0:1)++(60:1)circle(.05)++(120:1)++(180:1)circle(.05)++(240:1)++(300:1);  
 \draw[fill=white](0,0)++(0:1)circle(.05)++(60:1)++(120:1)circle(.05)++(180:1)++(240:1)circle(.05)++(300:1);  
 
  \draw[fill=black](2,0)++(60:1)++(300:1)circle(.05)++(0:1)++(60:1)circle(.05)++(120:1)++(180:1)circle(.05)++(240:1)++(300:1);  
 \draw[fill=white](2,0)++(60:1)++(300:1)++(0:1)circle(.05)++(60:1)++(120:1)circle(.05)++(180:1)++(240:1)circle(.05)++(300:1); 
  
 \end{tikzpicture}
 \caption{A ground state configuration for $n=12$ containing a bridge.}
 \label{FigureBridge}
 \end{figure}

\begin{remark}\label{rem: bridge}
{\normalfont
 Lemma \ref{lemma: bridges} is sharp in the sense that for $n=12$ there exists a ground state that contains a bridge connecting two hexagons  (and then $\eta=0$)\EEE, cf.\ Table~\ref{table2}  and Figure~\ref{FigureBridge}. Also note that for $n \le 11$ ground states cannot contain bridges as each polygon in the bond graph has at least $6$ vertices  by Lemma \ref{LemmaPolygon}.
 }
 \end{remark}
 \EEE
 
  The next lemma states that the number of flags is at most  two. \EEE  Let us mention that this property also applies to the ground states of \cite{Mainini} although this has not been observed explicitly there. \EEE

\begin{lemma}[Flags\EEE] \label{LemmaOctagonflag}  Let $n \geq 10$ and let $C_n$ be a ground state. Then the bond graph of $C_n$ contains at most $2$ flags.
\end{lemma}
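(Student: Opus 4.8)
The plan is to assume for contradiction that a ground state $C_n$, $n\ge 10$, carries $k\ge 3$ flags, and to produce from this a lower bound on $\mathcal{E}(C_n)$ that strictly exceeds the ground-state value $-\lfloor\beta(n)\rfloor$ of Theorem \ref{TheoremEnergyGroundstates}. The whole argument rests on one reduction: flags can be stripped off one atom at a time at a perfectly controlled energy cost, and then the resulting core is compared with the ground-state energy of fewer particles.

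First I would record the structural input. By Theorem \ref{TheoremEnergyGroundstates} and Remark \ref{rem: repulsionsfree} the configuration $C_n$ is connected, repulsion-free, has only unit-length bonds, and satisfies $b=\lfloor\beta(n)\rfloor$. For $n\ge 10$ it cannot be a tree (a tree would have only $n-1<\lfloor\beta(n)\rfloor$ bonds), so its bond graph contains a cycle; the cyclic core thus has at least $6$ atoms, and no cycle atom is a flag atom. Regarding the acyclic bonds as a forest rooted at the core, the number of atoms lying outside the core equals the number of flags $k$. Here bridges cause no trouble: for $n\ge 13$ there are none by Lemma \ref{lemma: bridges}, for $n\in\{10,11\}$ there are none either since a bridge needs two cycles and hence at least $12$ atoms, and for $n=12$ a bridge uses up both hexagons and leaves no room for a flag. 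Consequently stripping the flags removes exactly $k$ atoms and leaves a configuration of $n-k\ge 6$ atoms.

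Next I carry out the reduction. I relocate the flag atoms successively, at each step choosing a current leaf whose unique remaining bond is a flag. Since $C_n$ is repulsion-free with unit bonds and alternating charges, that bond has length $1$ and joins opposite charges, so it contributes exactly $V_{\mathrm{a}}(1)=-1$ and no repulsive term; relocating the atom to distance $\ge\sqrt3$ from the remaining ones therefore raises the energy by exactly $1$. After $k$ such steps I obtain a core configuration $C^{\mathrm{core}}_{n-k}$ with
\[
\mathcal{E}(C^{\mathrm{core}}_{n-k})=\mathcal{E}(C_n)+k=-\lfloor\beta(n)\rfloor+k.
\]
Comparing with the ground-state energy of $n-k$ atoms, $\mathcal{E}(C^{\mathrm{core}}_{n-k})\ge-\lfloor\beta(n-k)\rfloor$ by Theorem \ref{TheoremEnergyGroundstates}, I arrive at the key inequality
\[
\lfloor\beta(n)\rfloor\le\lfloor\beta(n-k)\rfloor+k.
\]

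It remains to contradict this by proving $\lfloor\beta(n)\rfloor\ge\lfloor\beta(n-k)\rfloor+k+1$ for all $k\ge 3$ and $n-k\ge 6$. For $k\ge 6$ and $n\ge 13$ this is immediate from Lemma \ref{LemmaPropertiesbeta} 3) (which even gives $+k+2$), the sole remaining case $n=12,k=6$ being read off Table \ref{table2}. For $k=5$ it follows from Lemma \ref{LemmaPropertiesbeta} 4), the four exceptional values $n\in\{15,18,21,29\}$ being settled by the direct computation $\lfloor\beta(n)\rfloor-\lfloor\beta(n-5)\rfloor=6>5$. The cases $k=3,4$ are not covered by these $\beta$-properties and form the main obstacle, since the inequality \eqref{eq: help} underlying Lemma \ref{LemmaPropertiesbeta} 3) degenerates for $k\le 4$. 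Here I would argue by hand: $\beta(n)-\beta(n-k)\ge k+1$ is equivalent to $\sqrt{\tfrac32 n}+\sqrt{\tfrac32(n-k)}\ge\tfrac{3k}{k-2}$, which holds for all $n\ge 16$ when $k=3$ and for all $n\ge 10$ when $k=4$; as $k+1$ is an integer this upgrades to $\lfloor\beta(n)\rfloor\ge\lfloor\beta(n-k)\rfloor+k+1$, leaving only the finitely many values $10\le n\le 15$ with $k=3$, all verified directly in Table \ref{table2}. In every case the key inequality is violated, so $k\le 2$, which is the claim.
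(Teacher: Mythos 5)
Your proposal is correct and follows essentially the same route as the paper: assume $j\ge 3$ flags, strip them at an energy cost of $-1$ per flag (the paper uses "at least $-1$", your exact bookkeeping is a harmless strengthening), bound the remaining $(n-j)$-atom configuration by Theorem \ref{TheoremEnergyGroundstates}, and contradict the resulting inequality $\lfloor\beta(n)\rfloor\le\lfloor\beta(n-j)\rfloor+j$ by $\beta$-arithmetic. The only difference is the endgame: the paper treats all $j\ge3$ at once via the function $f(j)=-\tfrac12 j+\tfrac{(3/2)j}{\sqrt{(3/2)n}+\sqrt{(3/2)(n-j)}}$, which is non-increasing with $f(3)\le-1$ for $n\ge16$, plus a table check for $10\le n\le15$, whereas you split into the cases $k=3,4$, $k=5$, $k\ge6$ using Lemma \ref{LemmaPropertiesbeta} 3)--4); your extra structural care (no bridges in the relevant range, number of flags equals number of non-core atoms, leaf-by-leaf stripping) makes explicit what the paper leaves implicit. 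One small omission: your $k=5$ case rests on Lemma \ref{LemmaPropertiesbeta} 4), which requires $n\ge13$, so the admissible pairs $(n,k)=(11,5)$ and $(12,5)$ (both satisfy $n-k\ge6$) are not covered by any of your cases; both are settled by the same table check you use elsewhere, namely $\lfloor\beta(11)\rfloor=12\ge\lfloor\beta(6)\rfloor+6$ and $\lfloor\beta(12)\rfloor=13\ge\lfloor\beta(7)\rfloor+6$, so the slip is trivially repaired.
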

\begin{proof} 
Assume  by contradiction \EEE that there exist $j \ge 3$ \EEE flags. Using the fact that a flag contributes at least $-1$ to the energy  and applying \EEE Theorem \ref{TheoremEnergyGroundstates} on the sub-configuration obtained after removing the flags,  we   have  $\mathcal{E}(C_n) \geq -j -\lfloor \beta(n-j)\rfloor$. By \eqref{eq: beta definition} we then get 
\begin{align*}
\mathcal{E}(C_n) & \ge  -\Big(\frac{3}{2}n -\frac{1}{2}j -\sqrt{\frac{3}{2}(n-j)}\Big)  = -\Big(\frac{3}{2}n - \sqrt{\frac{3}{2}n} +\sqrt{\frac{3}{2}n}-\frac{1}{2}j -\sqrt{\frac{3}{2}(n-j)}\Big) \\&  = -\Big( \frac{3}{2}n - \sqrt{\frac{3}{2}n} -\frac{1}{2}j +\frac{\frac{3}{2}j}{\sqrt{\frac{3}{2}n}+\sqrt{\frac{3}{2}(n-j)}}\Big).
\end{align*}
The function $  f(j) := \EEE -\frac{1}{2}j +\frac{\frac{3}{2}j}{\sqrt{\frac{3}{2}n}+\sqrt{\frac{3}{2}(n-j)}}$ is non-increasing in $j$, non-positive, and we have that  $f(3) \leq -1$ \EEE for $n\geq 16$. With the above estimate this implies \begin{align*}
\mathcal{E}(C_n) > -\lfloor \beta(n)\rfloor
\end{align*}
which leads to a contradiction  to Theorem \ref{TheoremEnergyGroundstates}  in the cases $n\geq 16$.  The cases  $10 \leq n \leq 15 $  can be checked directly 
 by comparing the above formula $\mathcal{E}(C_n) \geq -j -\lfloor \beta(n-j)\rfloor, j\geq 3$,  with $\lfloor\beta(n)\rfloor, n=10,\ldots,15$, cf.\ Table~\ref{table2}.  
\end{proof}

 Lemma \ref{LemmaOctagonflag} is sharp in the sense that for $n=9$ there exists a ground state that contains three flags in its bond graph, cf. Table~\ref{table2}  and the third configuration in Fig.~\ref{FigureFlexible}.  \EEE

\textbf{Equilibrated atoms:} We say an atom $x \in X_n$ is \emph{equilibrated} if all bond-angles at $x$ lie in $\lbrace \frac{2\pi}{3}, \frac{4\pi}{3} \rbrace$. By $\mathcal{A}$ we denote the atoms which are \emph{not} equilibrated. By $\mathcal{A}_{\rm bulk} \subset \mathcal{A} $ we denote the bulk atoms which are not equilibrated.  Note that if $\mathcal{A}=\emptyset$ and  $C_n$ \EEE is connected, then $X_n$ is a subset of the hexagonal lattice \EEE $\mathcal{L}$.  The following properties will be useful in the sequel. 

\begin{lemma}[Regular hexagons and bond-angles]\label{LemmaHexagon} 
Let $C_n$ be a ground state.
Then all hexagons are regular with unit bond length and  have \EEE alternating charge. All bond-angles $\theta$ satisfy  $\frac{2\pi}{3} \le \theta \le \frac{4\pi}{3}$.   If $x \in \mathcal{A}$, then $x$ is \EEE $2$-bonded and the bond angles lie in $(\frac{2\pi}{3},\frac{4\pi}{3})$.
\end{lemma}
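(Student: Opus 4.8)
The plan is to reduce all three assertions to facts already established. By Theorem~\ref{TheoremEnergyGroundstates} together with Remark~\ref{rem: repulsionsfree}, a ground state is repulsion-free and all its bonds have unit length; by Lemma~\ref{LemmaNeighborhood} it has alternating charge distribution and $\#\mathcal{N}(x_i)\le 3$ for every $i$. The single observation that drives everything is that any two neighbors $x_j,x_k$ of a common atom $x_i$ carry the same charge $-q_i$, so by repulsion-freeness $|x_j-x_k|\ge\sqrt3$. First I would establish the global bond-angle bound: for a bond angle $\theta$ at $x_i$ between $x_j,x_k$, unit bond lengths give $|x_j-x_k|=2\sin(\theta/2)$ with $\theta/2\in[0,\pi)$, and combining with $|x_j-x_k|\ge\sqrt3$ forces $\sin(\theta/2)\ge\sqrt3/2$, hence $\theta\in[\tfrac{2\pi}{3},\tfrac{4\pi}{3}]$. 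This already yields the second assertion of the lemma.

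For the hexagon statement I would use that the bond graph is planar with straight edges (Section~\ref{sec: prelimi}), so any hexagonal cycle bounds a simple polygon whose six interior angles sum to $4\pi$. At each vertex the interior angle is either the bond angle between the two incident hexagon bonds or its $2\pi$-complement; since $\theta\in[\tfrac{2\pi}{3},\tfrac{4\pi}{3}]$ implies $2\pi-\theta\in[\tfrac{2\pi}{3},\tfrac{4\pi}{3}]$ as well, every interior angle lies in $[\tfrac{2\pi}{3},\tfrac{4\pi}{3}]$, in particular each is $\ge\tfrac{2\pi}{3}$. As $6\cdot\tfrac{2\pi}{3}=4\pi$, all six interior angles must equal $\tfrac{2\pi}{3}$; with unit side lengths this forces the hexagon to be regular, and alternating charge is immediate from Lemma~\ref{LemmaNeighborhood}.

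For the last assertion I would classify atoms by their number of bonds. A $1$-bonded atom admits no bond angle and is vacuously equilibrated. For a $3$-bonded atom, the three angular sectors cut out by its neighbors are bond angles summing to $2\pi$, each $\ge\tfrac{2\pi}{3}$ by the angle bound, hence each exactly $\tfrac{2\pi}{3}$; every bond angle at such an atom is then $\tfrac{2\pi}{3}$ or $\tfrac{4\pi}{3}$, so it is equilibrated. Consequently a non-equilibrated atom $x\in\mathcal{A}$ must be $2$-bonded, and its bond angle $\theta$ satisfies $\theta\in[\tfrac{2\pi}{3},\tfrac{4\pi}{3}]$ by the angle bound yet $\theta\notin\{\tfrac{2\pi}{3},\tfrac{4\pi}{3}\}$ since $x$ is not equilibrated; thus $\theta\in(\tfrac{2\pi}{3},\tfrac{4\pi}{3})$, and the complementary angle $2\pi-\theta$ lies in the same open interval.

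The hard part will be the hexagon step, where one must carefully justify that each interior angle of the hexagon coincides, up to its $2\pi$-complement, with a bond angle to which the angle bound applies, and that the cycle is a \emph{simple} polygon so that the interior-angle sum is exactly $4\pi$; once this is in place the sharp count $6\cdot\tfrac{2\pi}{3}=4\pi$ pins down every angle and the remaining computations are routine.
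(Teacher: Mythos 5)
Your proposal is correct and takes essentially the same route as the paper's proof: unit bond lengths and repulsion-freeness (Theorem~\ref{TheoremEnergyGroundstates} with Remark~\ref{rem: repulsionsfree}) together with alternating charge (Lemma~\ref{LemmaNeighborhood}) force $|x_j-x_k|\ge\sqrt{3}$ for two neighbors of a common atom, yielding the bond-angle bound $[\tfrac{2\pi}{3},\tfrac{4\pi}{3}]$; the interior-angle sum $4\pi$ of a planar hexagon then pins every angle to $\tfrac{2\pi}{3}$, and the sector count at $3$-bonded atoms shows non-equilibrated atoms are $2$-bonded. The only differences are cosmetic: you make explicit the $2\pi$-complement issue at hexagon vertices (which the paper glosses over, harmlessly, since the interval is symmetric under $\theta\mapsto 2\pi-\theta$) and you argue the $3$-bonded case directly, where the paper argues by contradiction.
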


\begin{proof}
By Theorem \ref{TheoremEnergyGroundstates} and Remark \ref{rem: repulsionsfree} we have that all bonds in the bond graph are of unit length and that the configuration is repulsion-free.  An additional necessary condition for equality is that all bond-angles $\theta$ satisfy 
\begin{align}\label{eq: larger than 2pi3}
\frac{2\pi}{3} \leq \theta \leq  \frac{4\pi}{3}. 
\end{align}
 In fact, suppose that $x_1,x_0,x_2$ form the angle $\theta$. Since $x_1,x_2$ are neighbors of $x_0$, we have  $q_1=q_2$ by Lemma \ref{LemmaNeighborhood}. The above mentioned necessary conditions imply $|x_1-x_0| = |x_2-x_0|=1$ and  $V_{\mathrm{r}}(|x_1-x_2|)=0$. The latter  only holds if $|x_1-x_2| \ge \sqrt{3}$ by $(\mathrm{vii})$. Simple trigonometry then yields $\frac{2\pi}{3} \leq \theta \leq  \frac{4\pi}{3}$

  From this discussion \EEE we derive that the edges of each hexagon necessarily need to have length $1$ and the interior angles are larger or equal to $2\pi/3$. As the sum of the interior angles in a  planar hexagon   equals \EEE  $4\pi$, we get that each interior angle is $2\pi/3$, i.e., each hexagon is indeed a regular hexagon with unit bond length.  The charge is alternating by Lemma \ref{LemmaNeighborhood}. \EEE

Consider an atom $x_i$ with a bond-angle $\theta_1$ which satisfies $\frac{4\pi}{3}>\theta_1 > \frac{2\pi}{3}$. Suppose by contradiction that $x_i$ had more than two bonds (i.e., three bonds, see Lemma \ref{LemmaNeighborhood}). Summing up all the three bond-angles $\theta_1,\theta_2,\theta_3$ at $x_i$ we get that $\min\lbrace \theta_2,\theta_3 \rbrace < 2\pi/3$. This, however, contradicts  \eqref{eq: larger than 2pi3}.  
\end{proof}

%

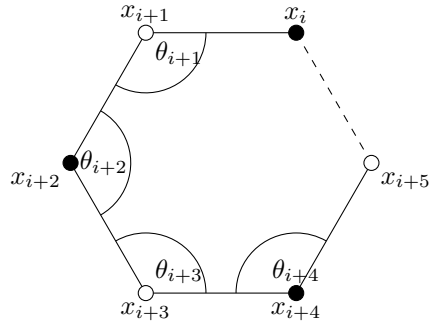
\begin{figure}[htp]
\begin{tikzpicture}[scale =2]
\draw[thin](0,0)--++(-60:1)--++(0:1)--++(60:1);
\draw[thin](0,0)--++(60:1)--++(0:1);
\draw[dashed,thin](0,0)++(60:1)++(0:1)--++(-60:1);

\draw[fill=black](0,0)circle(.05)++(-60:1)++(0:1)circle(.05)++(60:1);
\draw[fill=white](0,0)++(-60:1)circle(.05)++(0:1)++(60:1)circle(.05);
\draw[fill=white](0,0)++(60:1)circle(.05)++(0:1);
\draw[fill=black](0,0)++(60:1)++(0:1)circle(.05);
\draw(0,0) node[anchor=north east]{$x_{i+2} $} ++(-60:1)node[anchor=north]{$x_{i+3} $}++(0:1)node[anchor=north]{$x_{i+4} $}++(60:1)node[anchor=north west]{$x_{i+5} $};
\draw[thin](0,0)++(60:1) node[anchor=south]{$x_{i+1} $}++(0:1)  node[anchor=south]{$x_{i} $};

\draw[thin](-60:.4) arc(-60:60:.4);
\draw[thin](0,0) node[anchor=west]{$\theta_{i+2} $};

\draw[thin](60:1)++(240:.4) arc(240:360:.4);
\draw[thin](60:1) node[anchor=north west]{$\theta_{i+1} $};


\draw[thin](-60:1)++(0:.4) arc(0:120:.4);
\draw[thin](-60:1) node[anchor=south west]{$\theta_{i+3} $};

\draw[thin](-60:1)++(0:.6) arc(180:60:.4);
\draw[thin](-60:1)++(1,0) node[anchor=south]{$\theta_{i+4} $};

\end{tikzpicture}
\caption{Part of an \EEE octagon with $4$ consecutive angles $\theta_{i+1},\ldots, \theta_{i+4}$ \EEE equal to $\frac{2\pi}{3}$.}
\label{Figureangles}
\end{figure}

 We observe \EEE that octagons contain non-equilibrated atoms. More precisely, we have the following statement.

\begin{lemma}[Octagon]\label{lemma: octa}
Let $C_n$ be a   ground state  containing an octagon $\{x_0,\ldots,x_7\}$ in the bond graph. Let $\theta_i$, $i=0,\ldots,7$, be  the interior angles of the octagon. Then we have
\begin{align*}
(i) & \ \  (\theta_i,\theta_{i+1},\theta_{i+2},\theta_{i+3}) \neq  \left(\frac{2\pi}{3},\frac{2\pi}{3},\frac{2\pi}{3},\frac{2\pi}{3}\right) \text{ for all } i=0,\ldots,7,\\
(ii) & \ \ \theta_i < \frac{4\pi}{3} \text{ for all } i=0,\ldots,7.
\end{align*}
Here, the indices have to be understood $\mathrm{mod}\, 8$.
\end{lemma}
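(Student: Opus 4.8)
The plan is to first record the elementary bounds on the interior angles, then to prove $(i)$ by a direct rigid-chain computation, and finally to deduce $(ii)$ from $(i)$ together with the angle-sum formula for an octagon.

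First I would note that, by Theorem \ref{TheoremEnergyGroundstates}, Remark \ref{rem: repulsionsfree}, and Lemma \ref{LemmaHexagon}, every bond of the octagon has unit length and every bond-angle lies in $[\tfrac{2\pi}{3},\tfrac{4\pi}{3}]$. Since the bond-angles around any atom sum to $2\pi$, a $3$-bonded atom has all three bond-angles equal to $\tfrac{2\pi}{3}$, whereas a $2$-bonded atom has both its bond-angles in $[\tfrac{2\pi}{3},\tfrac{4\pi}{3}]$. In either case the interior angle $\theta_i$ of the octagon and its complement $2\pi-\theta_i$ each contain at least one bond-angle, so both are at least $\tfrac{2\pi}{3}$; this yields
\begin{align}\label{eq: octa-range}
\tfrac{2\pi}{3}\le \theta_i \le \tfrac{4\pi}{3}\qquad\text{for all } i=0,\dots,7,
\end{align}
which will be used repeatedly.

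For $(i)$ I would argue by contradiction, assuming $\theta_i=\theta_{i+1}=\theta_{i+2}=\theta_{i+3}=\tfrac{2\pi}{3}$, and examine the chain $x_{i-1},x_i,\dots,x_{i+4}$ whose five edge-vectors $e_k=x_{k+1}-x_k$ are unit vectors. Because each of the four interior angles equals $\tfrac{2\pi}{3}<\pi$, these vertices are convex, so each edge-vector arises from the previous one by rotation through the exterior angle $\tfrac{\pi}{3}$. Identifying vectors with complex numbers and setting $\omega=e^{\mathrm{i}\pi/3}$, I would then compute
\begin{align*}
x_{i+4}-x_{i-1}=\sum_{k=i-1}^{i+3}e_k=\bigl(1+\omega+\omega^2+\omega^3+\omega^4\bigr)e_{i-1}=-\omega^5 e_{i-1},
\end{align*}
using that $\omega$ is a primitive sixth root of unity, so $1+\omega+\dots+\omega^5=0$. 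Since $|\omega^5|=|e_{i-1}|=1$, this gives $|x_{i-1}-x_{i+4}|=1$, whence $x_{i-1}$ and $x_{i+4}$ are bonded. As the octagon is a simple $8$-cycle, the bond $x_{i-1}x_{i+4}$ is a genuine chord, and together with the octagon edges $x_{i+4}x_{i+5}$, $x_{i+5}x_{i+6}$, and $x_{i+6}x_{i+7}=x_{i+6}x_{i-1}$ (indices mod $8$) it forms a simple square in the bond graph. This contradicts Lemma \ref{LemmaPolygon}, proving $(i)$.

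Finally, $(ii)$ would follow from $(i)$ and the fact that the interior angles of a simple octagon sum to $(8-2)\pi=6\pi$. Supposing $\theta_j=\tfrac{4\pi}{3}$ for some $j$, the remaining seven angles satisfy $\sum_{i\ne j}\theta_i=\tfrac{14\pi}{3}=7\cdot\tfrac{2\pi}{3}$, and since each is at least $\tfrac{2\pi}{3}$ by \eqref{eq: octa-range}, every one of them must equal $\tfrac{2\pi}{3}$. In particular $\theta_{j+1}=\theta_{j+2}=\theta_{j+3}=\theta_{j+4}=\tfrac{2\pi}{3}$, contradicting $(i)$ applied with $i=j+1$; hence $\theta_i<\tfrac{4\pi}{3}$ for all $i$. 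I expect the main obstacle to be making the geometric step in $(i)$ fully rigorous: one must verify that four consecutive interior angles of $\tfrac{2\pi}{3}$ with unit edges place the flanking vertices $x_{i-1}$ and $x_{i+4}$ at distance \emph{exactly} $1$, so that they are forced to be bonded, and that the resulting chord genuinely cuts off a $4$-cycle rather than coinciding with an existing edge. The root-of-unity identity disposes of the distance computation cleanly, and the simplicity of the octagonal cycle guarantees the chord is new, so that Lemma \ref{LemmaPolygon} can be invoked.
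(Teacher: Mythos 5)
Your proof is correct and takes essentially the same approach as the paper: for (i), the paper also argues that four consecutive $\frac{2\pi}{3}$-angles with unit bonds force a closing bond that splits off a hexagon and a forbidden square (the paper appeals to a figure, whereas your root-of-unity computation makes the rigidity step and the resulting contradiction with Lemma \ref{LemmaPolygon} fully explicit), and your proof of (ii) via the angle sum $6\pi$ and reduction to (i) is identical to the paper's.
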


\begin{proof}
 By Remark \ref{rem: repulsionsfree} and Lemma \ref{LemmaHexagon} we have that
\begin{align}\label{E=Eimplication}
\frac{2\pi}{3} \leq \theta_i \leq \frac{4\pi}{3} \text{ and } |x_{i+1}-x_i|=1 \text{ for all } i=0,\ldots,7.
\end{align}
Suppose by contradiction that (i) was wrong. If there are more than three consecutive angles of size $\frac{2\pi}{3}$,  as indicated in Fig.~\ref{Figureangles},  then the bond graph would contain an additional hexagon and at least one triangle or square. This is a contradiction to the fact that the bond graph contains an octagon.

 We now show (ii). \EEE Assume by contradiction that, without restriction,  $\theta_0 =\frac{4\pi}{3}$. Then by (\ref{E=Eimplication}) and the fact that the interior angles of the octagon sum to $6\pi$,  there holds for all $i=1,\ldots,7$
\begin{align*}
\theta_i + \frac{4\pi}{3} + 6\frac{2\pi}{3} \leq\sum_{j=0}^7 \theta_j= 6\pi.
\end{align*}
This implies   $\theta_i \leq \frac{2\pi}{3}$ for all $i=1,\ldots,7$. Again using (\ref{E=Eimplication}) this yields $\theta_i = \frac{2\pi}{3}$ for all $i=1,\ldots, 7$ which contradicts (i).
\end{proof}

\EEE

The following lemma investigates the properties of a configuration in which a non-equilibrated bulk atom is present. Roughly speaking, it states that the existence of such an atom induces the existence of more non-equilibrated atoms and a certain excess of edges $\eta$. Note that at this point our analysis deviates significantly from \cite{Mainini}: in a model with angle potentials favoring $\frac{2\pi}{3}$ angles, \EEE it is obvious that non-equilibrated atoms cannot exist in ground states.   \EEE

\begin{lemma}[Non-equilibrated atoms\EEE]\label{LemmaOctagon} Let $C_n$ be a   ground state   with no acyclic bonds in the bond graph. Assume that $\# \mathcal{A}_{\rm bulk} \ge 1$. Then one of the following holds:    
\begin{itemize}
\item[i)] $\# \mathcal{A}  \ge 2$ and \EEE  $\eta \geq 6$,
\item[ii)] $\# (\mathcal{A} \setminus \mathcal{A}_{\rm bulk}) \ge 1$ and $\eta = 4$. \EEE
\end{itemize}
\end{lemma}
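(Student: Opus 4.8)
The plan is to extract from a single non-equilibrated bulk atom enough local rigidity to force either many defects or a defect that reaches the boundary. Throughout I would use three consequences of Lemma \ref{LemmaHexagon} and Lemma \ref{LemmaNeighborhood}: every $3$-bonded atom has all three bond-angles equal to $\frac{2\pi}{3}$ (three angles in $[\frac{2\pi}{3},\frac{4\pi}{3}]$ summing to $2\pi$), so the interior angle of \emph{any} face at a $3$-bonded vertex is exactly $\frac{2\pi}{3}$; an atom lies in $\mathcal{A}$ precisely when it is $2$-bonded with bond-angle in the \emph{open} interval $(\frac{2\pi}{3},\frac{4\pi}{3})$; and every hexagonal face is regular, so all its interior angles equal $\frac{2\pi}{3}$.

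First I would fix $x\in\mathcal{A}_{\rm bulk}$ and set up the local picture. Since $x$ is $2$-bonded and lies in the interior region of a cycle, its two incident edges $xy_1,xy_2$ bound two bounded faces $F_1,F_2$; these are distinct, because a simple cycle meets the $2$-bonded vertex $x$ in a single interior angle, and they share the path $y_1xy_2$. Being non-equilibrated, $x$ has interior angle strictly larger than $\frac{2\pi}{3}$ in each of $F_1,F_2$, so neither face is a hexagon; hence $\eta\ge (k_{F_1}-6)+(k_{F_2}-6)\ge 4$, and $\eta$ is even by Lemma \ref{LemmaPolygon}. Next I would prove $\#\mathcal{A}\ge 2$ by an angle-excess parity argument on $F_1$. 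The total excess $\sum_{v\in F_1}(\theta_v^{F_1}-\frac{2\pi}{3})=(k_{F_1}-6)\frac{\pi}{3}$ is a positive integer multiple of $\frac{2\pi}{3}$ because $k_{F_1}$ is even. Each vertex contributes $0$ or $\frac{2\pi}{3}$ unless it is non-equilibrated, in which case its contribution lies strictly in $(0,\frac{2\pi}{3})$. Since $x$ contributes such a strictly fractional amount and the total is a multiple of $\frac{2\pi}{3}$, a second vertex of $F_1$ must also contribute a non-multiple of $\frac{2\pi}{3}$, hence be non-equilibrated. This gives $\#\mathcal{A}\ge 2$, so conclusion (i) follows whenever $\eta\ge 6$.

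It remains to treat $\eta=4$, which I expect to be the main obstacle. Here $F_1,F_2$ are the only non-hexagons, both octagons. Arguing by contradiction, I would assume that no non-equilibrated boundary atom exists; then every non-equilibrated atom is $2$-bonded and bulk, so both its faces are bounded and non-hexagonal, hence both lie in $\{F_1,F_2\}$, and the atom lies on the shared path $P=F_1\cap F_2$. The endpoints of $P$ are $3$-bonded (the two octagons diverge there), so they have angle $\frac{2\pi}{3}$, while the interior vertices of $P$ are the non-equilibrated atoms. Writing $t$ for the number of edges of $P$, the outer boundary of the disk $F_1\cup F_2$ is a simple cycle with $16-2t$ edges, so Lemma \ref{LemmaPolygon} forces $16-2t\ge 6$, i.e.\ $t\le 5$. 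Consequently the exclusive arc of $F_1$ (joining the two endpoints of $P$ without passing through $P$) consists of $9-t\ge 4$ consecutive vertices whose interior angles are all $\frac{2\pi}{3}$ — the two $3$-bonded endpoints and the exclusive vertices, which are equilibrated by assumption. This contradicts Lemma \ref{lemma: octa}(i). Hence a non-equilibrated boundary atom exists and conclusion (ii) holds.

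The delicate points I would verify with care are exactly those feeding the $\eta=4$ argument: that the two faces at $x$ are genuinely distinct and meet in a single path; that the divergence endpoints of $P$ are $3$-bonded (so that they carry angle $\frac{2\pi}{3}$ and close up the block of four consecutive $\frac{2\pi}{3}$-angles); and, for the book-keeping, that an exclusive non-equilibrated vertex is forced onto the boundary, since its remaining face cannot be a hexagon without forcing its octagon angle to equal $\frac{4\pi}{3}$, contradicting Lemma \ref{lemma: octa}(ii). All three rest on hexagon regularity and on both parts of Lemma \ref{lemma: octa}, which is precisely the extra geometric input not needed in the angle-potential setting.
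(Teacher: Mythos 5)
Your proof is correct in outline and its first half coincides with the paper's: the two faces incident to a non-equilibrated bulk atom are bounded non-hexagons, hence (even, $\ge 6$) at least octagons by Lemma \ref{LemmaPolygon}, giving $\eta \ge 4$; and your angle-excess parity argument on $F_1$ is the same computation as the paper's Claim~2, where a single non-equilibrated angle is shown to force $\theta_1 = \pi$ and an odd cycle. Where you genuinely diverge is the $\eta = 4$ case. The paper, assuming $\mathcal{A}\setminus\mathcal{A}_{\rm bulk}=\emptyset$, notes that all non-equilibrated atoms lie in both octagons, writes the two interior-angle sums with $\beta_i = 2\pi - \alpha_i$ to force exactly $k=2$ such atoms, places them on opposite sides via Lemma \ref{lemma: octa}(i), deduces $\alpha_i=\beta_i=\pi$, and concludes the two octagons are identical up to isometry, contradicting that they are distinct faces. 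You instead exploit the shared path $P = F_1 \cap F_2$: the outer boundary of $F_1 \cup F_2$ is a simple cycle with $16-2t$ edges, so Lemma \ref{LemmaPolygon} yields $t \le 5$, and the exclusive arc of $F_1$ then carries $9-t \ge 4$ consecutive interior angles equal to $\frac{2\pi}{3}$ (the endpoints of $P$ being $3$-bonded, the exclusive vertices equilibrated with octagon angle $<\frac{4\pi}{3}$ by Lemma \ref{lemma: octa}(ii)), contradicting Lemma \ref{lemma: octa}(i). This is more combinatorial and avoids the paper's somewhat delicate final congruence step, at the price of needing more structure of the face intersection.

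That structure is the one step you assert but do not prove: that $F_1 \cap F_2$ is a single path. Your $16-2t$ edge count and the notion of ``the'' exclusive arc stand or fall with it, and a priori two faces of a planar graph can share several disjoint paths (annulus-type configurations), which nothing in your write-up excludes; you flag the point but leave it open, so as written this is a gap. It is, however, fillable with the tools already in play. Every component of $F_1\cap F_2$ contains an edge, since a shared vertex-only component would carry at least four distinct bonds, against Lemma \ref{LemmaNeighborhood}; and the endpoints of each shared path are $3$-bonded, as you observe. If there were $s \ge 2$ components with $t_1,\dots,t_s$ edges, the region $F_1\cup F_2$ would have $s$ boundary circles, each a simple cycle of the bond graph with at least $6$ edges by Lemma \ref{LemmaPolygon}; counting the $16$ octagon edges gives $16 - 2\sum_i t_i \ge 6s$, which excludes $s \ge 3$ outright and for $s=2$ forces $t_1=t_2=1$. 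But then every vertex of the shared set is a $3$-bonded endpoint, whereas your atom $x$ is $2$-bonded with both of its edges bordering both faces, so $x$ must be an interior vertex of a shared component with at least two edges --- a contradiction. With this (short) supplement your argument closes and constitutes a valid alternative to the paper's treatment of the $\eta=4$ case.
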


\begin{proof} 
 
We first prove that each polygon containing non-equilibrated atoms has at least eight vertices and contains at least two non-equilibrated atoms. Then we show that $\eta \ge 4$ and that in the case $\eta < 6$, we have $\# (\mathcal{A} \setminus \mathcal{A}_{\rm bulk}) \ge 1$. The statement clearly follows from these claims. \EEE

\textbf{Claim 1:}  Each polygon containing a non-equilibrated atom has at least eight vertices. 

\noindent \emph{Proof of Claim 1:} Due to Lemma \ref{LemmaHexagon}, both angles at  non-equilibrated atoms lie in $(\frac{2\pi}{3},\frac{4\pi}{3})$ and therefore \EEE each polygon containing a non-equilibrated atom is not a hexagon. Thus, it has at least eight vertices  by Lemma \ref{LemmaPolygon}. \EEE

\EEE

\textbf{Claim 2:}  Each polygon contains either no or at least two non-equilibrated atoms. \EEE

\noindent \emph{Proof of Claim 2:}  Consider a polygon with $k$ edges which contains a non-equilibrated atom \EEE with interior angle $\theta_1 \neq \frac{2\pi}{3}, \frac{4\pi}{3}$. By Lemma \ref{LemmaPolygon} we have that $k \in 2\mathbb{N}$. Assume by contradiction that all the other angles $\theta_2, \ldots, \theta_k$ \EEE are either $\frac{2\pi}{3}$ or $\frac{4\pi}{3}$.  We have
\begin{align*}
\sum_{j=1}^k \theta_j =\pi(k-2).
\end{align*}
Since  we assumed that  $\theta_j$, $j \ge 2$, \EEE   are integer multiples of $\frac{2\pi}{3}$, we have
\begin{align*}
\theta_1 + k' \frac{2\pi}{3}= \pi(k-2),
\end{align*}
where $k' \in \mathbb{N}$ is given by $k'=\#\{j : \theta_j =\frac{2\pi}{3}\}+2\#\{j : \theta_j =\frac{4\pi}{3}\}$. This implies
\begin{align*}
k =\frac{\theta_1}{\pi}+ 2 + \frac{2}{3}k'.
\end{align*}
Since both $k,k' \in \mathbb{N}$  and $\frac{2\pi}{3} < \theta_1 < \frac{4\pi}{3}$ by Lemma \ref{LemmaHexagon}, \EEE there exists only a solution to the equation if $\theta_1 =\pi$ and $k' \in 3\mathbb{N}$. This implies that $k$ is odd:  a contradiction.

 \textbf{Claim 3:} \EEE  We have $\eta \ge 4$. If $\eta<6$, then $\# (\mathcal{A} \setminus \mathcal{A}_{\rm bulk}) \ge 1$.   \EEE

 \noindent \emph{Proof of Claim 3:} \EEE  Since $\# \mathcal{A}_{\rm bulk} \ge 1$,  there exists a non-equilibrated bulk atom. As $C_n$ does not have acyclic bonds, we observe that this atom is a vertex of at least two polygons. \EEE  Claim 1 then yields that there have to be at least two polygons  with at least eight vertices, i.e., $\eta \geq 4$, $\eta \in 2\mathbb{N}$. \EEE

 It remains  to show that, if $\eta =4$, then   $\# (\mathcal{A} \setminus \mathcal{A}_{\rm bulk}) \ge 1$.     Assume by contradiction that $\eta=4$ and $\mathcal{A} \setminus \mathcal{A}_{\rm bulk} = \emptyset$. As $\eta=4$, the two non-hexagons have to be octagons.  By Claim 1, \EEE the assumption $\mathcal{A} \setminus \mathcal{A}_{\rm bulk} = \emptyset$,  and the fact that $C_n$ has no acyclic bonds, \EEE we find that all non-equilibrated atoms are contained in both octagons.

Denote the interior angles in the first octagon different from $ \{\frac{2\pi}{3},\frac{4\pi}{3}\}$ by $\alpha_1,\ldots,\alpha_k$, where $1 \le k \le 8$. Similarly, the angles in the second octagon  different from $ \{\frac{2\pi}{3},\frac{4\pi}{3}\}$ \EEE are denoted by $\beta_1,\ldots,\beta_k$, where without restriction $\alpha_i$ and $\beta_i$ lie at the same atom. Note that  $\beta_i = 2\pi - \alpha_i$ for $i=1,\ldots,k$ as non-equilibrated atoms are $2$-bonded. Due to Lemma \ref{lemma: octa}(ii), all other interior angles of the octagons are $\frac{2\pi}{3}$.  Thus, by the interior angle sum of the octagons we obtain
\begin{align*}
\sum_{j=1}^k \alpha_j +   (8-k) \EEE \frac{2\pi}{3} = 6\pi,  \ \ \ \  \sum_{j=1}^k \beta_j +  (8-k) \EEE \frac{2\pi}{3} = 6\pi.
\end{align*}
Using $\sum_{j=1}^k \beta_j = 2\pi k- \sum_{j=1}^k \alpha_j $ and summing the two equations, we obtain the unique solution $k=2$. In particular, this implies $\# \mathcal{A} = 2$. Denote the two atoms in $\mathcal{A}$ by $x_1$ and $x_2$. From Lemma \ref{lemma: octa}(i) we get that the atoms $x_1$ and $x_2$ lie `on opposite sides' of the octagons, \EEE i.e., the shortest path in the bond graph connecting $x_1$ and $x_2$ has length $4$. Then we also see that $\alpha_1=\alpha_2=\beta_1=\beta_2=\pi$ by a simple geometric argument.  (An octagon with this geometry is depicted in the rightmost configuration in Fig.~\ref{FigureFlexible}.) \EEE Finally, this yields that the two octagons are identical up to an isometry. This, however, contradicts the fact that both non-equilibrated atoms $x_1$ and $x_2$ are contained in both octagons. Thus, $\eta=4$ and $\mathcal{A} \setminus \mathcal{A}_{\rm bulk} = \emptyset$ is not possible. This concludes the proof. \EEE
\end{proof}

 Based on Lemma  \ref{LemmaOctagon}, we now show that non-equilibrated bulk atoms cannot exist in ground states with no acyclic bonds. \EEE
\begin{lemma}[Non-equilibrated bulk atoms\EEE] \label{LemmaAbulk} Let $n \geq 1$ and let $C_n$ be a ground state with no acyclic bonds. \EEE  Then $\mathcal{A}_{\mathrm{bulk}} = \emptyset$. 
\end{lemma}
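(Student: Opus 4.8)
The plan is to argue by contradiction and strong induction on $n$, reducing everything to the refined energy inequality of Lemma \ref{lemma: eta energy}. Assume $\mathcal{A}_{\rm bulk}\neq\emptyset$. By Lemma \ref{LemmaOctagon} we immediately have $\eta\ge 4$ in both alternatives (i) and (ii). The whole argument then hinges on showing that at least one of the two inequalities
$$\mathcal{E}^{\rm bnd}(C_n)\ge -\tfrac32 d+3,\qquad \mathcal{E}^{\rm bulk}(C_n)\ge -\lfloor\beta(n-d)\rfloor$$
appearing in \eqref{eq: strict?} is \emph{strict}. Once this is achieved, the strict version of Lemma \ref{lemma: eta energy} together with $\eta\ge 4$ gives
$$\mathcal{E}(C_n)\ge -\tfrac32 n+\sqrt{\tfrac32(n+\eta-4)}+1\ge -\tfrac32 n+\sqrt{\tfrac32 n}+1=-\beta(n)+1>-\lfloor\beta(n)\rfloor,$$
the last inequality holding because $\lfloor\beta(n)\rfloor>\beta(n)-1$ always. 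This contradicts Theorem \ref{TheoremEnergyGroundstates}, proving $\mathcal{A}_{\rm bulk}=\emptyset$. The base of the induction is vacuous: for $n$ so small that no configuration possesses a bulk atom there is nothing to prove.

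First I would dispose of the case in which some non-equilibrated atom is a \emph{boundary} atom, that is $\mathcal{A}\setminus\mathcal{A}_{\rm bulk}\neq\emptyset$; this always occurs in alternative (ii) of Lemma \ref{LemmaOctagon} and may also occur in (i). By Lemma \ref{LemmaHexagon} such an atom $x_i$ is $2$-bonded with bond-angle $\theta_i\in(\tfrac{2\pi}{3},\tfrac{4\pi}{3})$, in particular $\theta_i\neq\tfrac{2\pi}{3}$. Since $C_n$ is a connected ground state with no acyclic bonds, Lemma \ref{LemmaBoundaryEnergy} applies, and the violation of the angle condition \eqref{eq: 3} forces \eqref{BoundaryEnergyEstimate} to be strict, i.e.\ $\mathcal{E}^{\rm bnd}(C_n)>-\tfrac32 d+3$. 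This is exactly the strictness needed above.

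The remaining, and genuinely delicate, case is alternative (i) with $\mathcal{A}=\mathcal{A}_{\rm bulk}$: every non-equilibrated atom is a $2$-bonded bulk atom with bond-angle in $(\tfrac{2\pi}{3},\tfrac{4\pi}{3})$. Here I would extract the strictness from the bulk term through the induction hypothesis. Suppose, towards strictness, that $\mathcal{E}^{\rm bulk}(C_n)=-\lfloor\beta(n-d)\rfloor$; then $C_n^{\rm bulk}$ is itself a ground state on $n-d$ atoms, hence connected by Theorem \ref{TheoremEnergyGroundstates} and repulsion-free with unit bonds by Remark \ref{rem: repulsionsfree}. Pick a non-equilibrated bulk atom $x$ of $C_n$; its defining bond-angle is unchanged whenever its two neighbours also lie in $X_n^{\rm bulk}$, so $x$ stays non-equilibrated in $C_n^{\rm bulk}$. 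If $x$ is a bulk atom of $C_n^{\rm bulk}$ and the latter has no acyclic bonds, the induction hypothesis (Lemma \ref{LemmaAbulk} at $n-d<n$) is violated; if instead $x$ has become a boundary atom of $C_n^{\rm bulk}$, then it is a $2$-bonded boundary atom violating \eqref{eq: 3}, so Lemma \ref{LemmaBoundaryEnergy} makes $\mathcal{E}^{\rm bulk}(C_n)$ strict. Either way the required strict inequality follows.

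The main obstacle is precisely this last case, where passing to the bulk can destroy the hypotheses under which Lemmas \ref{LemmaBoundaryEnergy} and \ref{LemmaAbulk} were proved: deleting the boundary cycle of $C_n$ may disconnect $C_n^{\rm bulk}$, create acyclic bonds (flags), or turn a neighbour of $x$ into a removed boundary atom so that $x$ drops to being $1$-bonded. I expect the real effort to lie in a careful case analysis showing that each such degeneracy either already forces $\mathcal{E}^{\rm bulk}(C_n)>-\lfloor\beta(n-d)\rfloor$ (a genuine flag or a lost bond strictly lowers the bond count relative to $\lfloor\beta(n-d)\rfloor$, cf.\ Remark \ref{rem: repulsionsfree}) or can be reduced to a connected, acyclic-bond-free sub-configuration on which the induction hypothesis does apply. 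Controlling these boundary-to-bulk transitions, rather than the energy bookkeeping, is the crux of the argument.
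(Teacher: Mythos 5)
Your proposal follows the same architecture as the paper's proof: strong induction, contradiction via the dichotomy of Lemma \ref{LemmaOctagon}, and strictness of one of the two inequalities in \eqref{eq: strict?} fed into the strict version of Lemma \ref{lemma: eta energy}; your treatment of the case $\mathcal{A}\setminus\mathcal{A}_{\mathrm{bulk}}\neq\emptyset$ is exactly the paper's Case i). The genuine gap sits precisely where you located the crux, and the repair you sketch for it is wrong on a key point: you assert that a flag in $C_n^{\mathrm{bulk}}$ ``strictly lowers the bond count relative to $\lfloor\beta(n-d)\rfloor$'', i.e., forces $\mathcal{E}^{\mathrm{bulk}}(C_n)>-\lfloor\beta(n-d)\rfloor$. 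That is false: Step 1 of the proof of Theorem \ref{TheoremEnergyGroundstates} shows that configurations with flags can attain the ground-state energy exactly, and by Lemma \ref{LemmaOctagonflag} ground states may indeed carry one or two flags. So when $C_n^{\mathrm{bulk}}$ is itself a ground state containing a flag, \emph{neither} strictness option is available, and a framework that relies exclusively on making one inequality in \eqref{eq: strict?} strict cannot close the contradiction. A second, smaller jump: your claim that a non-equilibrated boundary atom of $C_n^{\mathrm{bulk}}$ ``makes $\mathcal{E}^{\mathrm{bulk}}(C_n)$ strict'' silently needs $C_n^{\mathrm{bulk}}$ to be connected with no acyclic bonds (the hypotheses of Lemma \ref{LemmaBoundaryEnergy}) and then a quantitative conversion à la Lemma \ref{lemma: eta energy} with $\eta^*\ge 2$ --- which is exactly the degenerate territory you are trying to avoid.

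The paper resolves these sub-cases not by bulk-strictness but by separate quantitative arguments, all exploiting that in the all-bulk alternative Lemma \ref{LemmaOctagon} gives $\eta\ge 6$ (your blanket reduction to $\eta\ge 4$ discards information that is actually needed there). If $C_n^{\mathrm{bulk}}$ has at least two flags, removing both yields the refined bound $\mathcal{E}^{\mathrm{bulk}}(C_n)\ge -2-\lfloor\beta(n-d-2)\rfloor\ge -\tfrac{3}{2}(n-d)+\sqrt{\tfrac{3}{2}(n-d-2)}+1$, which combined with Lemma \ref{LemmaBoundaryestimate} and Lemma \ref{LemmaSquareroot} (with $j=4$, $m=4+\eta$, $\eta\ge6$) gives the contradiction directly. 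A bridge in $C_n^{\mathrm{bulk}}$ is excluded by the rigidity of Remark \ref{rem: bridge}: it forces $n-d=12$ and two regular hexagons joined by a bridge, all of whose atoms are equilibrated, contradicting $\#\mathcal{A}_{\mathrm{bulk}}\ge 2$. With $l\le 1$ flags and no bridge, one removes $l$ atoms to get an acyclic-bond-free ground state $C^*_{n-d-l}$, applies the induction hypothesis to $C^*$ (this part of your plan is right --- one also needs the observation that, since $\#\mathcal{A}_{\mathrm{bulk}}\ge2$ and $l\le 1$, some non-equilibrated atom survives with both its bonds intact), obtains a non-equilibrated \emph{boundary} atom of $C^*$, hence the strict Lemma \ref{lemma: eta energy} for $C^*$ with $\eta^*\ge2$, and then runs a further round of bookkeeping (Lemma \ref{LemmaSquareroot} with $j=4+\tfrac{l}{2}$, $m=10-l$, plus explicit checks for $10\le n\le 15$ when $l=1$). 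Note that this route never establishes $\mathcal{E}^{\mathrm{bulk}}(C_n)>-\lfloor\beta(n-d)\rfloor$; the contradiction for $C_n$ is assembled from the combined estimates. So your plan needs this replacement machinery in the flag sub-cases, not merely ``careful case analysis'' of the strictness dichotomy.
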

\begin{proof} We prove the statement by induction. We first note that the statement is true for $1 \le n \le 9$. In fact, in this case the bond graph contains at most one polygon by Lemma \ref{LemmaPolygon} and Lemma \ref{LemmaHexagon}. This \EEE implies $\mathcal{A}_{\mathrm{bulk}} \subset X_n^{\mathrm{bulk}}=\emptyset$.  Let $n \ge 10$. \EEE  We  assume that the result has been proven for $m<n$ and proceed to show  the statement for $n$. Assume by contradiction that $\mathcal{A}_{\mathrm{bulk}} \neq \emptyset$. By Lemma \ref{LemmaOctagon} there are two cases to consider:
\begin{itemize}
\item[i)] $\#(\mathcal{A}\setminus \mathcal{A}_{\mathrm{bulk}}) \geq 1$ and $\eta=4$.
\item[ii)] $\# \mathcal{A} \geq 2$ and $\eta \geq 6$.
\end{itemize}

\noindent \emph{Proof for Case $\mathrm{i)}$:} By Lemma \ref{LemmaBoundaryEnergy} we obtain the strict inequality
\begin{align*}
\mathcal{E}^{\mathrm{bnd}}(C_n) >-\frac{3}{2}d +3.
\end{align*}
By Lemma \ref{lemma: eta energy}  (with strict inequality)  this gives  
\begin{align}\label{EnergyestimateOctagon}
\mathcal{E}(C_n) \geq -\frac{3}{2}n + \sqrt{\frac{3}{2}(n+\eta-4)}+1.
\end{align}
Note that $\eta=4$ and therefore we have
\begin{align}\label{Strictenergyboundetabig}
\mathcal{E}(C_n) \geq -\frac{3}{2}n + \sqrt{\frac{3}{2}n}+1 > -\lfloor \beta(n)\rfloor
\end{align}
which contradicts Theorem \ref{TheoremEnergyGroundstates} and the fact that $C_n$ is a ground state.

\noindent  \emph{Proof for  Case $\mathrm{ii})$}: We are now in the case that $\#\mathcal{A} \geq 2$ and $\eta \geq 6$. By the previous case we can assume that $\mathcal{A} \setminus \mathcal{A}_{\mathrm{bulk}} =\emptyset$. This implies  $\#\mathcal{A}_{\mathrm{bulk}} \geq 2$. \EEE After removing the boundary, we can suppose that
\begin{align}\label{eq: bulk-equ}
\mathcal{E}(C_n^{\rm bulk}) = - \lfloor \beta(n-d) \rfloor.
\end{align}
Indeed, if $\mathcal{E}(C_n^{\rm bulk}) > - \lfloor \beta(n-d) \rfloor$, we derive by  Lemma \ref{lemma: eta energy}  (with strict inequality) that \eqref{EnergyestimateOctagon} holds. Since $\eta \ge 4$, we get a contradiction exactly as in Case i), see \eqref{Strictenergyboundetabig}. Hence, $C_n$ as well as $C_n^{\mathrm{bulk}}$ are ground states. We now distinguish the following cases: 
\begin{itemize}
\item[a)] $C_n^{\mathrm{bulk}}$ contains at least two flags,
\item[b)] $C_n^{\mathrm{bulk}}$ contains a bridge,
\item[c)] $C_n^{\mathrm{bulk}}$ contains less than two flags and no bridge. \EEE
\end{itemize} 
\EEE

\noindent  \emph{Proof for  Case $\mathrm{a})$}: We use Lemma \ref{LemmaBoundaryEnergy} to obtain
 \begin{align}\label{ineq: bnd2flags}
 \mathcal{E}^{\mathrm{bnd}}(C_n) \geq -\frac{3}{2}d + 3.
 \end{align}
 Using the fact that a flag contributes at least $-1$ to the energy  and applying   Theorem \ref{TheoremEnergyGroundstates} on the sub-configuration obtained after removing exactly two flags from $C_n^{\mathrm{bulk}}$, \EEE we  obtain 
 \begin{align}\label{ineq: bulk2flags}
\mathcal{E}^{\rm bulk}(C_n)  = \mathcal{E}(C_n^{\mathrm{bulk}})  \EEE \geq  \EEE -2 - \lfloor\beta(n-d-2)\rfloor \geq -\frac{3}{2}(n-d) +\sqrt{\frac{3}{2}(n-d-2)} + 1.
\end{align}
Combining (\ref{ineq: bnd2flags})-(\ref{ineq: bulk2flags}) we obtain
\begin{align*}
\mathcal{E}(C_n) \geq -\frac{3}{2}n +\sqrt{\frac{3}{2}(n-d-2)} + 4.
\end{align*}
By Lemma \ref{LemmaBoundaryestimate}  and $b = - \mathcal{E}(C_n)$ we obtain $
\mathcal{E}(C_n) \geq -\frac{3}{2}n + 4 + \sqrt{\frac{3}{2}(-4\mathcal{E}(C_n) + 4+\eta-5n)}.
$
Lemma \ref{LemmaSquareroot} for $j=4$, $m= 4+ \eta$, and $x = \mathcal{E}(C_n)$ yields
\begin{align*}
\mathcal{E}(C_n) \geq -\frac{3}{2}n + \sqrt{\frac{3}{2}(n+10+\eta-16)} + 1 \geq -\frac{3}{2}n + \sqrt{\frac{3}{2}n}+1 > -\lfloor \beta(n)\rfloor,
\end{align*}
where we used $\eta \geq 6$. This contradicts Theorem \ref{TheoremEnergyGroundstates} and the fact that $C_n$ is a ground state. \EEE

 \noindent  \emph{Proof for  Case $\mathrm{b})$}: In view of Remark \ref{rem: bridge}, $C_n^{\mathrm{bulk}}$ can only contain a bridge if $n-d = 12$ and $C_n^{\mathrm{bulk}}$ consists of two regular hexagons connected with a bridge. This contradicts $\#\mathcal{A}_{\mathrm{bulk}} \geq 2$. \EEE

\noindent  \emph{Proof for  Case $\mathrm{c})$}:  Denote by $l \in \lbrace 0,1 \rbrace$ the number of flags of $C_n^{\mathrm{bulk}}$ and let $C_{n-d-l}^*$ be the configuration which arises by removing $l$ atoms from $C_n^{\mathrm{bulk}}$ such that $C_{n-d-l}^*$ has no acyclic bonds.  Observe that $\mathcal{E}(C_{n-d-l}^*)-l \le \mathcal{E}(C_n^{\mathrm{bulk}})$. Then also $C_{n-d-l}^*$ is a ground state since otherwise $\mathcal{E}(C_n^{\rm bulk}) > - \lfloor \beta(n-d) \rfloor$ by   Lemma \ref{LemmaPropertiesbeta} 1) which contradicts \eqref{eq: bulk-equ}. \EEE   As $\# \mathcal{A}_{\rm bulk} \ge 2$ and $l \le 1$, $C_{n-d-l}^*$ contains a non-equilibrated atom. Thus, due to the fact that all hexagons in the bond graph   are regular (see Lemma \ref{LemmaHexagon}),  we have that $\eta^{*} = \eta(C_{n-d-l}^*) \geq 2$, where $\eta^{*}$ denotes  the excess of edges \EEE of $C_{n-d-l}^*$.  By the induction assumption we have that $C_{n-d-1}^*$ has no non-equilibrated bulk atom and thus  \EEE   has a non-equilibrated boundary atom.   Thus, strict inequality holds in \eqref{BoundaryEnergyEstimate} for $C_{n-d-l}^*$, see Lemma \ref{LemmaBoundaryEnergy}. Therefore,  by Lemma \ref{lemma: eta energy}  (with strict inequality)  applied for  $C_{n-d-l}^*$ and  $\eta^{*}  \geq 2$ we obtain  \EEE 
\begin{align*}
\mathcal{E}^{\mathrm{bulk}}(C_n) = \mathcal{E}(C_n^{\mathrm{bulk}}) \ge \mathcal{E}(C_{n-d-l}^*)-l \EEE   \geq -\frac{3}{2}(n-d-l) + \sqrt{\frac{3}{2}(n-d-l-2)}+1-l.
\end{align*}
Using Lemma \ref{LemmaBoundaryEnergy} for $C_n$  and summing $\mathcal{E}^{\mathrm{bnd}}(C_n)$ and $\mathcal{E}^{\mathrm{bulk}}(C_n)$, we derive
\begin{align*}
\mathcal{E}(C_n) \geq -\frac{3}{2}n + 4 + \sqrt{\frac{3}{2}(n-d-2-l)}+ \frac{l}{2}.
\end{align*}
By Lemma \ref{LemmaBoundaryestimate}  and  $\eta \ge 6$ \EEE we obtain 
\begin{align*}
\mathcal{E}(C_n) \geq -\frac{3}{2}n + 4 + \frac{l}{2} + \sqrt{\frac{3}{2}(-4\mathcal{E}(C_n) + 10 \EEE -5n-l)}.
\end{align*}
Lemma \ref{LemmaSquareroot} for $j=4+\frac{l}{2}$, $m=  10 \EEE -l$, and $x = \mathcal{E}(C_n)$ yields
\begin{align*}
\mathcal{E}(C_n) &\geq -\frac{3}{2}n +1 +\frac{l}{2} + \sqrt{\frac{3}{2}(n-3l)} =  -\frac{3}{2}n +\sqrt{\frac{3}{2}n} + 1  +l \Big(\frac{1}{2} - \frac{9}{2(\sqrt{\frac{3}{2}n}+\sqrt{\frac{3}{2}(n-3l)})}\Big). \EEE
\end{align*}
 This estimate can be used to calculate $\mathcal{E}(C_n) \geq -\frac{3}{2}n + \sqrt{\frac{3}{2}n}+1 >-\lfloor\beta(n)\rfloor$ for all $n \ge 16$ when $l=1$  or for all  $n \geq 10$ when $l=0$. In the cases $10 \le n \le 15$, $l=1$, one can use $\mathcal{E}(C_n) \geq -\frac{3}{2}n +1 +\frac{1}{2} + \sqrt{\frac{3}{2}(n-3)}$ and compare this estimate directly with $\lfloor \beta(n)\rfloor$  to obtain $\mathcal{E}(C_n) > -\lfloor \beta(n)\rfloor$, \EEE  cf.\ Table \ref{table2}. In every case, this yields a  contradiction to the fact that $C_n$ is a ground state. \EEE 
\end{proof}

\subsection{Characterization of ground states: proof of Theorem \ref{TheoremGroundstatesleq31}}
 
 In this section we prove  Theorem \ref{TheoremGroundstatesleq31}. \EEE

\textbf{Boundary $k$-gon:} We say that a $k$-gon in the bond graph is a \textit{boundary} $k$-gon, whenever it shares at least one edge with the unbounded face.

The following proposition is the main ingredient for the proof of Theorem \ref{TheoremGroundstatesleq31}.

\begin{proposition}\label{TheoremGroundstatesleq-new} Let $n \geq10$ and let $C_n$ be a ground state. Then the bond graph consists only of hexagonal cycles except for  at most two flags and at most one boundary octagon. The bond graph cannot contain both flags and an octagon at the same time. \EEE
\end{proposition}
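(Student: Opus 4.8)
The plan is to strip off the acyclic bonds, reduce to the cyclic core, and then control the excess of edges $\eta$ through the refined energy inequality of Lemma \ref{lemma: eta energy}. By Lemma \ref{lemma: bridges} a ground state contains no bridges for $n \geq 13$ (the cases $10 \le n \le 12$ being read off Table \ref{table2}), and by Lemma \ref{LemmaOctagonflag} it has at most two flags. Removing the (at most two) flag atoms one at a time produces a connected sub-configuration $C^{*}$ on $n-f$ atoms with no acyclic bonds, where $f\le 2$; since ground states are repulsion-free with unit bonds (Remark \ref{rem: repulsionsfree}), each removal changes the energy by exactly $-1$, so Lemma \ref{LemmaPropertiesbeta}\,1) shows that every intermediate configuration, and in particular $C^{*}$, is again a ground state. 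It thus suffices to prove that $C^{*}$ consists of hexagons together with at most one boundary octagon, and that an octagon in $C^{*}$ precludes flags in $C_n$.

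For the geometry of $C^{*}$ the essential input is Lemma \ref{LemmaAbulk}, which gives $\mathcal{A}_{\mathrm{bulk}}=\emptyset$, so every non-equilibrated atom is a boundary atom. Together with Lemma \ref{LemmaHexagon} (regular hexagons, bond-angles in $[\tfrac{2\pi}{3},\tfrac{4\pi}{3}]$) and Lemma \ref{lemma: octa} (an octagon admits no interior angle equal to $\tfrac{4\pi}{3}$, hence cannot be built from equilibrated atoms only), this forces any non-hexagonal polygon to contain a non-equilibrated, and therefore boundary, atom. Such an atom is $2$-bonded by Lemma \ref{LemmaHexagon}, so its two incident edges border the exterior face and the polygon is a boundary octagon. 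To bound $\eta$, note that the presence of a defect yields a non-equilibrated boundary atom, so equality condition \eqref{eq: 3} in Lemma \ref{LemmaBoundaryEnergy} fails and $\mathcal{E}^{\mathrm{bnd}}(C^{*}) > -\tfrac{3}{2}d+3$; plugging this into the strict form of Lemma \ref{lemma: eta energy} gives
\begin{align*}
\mathcal{E}(C^{*}) \geq -\tfrac{3}{2}(n-f) + \sqrt{\tfrac{3}{2}\big((n-f)+\eta-4\big)} + 1 .
\end{align*}
Comparing with $\mathcal{E}(C^{*}) = -\lfloor \beta(n-f)\rfloor$ (Theorem \ref{TheoremEnergyGroundstates}) and using $\lfloor\beta(m)\rfloor>\beta(m)-1$ forces $\eta<4$, whence $\eta=2$ by Lemma \ref{LemmaPolygon}: exactly one boundary octagon.

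For the incompatibility of flags with an octagon I would argue quantitatively. If $C^{*}$ carries the octagon (so $\eta=2$ and the boundary energy is strict) while $C_n$ has $f\ge 1$ flags, then undoing the $f$ removals, $\mathcal{E}(C_n)=\mathcal{E}(C^{*})-f$, gives
\begin{align*}
\mathcal{E}(C_n) \geq -\tfrac{3}{2}n + \tfrac{f}{2} + \sqrt{\tfrac{3}{2}(n-f-2)} + 1 .
\end{align*}
Because $-\lfloor \beta(n)\rfloor < -\beta(n)+1 = -\tfrac{3}{2}n+\sqrt{\tfrac{3}{2}n}+1$, it is enough to verify $\tfrac{f}{2}+\sqrt{\tfrac{3}{2}(n-f-2)} \ge \sqrt{\tfrac{3}{2}n}$, which rearranges to $\sqrt{n}+\sqrt{n-f-2}\ge (f+2)\sqrt{6}/f$; this holds for all $n\ge 10$ when $f=2$ and for $n\ge 16$ when $f=1$, contradicting minimality. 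The finitely many cases $10\le n\le 15$ with a single flag are dispatched directly against Table \ref{table2}.

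The main obstacle is the step bounding $\eta$: I must ensure that $\eta\ge 2$ genuinely produces strictness somewhere. This is immediate when the defect touches the boundary, but it is delicate for possible interior ``holes'' — large elementary polygons whose vertices are all equilibrated and hence compatible with $\mathcal{A}_{\mathrm{bulk}}=\emptyset$. These are excluded by running the argument as an induction on $n$ via the boundary/bulk splitting: either the bulk $C^{*,\mathrm{bulk}}$ is non-minimal, which activates the strict-bulk alternative of Lemma \ref{lemma: eta energy} and again forces $\eta<4$, or $C^{*,\mathrm{bulk}}$ is itself a ground state, to which the inductive hypothesis applies, so that an interior defect of $C^{*}$ would be a bulk octagon whose non-equilibrated atoms are non-equilibrated bulk atoms of $C^{*}$, contradicting Lemma \ref{LemmaAbulk}. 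Making this bulk/boundary bookkeeping precise, together with the base cases $10\le n\le 15$, is the technical heart of the proof.
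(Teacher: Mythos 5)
Your overall architecture matches the paper's proof where it works: reduce to a ground state $C^{*}$ without acyclic bonds (removing flags one at a time and invoking Lemma \ref{LemmaPropertiesbeta}\,1) to keep minimality, modulo the sign slip -- each flag removal raises the energy by $1$, consistent with your own relation $\mathcal{E}(C_n)=\mathcal{E}(C^{*})-f$); use a non-equilibrated \emph{boundary} atom (Lemma \ref{LemmaAbulk}) to get strictness in Lemma \ref{LemmaBoundaryEnergy} and hence, via the strict form of Lemma \ref{lemma: eta energy}, $\eta<4$, so $\eta=2$ and a single boundary octagon; and your quantitative flags-versus-octagon estimate is in substance the paper's argument (it sets $j=l+2$ and reuses the computation of Lemma \ref{LemmaOctagonflag}; your threshold $\sqrt{n}+\sqrt{n-f-2}\ge (f+2)\sqrt{6}/f$, valid for $n\ge 10$ when $f=2$ and $n\ge 16$ when $f=1$ with the remaining cases checked against the table, is the same bound). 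The genuine problem is exactly the point you yourself flag as the ``main obstacle'': your main line claims that every non-hexagonal elementary polygon contains a non-equilibrated atom, citing Lemma \ref{lemma: octa} -- but that lemma only treats octagons. A $10$-gon can have all interior angles in $\lbrace \tfrac{2\pi}{3},\tfrac{4\pi}{3}\rbrace$ (eight angles $\tfrac{2\pi}{3}$, two angles $\tfrac{4\pi}{3}$: a lattice ``hole'' shaped like two fused hexagons), and similarly for larger even polygons, so such defects are fully compatible with $\mathcal{A}=\emptyset$, and neither Lemma \ref{lemma: octa} nor Lemma \ref{LemmaAbulk} sees them.

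Your proposed repair -- induction on $n$ through the bulk/boundary splitting -- does not close this case. In the dangerous scenario both hypotheses of the strict alternative in Lemma \ref{lemma: eta energy} fail simultaneously: $C^{*,\mathrm{bulk}}$ is a ground state \emph{and} $\mathcal{E}^{\mathrm{bnd}}(C^{*})=-\tfrac{3}{2}d+3$, so no strictness is triggered, and the non-strict inequality $\mathcal{E}\ge -\tfrac32 n+\sqrt{\tfrac32(n+\eta)}$ tolerates $\eta=4$ for large $n$. The inductive hypothesis applied to $C^{*,\mathrm{bulk}}$ only constrains the elementary polygons \emph{of the bulk}: a hole lying within one atom layer of the outer boundary merges with the exterior face of the bulk once $\partial X_n$ is removed, so the bulk can be perfectly defect-free while $C^{*}$ retains its $10$-gon. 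Your phrase ``an interior defect of $C^{*}$ would be a bulk octagon'' is precisely wrong for these defects: the octagon case is indeed killed by Lemma \ref{lemma: octa}(ii) together with Lemma \ref{LemmaAbulk}, but the problematic defects are the $\ge 10$-gons with all vertices equilibrated. The paper disposes of them by a mechanism absent from your proposal (its Claim~1): if $\mathcal{A}=\emptyset$, the connected ground state is a subset of the hexagonal lattice, the energy then coincides with the one of Mainini--Stefanelli, and one repeats the \emph{relocation} argument of \cite[Proposition 6.7]{Mainini} -- moving boundary atoms into the hole strictly lowers the energy. This is a rearrangement argument, not an energy count, and no counting variant of your induction substitutes for it; without it (or some equivalent device excluding equilibrated holes), your proof of the proposition is incomplete.
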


   Once this proposition is proven, for the proof of Theorem \ref{TheoremGroundstatesleq31} it remains to show that in the case $n \ge 30$ no octagons may occur. In fact, Theorem \ref{TheoremGroundstatesleq31} then follows from Theorem \ref{TheoremEnergyGroundstates}, Remark \ref{rem: main}, Lemma \ref{LemmaNeighborhood}, and Lemma \ref{lemma: bridges}. For \EEE $n = 9$, a ground state may contain an octagon and a flag at the same time, see  the rightmost configuration \EEE in Fig.~\ref{FigureFlexible}.  In this sense, the assumption $n \geq10$ in \EEE Proposition \ref{TheoremGroundstatesleq-new} is sharp.

\begin{proof}[Proof of Proposition \ref{TheoremGroundstatesleq-new}] 
Let $C_n$ be a ground state. Recall by Theorem \ref{TheoremEnergyGroundstates} that  $\mathcal{E}(C_n) = -b=-\lfloor \beta(n)\rfloor$ and  that   $C_n$ \EEE is connected. We divide the proof into several steps.  First, we  prove that ground states contain only hexagonal cycles if $\mathcal{A}=\emptyset$ (Claim 1). Then, in the case  $\mathcal{A} \neq \emptyset$, we prove  that at most one boundary octagon may exist (Claim 2). Finally, we show that the existence of a non-hexagonal cycle excludes the existence of flags (Claim 3).   The statement follows from Claim 1 - Claim 3 and Lemma \ref{LemmaOctagonflag}. \EEE    The claims are proven by contradiction.

 \noindent\textbf{Claim 1:} If $\mathcal{A}= \emptyset$, $C_n$ is defect-free.

 \noindent \emph{Proof of Claim 1}:  Suppose that there exists  a $k$-gon, $k \ge 8$. Since  $C_n$ \EEE is connected and $\mathcal{A}=\emptyset$, we have that  $C_n$ \EEE is a connected subset of the hexagonal lattice.  We observe that then our energy coincides with the one considered in \cite{Mainini}. We can repeat  the argument in  the proof of \cite[Proposition 6.7]{Mainini}, i.e., we can move boundary atoms inside this $k$-gon and observe that one can strictly lower the energy. This contradicts the fact that $C_n$ is a ground state. 

\noindent\textbf{Claim 2:}   For every ground state without flags  there exists at most one $k$-gon, $k\ge 8$. If it  exists, it has to be a boundary octagon.

 \noindent \emph{Proof of Claim 2}:   In view of  Claim 1, we can suppose $\mathcal{A} \neq \emptyset$ and $\eta \geq 2$.  As $\eta \ge 2$, $C_n$ contains no bridge (see Lemma \ref{lemma: bridges}, Remark \ref{rem: bridge}) and thus no acyclic bonds. \EEE  By   Lemma \ref{LemmaAbulk}   we have that $\mathcal{A}_{\mathrm{bulk}}=\emptyset$ and therefore $\mathcal{A}\setminus \mathcal{A}_{\mathrm{bulk}} \neq \emptyset$. Applying Lemma \ref{LemmaBoundaryEnergy}   and Lemma \ref{lemma: eta energy}  (with strict inequality) we obtain that
 \begin{align*}
 \mathcal{E}(C_n) \geq -\frac{3}{2}n +1 + \sqrt{\frac{3}{2}(n+\eta-4)}.
 \end{align*}
 In the case $\eta \geq 4$ we obtain a contradiction to the fact that $C_n$ is a ground state. Therefore,  we can assume that \EEE $\eta=2$, i.e., there exists exactly one octagon. We \EEE  apply Lemma \ref{LemmaHexagon}  to find that hexagons are regular which implies  $\mathcal{A}$ has to be contained in the octagon of the bond graph. By Lemma \ref{LemmaAbulk} \EEE we observe  $\mathcal{A} \subset \partial X_n$.  This implies that the octagon is a boundary octagon.

\noindent\textbf{Claim 3:}  A ground state cannot contain both  a  $k$-gon, $k \ge 8$, and a flag. 
 
  \noindent \emph{Proof of Claim 3}:  Assume by contradiction that there exists a $k$-gon and $l$ flags in the bond graph.  (As before in Claim 2, by Remark \ref{rem: bridge} there are no bridges.) \EEE Using the fact that a flag contributes at least $-1$ to the energy and removing the flags we obtain a sub-configuration $C_{n-l}$ satisfying $ \mathcal{E}(C_{n}) \ge \mathcal{E}(C_{n-l}) - l$. Then also $C_{n-l}$ is a ground state since otherwise $\mathcal{E}(C_{n})> - \lfloor \beta(n) \rfloor$ by   Lemma \ref{LemmaPropertiesbeta} 1) which contradicts Theorem \ref{TheoremEnergyGroundstates}. As $C_{n-l}$ has no acyclic bonds, we can use   Lemma \ref{LemmaAbulk}  to find \EEE $\mathcal{A}_{\mathrm{bulk}}=\emptyset$ and therefore $\mathcal{A}\setminus \mathcal{A}_{\mathrm{bulk}} \neq \emptyset$, where $\mathcal{A}_{\mathrm{bulk}}, \mathcal{A}$ correspond to configuration $C_{n-l}$.  Applying Lemma \ref{LemmaBoundaryEnergy}    and Lemma \ref{lemma: eta energy} on $C_{n-l}$ \EEE (with strict inequality for $\eta \ge 2$),  we obtain  
$$  \mathcal{E}(C_n) \geq   \mathcal{E}(C_{n-l}) - l \ge -l -\frac{3}{2}(n-l) +1 + \sqrt{\frac{3}{2}(n-l-2)}. $$
Setting $j = l +2$, we find 
\begin{align*}
\mathcal{E}(C_n)\ge -\Big( \frac{3}{2}n - \sqrt{\frac{3}{2}n} -\frac{1}{2}j +\frac{\frac{3}{2}j}{\sqrt{\frac{3}{2}n}+\sqrt{\frac{3}{2}(n-j)}}\Big).
\end{align*}
 Recall that $j \ge 3$. \EEE At this point, we can follow verbatim the proof of Lemma \ref{LemmaOctagonflag} to get the contradiction   $\mathcal{E}(C_n) > -\lfloor \beta(n)\rfloor$ for each $n \ge 10$. This contradicts Theorem \ref{TheoremEnergyGroundstates} and the fact that $C_n$ is a ground state. \EEE
\end{proof}

As a final preparation for the proof  of Theorem \ref{TheoremGroundstatesleq31}, we need the following elementary geometric lemma. \EEE

\begin{lemma}\label{LemmaOctagon3bdd} Let $C_n$ be a ground state with $\eta = 2$ that does not contain any acyclic bonds. Let $\{x_0,\ldots,x_7\}$ be the \EEE octagon in the bond graph.  Set $X^3:= \{x_i \in \{0,\ldots,7\}: x_i \text{ is } 3\text{-bonded}\}$. \EEE We have the following:
\begin{itemize}
\item[i)] If $ \# X^3 \EEE  \leq 3$, then $X^3$ \EEE is connected,
\item[ii)] If  $\#X^3 \in \{4,5\}$, then there exist $j_1,j_2,j_3 \in \{0,\ldots,7\}$ such that $X_n \setminus \{x_{j_1},x_{j_2},x_{j_3}\}$ is not connected,  $x_{j_1},x_{j_2},x_{j_3}$ are $2$-bonded, \EEE and  
\begin{align}\label{eq: remove3}
\mathcal{E}\big(C_n \setminus \{ (x_{j_1},q_{j_1}), (x_{j_2}, q_{j_2}),(x_{j_3},q_{j_3}) \} \big)   \le \EEE \mathcal{E}(C_n)+5.
\end{align}\EEE 
\item[iii)] If  $\# X^3 \geq 6$, then $C_n^{\mathrm{bulk}}$ is not connected.
\end{itemize}
\end{lemma}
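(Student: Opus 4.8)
The plan is to reduce everything to a rigid local picture of the octagon and then read off the three assertions combinatorially. First I would record the geometry forced by $\eta=2$. By Remark~\ref{rem: repulsionsfree} and Lemma~\ref{LemmaHexagon} all bonds have unit length, every hexagon is regular, and every bond-angle lies in $[\tfrac{2\pi}{3},\tfrac{4\pi}{3}]$. At a $3$-bonded octagon vertex $x_i$ the third bond cannot enter the (empty) octagon face, so the octagon-interior angle $\theta_i$ is bounded by the two remaining bond-angles at $x_i$, each $\ge \tfrac{2\pi}{3}$; since $\theta_i\ge\tfrac{2\pi}{3}$ as well, all three angles equal $\tfrac{2\pi}{3}$. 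Hence $3$-bonded octagon vertices are equilibrated, the set $\mathcal{A}$ consists only of $2$-bonded vertices, whose interior angles lie in $[\tfrac{2\pi}{3},\tfrac{4\pi}{3})$ by Lemma~\ref{lemma: octa}(ii), and summing the interior angles to $6\pi$ gives $\#X^3\le 6$. In particular iii) is the case $\#X^3=6$.

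The engine for i) is that $X^3$ contains no isolated vertex. Suppose $x_i\in X^3$ but $x_{i-1},x_{i+1}\notin X^3$. The two exterior sectors at $x_i$ (bounded by the third bond and the two octagon edges) each have angle $\tfrac{2\pi}{3}$, and at least one must be a hexagon, for otherwise the edge $x_ix_i^b$ would be acyclic. If the sector toward $x_{i+1}$ were a hexagon $H$, then $x_ix_{i+1}\in H$ and, continuing around $H$, $x_{i+1}x_{i+2}\in H$; but $x_{i+1}$ is $2$-bonded, so its exterior angle $2\pi-\theta_{i+1}>\tfrac{2\pi}{3}$ is a single (necessarily unbounded) face, forcing $x_{i+1}x_{i+2}$ to border the unbounded face — a contradiction. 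The symmetric statement rules out the other sector, so $x_i$ cannot have two $2$-bonded neighbors. Thus every maximal run of $3$-bonded vertices has length $\ge 2$; if $\#X^3\le 3$ there is only one run, i.e.\ $X^3$ is connected, which is i).

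For ii) and iii) I would work with the number $g$ of maximal $3$-runs, which equals the number of arcs along which the octagon face and the unbounded face meet. A run of $m$ consecutive $\tfrac{2\pi}{3}$-corners turns the boundary by $m\tfrac{\pi}{3}$; since the octagon is a simple $8$-cycle of unit edges with total turning $2\pi$, a single run carrying all $3$-bonded vertices is impossible when $\#X^3\ge 4$ (most transparently, a run of $6$ already closes into a hexagon), so $g\ge 2$. Two shared arcs between a face and the unbounded face of a $2$-connected plane graph produce a small vertex separator between the hexagonal pockets grown from the two runs. For ii) I then pick three $2$-bonded vertices realizing this cut and containing at least one adjacent pair; their removal disconnects $X_n$, and because all bonds have unit length and the configuration is repulsion-free (Theorem~\ref{TheoremEnergyGroundstates}, Remark~\ref{rem: repulsionsfree}), the energy increases by exactly the number of incident bonds, which is at most $2\cdot 3-1=5$. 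For iii), where $g\ge 2$ and all $2$-runs sit on the boundary, deleting the whole boundary leaves the octagonal gap separating the pockets, so $C_n^{\mathrm{bulk}}$ is disconnected.

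The hard part will be the global planar-geometry step underlying the last paragraph: converting the local angle data into a genuine separation. Concretely, proving $g\ge 2$ requires excluding an over-curled single run by a non-self-intersection/rigidity argument — exploiting that a single octagonal defect carries a fixed $\tfrac{2\pi}{3}$ angular mismatch which, with only one defect permitted, must be exposed to the unbounded face — and then identifying precisely which $2$-bonded vertices to delete so that the cut is valid while retaining an adjacent pair for the energy count. By contrast, the angle bookkeeping of the first two paragraphs and the bond-counting energy estimate are routine once the separation structure is established.
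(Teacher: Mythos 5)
Your proposal follows the same architecture as the paper's proof: angle rigidity at $3$-bonded octagon vertices (all three angles equal $\frac{2\pi}{3}$, hence $\#X^3\le 6$), a no-isolated-vertex claim for $X^3$ giving i), and then a run-structure plus boundary-separation argument with a five-bond count for ii) and iii). Part i) is essentially correct: your face/sector argument at an isolated $3$-bonded vertex is a rephrasing of the paper's Claims 1 and 2 (there: a $2$-bonded octagon vertex must be a boundary atom, since otherwise its exterior angle, which exceeds $\frac{2\pi}{3}$ by Lemma \ref{lemma: octa}(ii), would put it on a second non-hexagonal polygon, contradicting $\eta=2$), and your energy count for ii) — removing three $2$-bonded vertices containing one bonded pair deletes exactly $2+2+2-1=5$ unit bonds, yielding \eqref{eq: remove3} — matches the paper. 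However, the two steps you defer as ``the hard part'' are precisely where the content of ii) and iii) lies, and as written they contain genuine gaps.

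First, your turning-number argument does not establish $g\ge 2$ for $\#X^3\in\{4,5\}$: the $2$-bonded vertices have interior angles anywhere in $[\frac{2\pi}{3},\frac{4\pi}{3})$, so a single run of four $\frac{2\pi}{3}$-corners contributes $\frac{4\pi}{3}$ of turning and the remaining four vertices need only supply $\frac{2\pi}{3}$, comfortably within their admissible range; your parenthetical only disposes of a run of six. The fact you need is Lemma \ref{lemma: octa}(i) — four consecutive $\frac{2\pi}{3}$ interior angles are impossible because they would create an extra hexagon together with a triangle or square — which caps runs at length $3$, gives $g\ge2$ for $\#X^3\ge4$, and for $\#X^3=6$ forces the rigid pattern (two $3$-runs separated by single $2$-bonded vertices, the paper's Claim 3) used in iii). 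Second, the separation itself is only asserted (``two shared arcs \dots produce a small vertex separator''): the paper's concrete mechanism is that since the chosen $2$-bonded vertices $x_{j_1},x_{j_2},x_{j_2+1}$ are boundary atoms, any path joining the two $3$-runs that avoids them would enclose one of them in a simple cycle, a contradiction; your appeal to $2$-connectivity is itself unproven (though true here, since bridgelessness together with maximal degree $3$ excludes cut vertices). Finally, for iii) you never verify that the two ``pockets'' contain bulk atoms at all: the paper observes that the middle vertices of the two $3$-runs are bulk atoms (absence of acyclic bonds), which is what makes the disconnection of $C_n^{\mathrm{bulk}}$ a nontrivial statement.
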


\begin{proof} Let $\{x_0,\ldots,x_7\}$ be  the \EEE octagon in the bond graph.  All following \EEE statements are seen $\mathrm{mod}\, 8$ with respect to the numeration of $i$. Assume that $x_{i+1},x_{i-1} \in \mathcal{N}(x_i)$ for all $i=0,\ldots,7$ and $x_i \notin \mathcal{N}(x_j) $ for all $j \notin \{i-1,i+1\}$. Let $\theta_i$, $i=0,\ldots,7$,  be \EEE the interior angles of the octagon.  We start with three preliminary observations. \EEE

\textbf{Claim 1:}  If $x_i$ is $2$-bonded, \EEE then  $x_i \in \partial  X_n$.

\noindent \emph{Proof of Claim  1:}  Consider a  $2$-bonded $x_i$  and assume \EEE by contradiction that   $x_i \in X_n^{\mathrm{bulk}}$. By  Lemma \ref{lemma: octa}(ii) \EEE we have that $\theta_i < \frac{4\pi}{3}$ and therefore the other angle $\alpha$ at $x_i$ satisfies $\alpha >\frac{2\pi}{3}$. Since  $x_{i}$ is a bulk \EEE atom,  Lemma \ref{LemmaHexagon} and $\alpha > \frac{2\pi}{3}$ imply that $x_{i}$ is contained in the octagon and in another polygon which is not a hexagon \EEE since for hexagons all interior angles are equal to $\frac{2\pi}{3}$. This contradicts $\eta=2$.

\textbf{Claim 2:} If $x_i$ is $3$-bonded, then $x_{i-1}$ or $x_{i+1}$ is also $3$-bonded. \EEE

\noindent \emph{Proof of Claim 2:} Assume by contradiction that there exists a $3$-bonded $x_i$  such that  $x_{i-1}$ and $x_{i+1}$ are $2$-bonded.  Denote  by $z$ the third neighbor of $x_i$. \EEE Since the bond graph of $C_n$ does not contain acyclic bonds,  $x_i$ and $z$ are contained in a polygon. This contains necessarily either $x_{i-1}$ or $x_{i+1}$,  say $x_{i-1}$. \EEE Thus,  since $x_{i-1}$ is $2$-bonded, this polygon contains also $x_{i-2}$. Then  \EEE $x_{i-1}$ has to be a bulk \EEE atom.  This, however, contradicts Claim 1. \EEE

\textbf{Claim 3:} If $x_{i-1},x_i,x_{i+1}$ are $3$-bonded, then $x_{i-2}$ and $x_{i+2}$ are $2$-bonded.

\noindent \emph{Proof of Claim 3:} If there were four consecutive $3$-bonded atoms, there would be four consecutive interior angles of size $\frac{2\pi}{3}$, see Lemma \ref{LemmaHexagon}. This, however, contradicts Lemma \ref{lemma: octa}(i). \EEE

 We now proceed with the proof of the statement. \EEE Set $k=\# X^3\EEE$. First,  by Lemma \ref{LemmaHexagon}, Lemma \ref{lemma: octa}(ii), \EEE and the fact that 
$
\sum_i \theta_i = 6\pi
$
we have that $k \leq 6$.
 If $k=0$, there is nothing to prove. $k=1$ is not possible due to Claim 2. If $k=2$, again due to Claim 2, the  $3$-bonded atoms \EEE  have to be  bonded. \EEE  If $k=3$, the $3$-bonded atoms \EEE necessarily need to be of the form $x_{i-1},x_i,x_{i+1}$ for some $i=0,\ldots,7$. Otherwise, we have a contradiction to Claim 2. This proves $\mathrm{i)}$.
 
  Now suppose that $k \in \{4,5\}$. By  Lemma \ref{LemmaHexagon} we have that, if $x_i$ is $3$-bonded, then $\theta_i = \frac{2\pi}{3}$. By Lemma \ref{lemma: octa}(i) the $3$-bonded atoms  cannot be of the form $\{x_i,\ldots,x_{i+k-1}\}$. Hence, there exist $ 0 \le i_1< i_1+1 < j_1 <i_2 < i_2 +2 <j_2$ such that \EEE $x_{i_1},x_{i_1+1},  x_{i_2}, \EEE x_{i_2+1}$ are $3$-bonded and $x_{j_1},x_{j_2},x_{j_2+1}$ are $2$-bonded.  By Claim 1 we get that, if $x_j$ is $2$-bonded, then $x_j \in \partial X_n$.   Therefore, $x_{i_1}$ and $x_{i_2}$ are  connected only through paths going through $x_{j_1}$ or $x_{j_2}$  as otherwise one of the two atoms $x_{j_1}$ and $x_{j_2}$ would not be contained in the boundary. Thus, the set $X_n \setminus\{x_{j_1},x_{j_2},x_{j_2+1}\}$ is not connected.  Since $x_{j_1},x_{j_2},x_{j_2+1}$ are $2$-bonded and $x_{j_2}$ is bonded to $x_{j_2+1}$,  we remove exactly $5$ bonds. As each bond contributes at least $-1$ to the energy, this yields \EEE \eqref{eq: remove3}. \EEE
  This proves $\mathrm{ii)}$.
 
 
 Finally, suppose $k=6$. By Claim 3 we have that there exists $i \in \{0,\ldots,7\}$ such that $x_i,x_{i+1},x_{i+2},x_{i+4},x_{i+5},x_{i+6}$ are $3$-bonded. Since no acyclic bonds are present in the bond graph, we have that $x_{i+1},x_{i+5} \in X_n^{\mathrm{bulk}}$.  Moreover, arguing as in case  $\mathrm{ii)}$,  $x_{i+1}$ is not connected with $x_{i+5}$ \EEE in $X_n^{\mathrm{bulk}}$.  This \EEE proves $\mathrm{iii)}$.
\end{proof}

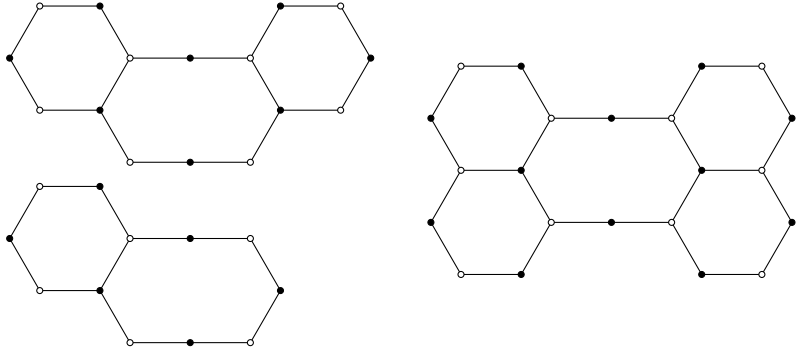
\begin{figure}[htp]
\centering
\begin{tikzpicture}[scale=.8]
\begin{scope}[shift={(0,-2)}]
\draw[thin](0:0)--++(60:1)--++(0:2)--++(300:1)--++(240:1)--++(180:2)--++(120:1);
\draw[thin](0:0)--++(180:1)--++(120:1)--++(60:1)--++(0:1)--++(-60:1);

\draw[fill=black](0:0)circle(.05)++(60:1)++(0:1)circle(.05)++(0:1)++(300:1)circle(.05)++(240:1)++(180:1)circle(.05)++(180:1)++(120:1);

\draw[fill=white](0:0)++(60:1)circle(.05)++(0:1)++(0:1)circle(.05)++(300:1)++(240:1)circle(.05)++(180:1)++(180:1)circle(.05)++(120:1);

\draw[fill=black](0:0)++(180:1)++(120:1)circle(.05)++(60:1)++(0:1)circle(.05)++(-60:1);

\draw[fill=white](0:0)++(180:1)circle(.05)++(120:1)++(60:1)circle(.05)++(0:1)++(-60:1);
\end{scope}

\begin{scope}[shift={(0,1)}]
\draw[thin](0:0)--++(60:1)--++(0:2)--++(300:1)--++(240:1)--++(180:2)--++(120:1);
\draw[thin](0:0)--++(180:1)--++(120:1)--++(60:1)--++(0:1)--++(-60:1);

\draw[thin](0:0)++(60:1)++(0:2)--++(60:1)--++(0:1)--++(-60:1)--++(-120:1)--++(-180:1);

\draw[fill=black](0:0)++(60:1)++(0:2)++(60:1)circle(.05)++(0:1)++(-60:1)circle(.05)++(-120:1)++(-180:1);
\draw[fill=white](0:0)++(60:1)++(0:2)++(60:1)++(0:1)circle(.05)++(-60:1)++(-120:1)circle(.05)++(-180:1);

\draw[fill=black](0:0)circle(.05)++(60:1)++(0:1)circle(.05)++(0:1)++(300:1)circle(.05)++(240:1)++(180:1)circle(.05)++(180:1)++(120:1);

\draw[fill=white](0:0)++(60:1)circle(.05)++(0:1)++(0:1)circle(.05)++(300:1)++(240:1)circle(.05)++(180:1)++(180:1)circle(.05)++(120:1);

\draw[fill=black](0:0)++(180:1)++(120:1)circle(.05)++(60:1)++(0:1)circle(.05)++(-60:1);

\draw[fill=white](0:0)++(180:1)circle(.05)++(120:1)++(60:1)circle(.05)++(0:1)++(-60:1);

\end{scope}

\begin{scope}[shift={(7,0)}]
\draw[thin](0:0)--++(60:1)--++(0:2)--++(300:1)--++(240:1)--++(180:2)--++(120:1);
\draw[thin](0:0)--++(180:1)--++(120:1)--++(60:1)--++(0:1)--++(-60:1);

\draw[thin](0:0)++(60:1)++(0:2)--++(60:1)--++(0:1)--++(-60:1)--++(-120:1)--++(-180:1);

\draw[thin](0:0)++(-60:1)--++(-120:1)--++(-180:1)--++(-240:1)--++(-300:1);

\draw[thin](0:0)++(-60:1)++(0:3)++(60:1)--++(-60:1)--++(-120:1)--++(-180:1)--++(-240:1);

\draw[fill=black](0:0)++(-60:1)++(-120:1)circle(.05)++(-180:1)++(-240:1)circle(.05)++(-300:1);

\draw[fill=white](0:0)++(-60:1)++(-120:1)++(-180:1)circle(.05)++(-240:1)++(-300:1);

\draw[fill=black](0:0)++(-60:1)++(0:3)++(60:1)++(-60:1)circle(.05)++(-120:1)++(-180:1)circle(.05)++(-240:1);

\draw[fill=white](0:0)++(-60:1)++(0:3)++(60:1)++(-60:1)++(-120:1)circle(.05)++(-180:1)++(-240:1);

\draw[fill=black](0:0)++(60:1)++(0:2)++(60:1)circle(.05)++(0:1)++(-60:1)circle(.05)++(-120:1)++(-180:1);
\draw[fill=white](0:0)++(60:1)++(0:2)++(60:1)++(0:1)circle(.05)++(-60:1)++(-120:1)circle(.05)++(-180:1);

\draw[fill=black](0:0)circle(.05)++(60:1)++(0:1)circle(.05)++(0:1)++(300:1)circle(.05)++(240:1)++(180:1)circle(.05)++(180:1)++(120:1);

\draw[fill=white](0:0)++(60:1)circle(.05)++(0:1)++(0:1)circle(.05)++(300:1)++(240:1)circle(.05)++(180:1)++(180:1)circle(.05)++(120:1);

\draw[fill=black](0:0)++(180:1)++(120:1)circle(.05)++(60:1)++(0:1)circle(.05)++(-60:1);

\draw[fill=white](0:0)++(180:1)circle(.05)++(120:1)++(60:1)circle(.05)++(0:1)++(-60:1);

\end{scope}

\end{tikzpicture}
\caption{ Illustration of the three cases in Lemma \ref{LemmaOctagon3bdd}.\EEE}
\label{Fig: Separating Octagon}
\end{figure}

 We are now in a position to prove \EEE Theorem \ref{TheoremGroundstatesleq31}.

\begin{proof}[Proof of Theorem \ref{TheoremGroundstatesleq31}]
As observed below Proposition \ref{TheoremGroundstatesleq-new},  it suffices to check that for $n \ge 30$    the bond graph of a ground state $C_n$ does not contain an octagon.  Assume  by contradiction that  $n\geq 30$ \EEE and that the bond graph contains  an \EEE  octagon. By  Proposition \ref{TheoremGroundstatesleq-new}  we know that there do not exist any flags in the bond graph and the bond graph contains only one  (boundary) \EEE octagon and hexagons otherwise. By Lemma \ref{lemma: bridges}  the bond graph does not contain any acyclic bonds. Denote by $\{x_0,\ldots,x_7\}$ the octagon in the bond graph. We need to consider the three cases
$${\rm (a)} \ \ \# X^3  \le 3, \ \ \ \ \ {\rm (b)} \ \  \# X^3 \in  \{4,5\}, \ \ \ \ \ {\rm (c)} \ \  \# X^3 \ge 6, $$
where $X^3:= \{x_i \in \{0,\ldots,7\}: x_i \text{ is } 3\text{-bonded}\}$. (See Fig.~\ref{Fig: Separating Octagon} for an illustration.) \EEE

\noindent \emph{Proof of Case $(\mathrm{a})$:} 
Since $\#  X^3 \EEE \leq 3$, by Lemma \ref{LemmaOctagon3bdd} we have that there are $k$ $2$-bonded atoms with $k\geq 5$ and they form a connected set. Hence, removing these $2$-bonded atoms, we remove exactly $k+1$ bonds.  Estimating the energy of every bond by $-1$ we get by Theorem \ref{TheoremEnergyGroundstates}  
\begin{align*}
\mathcal{E}(C_n) \geq -(k+1) - \lfloor\beta(n-k)\rfloor.
\end{align*}
 This implies  $\mathcal{E}(C_n) > \lfloor \beta(n) \rfloor$. Indeed, this follows from Lemma \ref{LemmaPropertiesbeta} 3)-4) and the fact that $k\ge 5$, $n \ge 30$. \EEE  This gives a contradiction in Case (a).  \EEE

%

\noindent \emph{Proof of Case $(\mathrm{b})$:} By the assumption and Lemma \ref{LemmaOctagon3bdd} we have that there exist $j_1,j_2,j_3 \in \{0,\ldots,7\}$ such that  
\begin{align*}
 \mathcal{E}\big(C_n \setminus \{ (x_{j_1},q_{j_1}), (x_{j_2}, q_{j_2}),(x_{j_3},q_{j_3}) \} \big)  \le \EEE \mathcal{E}(C_n)+5 \EEE
\end{align*}
and  $C_n \setminus \{ (x_{j_1},q_{j_1}), (x_{j_2}, q_{j_2}),(x_{j_3},q_{j_3}) \} $ \EEE is not connected. Denote by $n_1,n_2 \in \mathbb{N}, n_1+n_2 =n-3$ the cardinality of the two connected components of  $C_n \setminus \{ (x_{j_1},q_{j_1}), (x_{j_2}, q_{j_2}),(x_{j_3},q_{j_3}) \} $ \EEE which do not have any bonds between them. Since the bond graph of $C_n$ does not contain any acyclic bonds, as explained at the beginning of the proof, \EEE we have  $n_1,n_2 \geq 6$. By Lemma \ref{LemmaPropertiesbeta} 2) and Theorem \ref{TheoremEnergyGroundstates} \EEE we obtain
\begin{align*}
\mathcal{E}(C_n)   \ge \EEE \mathcal{E}\big(C_n \setminus \{ (x_{i},q_{i}): \  i = i_1,i_2,i_3\} \big) - 5  \ge \EEE -\lfloor\beta(n_1)\rfloor - \lfloor \beta(n_2)\rfloor - 5 > -\lfloor\beta(n-3)\rfloor - 4.
\end{align*}
By Theorem \ref{TheoremEnergyGroundstates} we have that $\mathcal{E}(C_n)$ is an integer. This implies
\begin{align}\label{eq: n1-NNN}
\mathcal{E}(C_n) &\geq -\lfloor\beta(n-3)\rfloor - 3 \geq -\frac{3}{2}n +\sqrt{\frac{3}{2}(n-3)} + \frac{3}{2}\notag\\&= -\frac{3}{2}n + \sqrt{\frac{3}{2}n} +\frac{3}{2}- \frac{9}{2\Big(\sqrt{\frac{3}{2}n}+\sqrt{\frac{3}{2}(n-3)}\Big)}.
\end{align}
It is elementary to check that for  $n\geq 16$  we have
$$
\frac{3}{2}- \frac{9}{2\Big(\sqrt{\frac{3}{2}n}+\sqrt{\frac{3}{2}(n-3)}\Big)} \geq  1.  
$$
This is a contradiction to the fact that $C_n$ is a ground state.

\noindent \emph{Proof of Case $(\mathrm{c})$:} By the assumption and Lemma \ref{LemmaOctagon3bdd} we have that $C_n^{\mathrm{bulk}}$ is not connected.  As the bond graph contains an octagon, there exists a non-equilibrated atom.  Lemma \ref{LemmaAbulk} \EEE implies $\mathcal{A}_{\rm bulk} = \emptyset$ and thus a boundary atom is not equilibrated. Then by Lemma \ref{LemmaBoundaryEnergy} we get
\begin{align*}
\mathcal{E}^{\mathrm{bnd}}(C_n) > -\frac{3}{2}d +3.
\end{align*}
 As $C_n^{\mathrm{bulk}}$ is not connected, $C_n^{\mathrm{bulk}}$ cannot be a ground state. \EEE Applying Theorem \ref{TheoremEnergyGroundstates} to $C_n^{\mathrm{bulk}}$ we thus obtain
\begin{align*}
\mathcal{E}^{\mathrm{bulk}}(C_n)>-\lfloor   \beta( n-d) \EEE \rfloor. 
\end{align*}
By Theorem \ref{TheoremEnergyGroundstates}, we have \EEE $\mathcal{E}(C_n)=-b$ and therefore $\mathcal{E}^{\mathrm{bulk}}(C_n)$ and $\mathcal{E}^{\mathrm{bnd}}(C_n)$ are integers. Hence, we obtain
\begin{align*}
\mathcal{E}(C_n)\ge -\lfloor   \beta( n-d) \EEE \rfloor -\frac{3}{2}d +5.
\end{align*}
Using Lemma \ref{LemmaBoundaryestimate}  and $\eta =2$  we get
\begin{align*}
\mathcal{E}(C_n) \geq -\frac{3}{2}n + 5 -\sqrt{\frac{3}{2}(-4\mathcal{E}(C_n)-5n+8)}.
\end{align*}
This together with Lemma \ref{LemmaSquareroot} applied for $j=5$, $m= 8$, and  $x = \mathcal{E}(C_n)$  leads to
\begin{align}\label{eq: n1}
\mathcal{E}(C_n) \geq -\frac{3}{2}n +2 + \sqrt{\frac{3}{2}(n-6)}= -\frac{3}{2}n +\sqrt{\frac{3}{2}n}+2 - \frac{9}{\sqrt{\frac{3}{2}n}+\sqrt{\frac{3}{2}(n-6)}}.
\end{align}
For   $n \geq 17$   we have
\begin{align*}
 2 - \frac{9}{\sqrt{\frac{3}{2}n}+\sqrt{\frac{3}{2}(n-6)}} \geq 1
\end{align*}
which leads to a contradiction to the fact that $C_n$ is a ground state. 
\end{proof}

\begin{remark}\label{rem: octogons}
{\normalfont  

Inspection of the previous proof shows that among $13 \le n \le 29$ only for $n=15,18,21,29$ boundary octagons may occur. Indeed, in Case $(\mathrm{a})$ this follows from Lemma \ref{LemmaPropertiesbeta} 4), see particularly Table \ref{table}. In Case $(\mathrm{b})$ and Case $(\mathrm{c})$ we obtain a contradiction for each $13 \le n \le 29$: in Case $(\mathrm{b})$, we necessarily have $n \ge 16$  (see upper left configuration in Fig.~\ref{Fig: Separating Octagon}) and thus a contradiction in \eqref{eq: n1-NNN}.  In Case $(\mathrm{c})$, we necessarily have $n \ge 22$  (see rightmost configuration in Fig.~\ref{Fig: Separating Octagon}) and thus a contradiction in \eqref{eq: n1}.

For $n=10,11$, the presence of an octagon is excluded by Proposition \ref{TheoremGroundstatesleq-new} and the fact that  each hexagon can share at most $2$ atoms with an octagon (see Lemma \ref{LemmaHexagon} and  Lemma \ref{lemma: octa}). Consequently, for $n \le 29$, ground states may contain a boundary octagon only for $n=8,9,12,15,18,21,29$. This is indeed possible as shown in Fig.~\ref{FigureFlexible} and Fig.~\ref{FigureOctagons}, cf.\ Table~\ref{table2}.  \EEE 
}
\end{remark}

\begin{figure}[htp]
\begin{tikzpicture}[scale=0.6]
\begin{scope}[shift={(-4,0)}]
\draw[thin](2,0)++(60:1)--++(1,0)--++(1,0)--++(300:1)--++(240:1)--++(180:1)--++(180:1);
\draw[fill=white](2,0)++(60:1)++(1,0)circle(.05)++(1,0)++(300:1)circle(.05)++(240:1)++(180:1)circle(.05)++(180:1);

\draw[fill=black](2,0)++(60:1)++(1,0)++(1,0)circle(.05)++(300:1)++(240:1)circle(.05)++(180:1)++(180:1);
\foreach \k in {0,...,5}{
\foreach \j in {0,2,4}{
\draw[thin](\k*60+30:{sqrt(3)})++(\j*60:1)--++(\j*60+120:1);
\draw[fill=black](\k*60+30:{sqrt(3)})++(\j*60:1) circle(.05);
}
\foreach \j in {1,3,5}{
\draw[thin](\k*60+30:{sqrt(3)})++(\j*60:1)--++(\j*60+120:1);
\draw[fill=white](\k*60+30:{sqrt(3)})++(\j*60:1) circle(.05);
}
}
\end{scope}

\begin{scope}[shift={(7,-4)},rotate=-90]

\draw[thin](0,0)--++(30:1)--++(90:1)--++(90:1)--++(150:1)--++(210:1)--++(270:1)--++(270:1)--++(330:1);

\draw[thin](0,0)--++(270:1)--++(330:1)--++(30:1)--++(90:1)--++(150:1);

\draw[thin](0,0)++(270:1)--++(210:1)--++(150:1)--++(90:1)--++(30:1);

\draw[fill=black](0,0)++(270:1)++(330:1)circle(.05)++(30:1)++(90:1)circle(.05)++(150:1);
\draw[fill=white](0,0)++(270:1)circle(.05)++(330:1)++(30:1)circle(.05)++(90:1)++(150:1);

\draw[fill=black](0,0)++(270:1)++(210:1)circle(.05)++(150:1)++(90:1)circle(.05)++(30:1);

\draw[fill=white](0,0)++(270:1)++(210:1)++(150:1)circle(.05)++(90:1)++(30:1);

\draw[fill=black](0,0)circle(.05)++(30:1)++(90:1)circle(.05)++(90:1)++(150:1)circle(.05)++(210:1)++(270:1)circle(.05)++(270:1)++(330:1);
\draw[fill=white](0,0)++(30:1)circle(.05)++(90:1)++(90:1)circle(.05)++(150:1)++(210:1)circle(.05)++(270:1)++(270:1)circle(.05)++(330:1);
\end{scope}

\begin{scope}[shift={(7,-4)}]

\draw[thin](-2,0)++(300:1)++(180:1)--++(240:1)--++(300:1)--++(360:1)--++(60:1);

\draw[fill=black](-2,0)++(300:1)++(180:1)++(240:1)circle(.05)++(300:1)++(360:1)circle(.05)++(60:1);

\draw[fill=white](-2,0)++(300:1)++(180:1)++(240:1)++(300:1)circle(.05)++(360:1)++(60:1);

\draw[thin](-1,0)++(120:1)--++(180:1)--++(240:1)--++(300:1)--++(360:1);
\draw[fill=black](-1,0)++(120:1)++(180:1)++(240:1)circle(.05)++(300:1)++(360:1);

\draw[fill=white](-1,0)++(120:1)++(180:1)circle(.05)++(240:1)++(300:1)circle(.05)++(360:1);

\end{scope}

\begin{scope}[shift={(5,0)}]

\draw[thin](-1,0)++(120:1)--++(180:1)--++(240:1)--++(300:1)--++(360:1);
\draw[fill=black](-1,0)++(120:1)++(180:1)++(240:1)circle(.05)++(300:1)++(360:1);

\draw[fill=white](-1,0)++(120:1)++(180:1)circle(.05)++(240:1)++(300:1)circle(.05)++(360:1);
\begin{scope}[rotate=-90]

\draw[thin](0,0)--++(30:1)--++(90:1)--++(90:1)--++(150:1)--++(210:1)--++(270:1)--++(270:1)--++(330:1);

\draw[thin](0,0)--++(270:1)--++(330:1)--++(30:1)--++(90:1)--++(150:1);

\draw[thin](0,0)++(270:1)--++(210:1)--++(150:1)--++(90:1)--++(30:1);

\draw[fill=black](0,0)++(270:1)++(330:1)circle(.05)++(30:1)++(90:1)circle(.05)++(150:1);
\draw[fill=white](0,0)++(270:1)circle(.05)++(330:1)++(30:1)circle(.05)++(90:1)++(150:1);

\draw[fill=black](0,0)++(270:1)++(210:1)circle(.05)++(150:1)++(90:1)circle(.05)++(30:1);

\draw[fill=white](0,0)++(270:1)++(210:1)++(150:1)circle(.05)++(90:1)++(30:1);

\draw[fill=black](0,0)circle(.05)++(30:1)++(90:1)circle(.05)++(90:1)++(150:1)circle(.05)++(210:1)++(270:1)circle(.05)++(270:1)++(330:1);
\draw[fill=white](0,0)++(30:1)circle(.05)++(90:1)++(90:1)circle(.05)++(150:1)++(210:1)circle(.05)++(270:1)++(270:1)circle(.05)++(330:1);
\end{scope}
\end{scope}

\begin{scope}[shift={(0,-5)}, rotate=-90]

\draw[thin](0,0)--++(30:1)--++(90:1)--++(90:1)--++(150:1)--++(210:1)--++(270:1)--++(270:1)--++(330:1);

\draw[thin](0,0)--++(270:1)--++(330:1)--++(30:1)--++(90:1)--++(150:1);

\draw[thin](0,0)++(270:1)--++(210:1)--++(150:1)--++(90:1)--++(30:1);

\draw[fill=black](0,0)++(270:1)++(330:1)circle(.05)++(30:1)++(90:1)circle(.05)++(150:1);
\draw[fill=white](0,0)++(270:1)circle(.05)++(330:1)++(30:1)circle(.05)++(90:1)++(150:1);

\draw[fill=black](0,0)++(270:1)++(210:1)circle(.05)++(150:1)++(90:1)circle(.05)++(30:1);

\draw[fill=white](0,0)++(270:1)++(210:1)++(150:1)circle(.05)++(90:1)++(30:1);

\draw[fill=black](0,0)circle(.05)++(30:1)++(90:1)circle(.05)++(90:1)++(150:1)circle(.05)++(210:1)++(270:1)circle(.05)++(270:1)++(330:1);
\draw[fill=white](0,0)++(30:1)circle(.05)++(90:1)++(90:1)circle(.05)++(150:1)++(210:1)circle(.05)++(270:1)++(270:1)circle(.05)++(330:1);
\end{scope}

\begin{scope}[shift={(-6,-5)},rotate=-90]

\draw[thin](0,0)--++(30:1)--++(90:1)--++(90:1)--++(150:1)--++(210:1)--++(270:1)--++(270:1)--++(330:1);


\draw[thin](0,0)++(270:1)--++(210:1)--++(150:1)--++(90:1)--++(30:1);

\draw[thin](0,0)--++(90:-1);
\draw[fill=white](90:-1)circle(.05);


\draw[fill=black](0,0)++(270:1)++(210:1)circle(.05)++(150:1)++(90:1)circle(.05)++(30:1);

\draw[fill=white](0,0)++(270:1)++(210:1)++(150:1)circle(.05)++(90:1)++(30:1);

\draw[fill=black](0,0)circle(.05)++(30:1)++(90:1)circle(.05)++(90:1)++(150:1)circle(.05)++(210:1)++(270:1)circle(.05)++(270:1)++(330:1);
\draw[fill=white](0,0)++(30:1)circle(.05)++(90:1)++(90:1)circle(.05)++(150:1)++(210:1)circle(.05)++(270:1)++(270:1)circle(.05)++(330:1);
\end{scope}
\end{tikzpicture}
\caption{The ground states containing an octagon for $n=12,15,18,21,29$ (up to isometry and changing of the charges).}  
\label{FigureOctagons}
\end{figure}
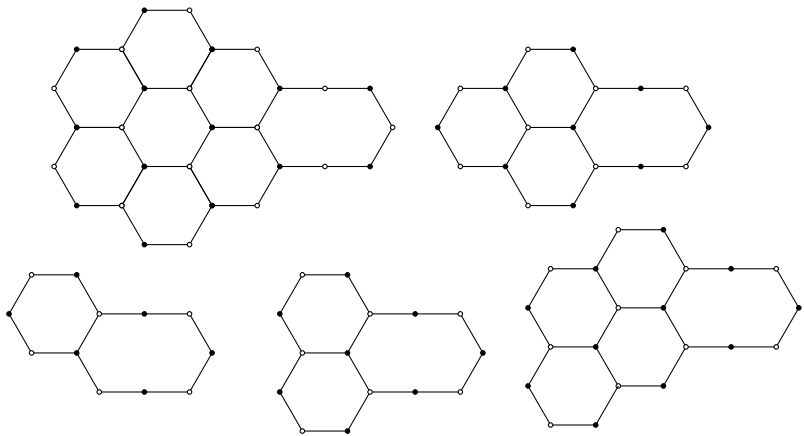

\EEE
 
\section{Characterization of the net charge}

This final section is devoted to the proof of Theorem \ref{TheoremCharge}(i). We recall that part (ii) of the statement has already been addressed by an explicit construction in Section \ref{sec: daisy + something}. 

We start with some preliminary definitions. First, recall the definition of the hexagonal lattice $\mathcal{L}$ in \eqref{eq: hex-lattice}. Set $u_1 = (1,0)$, $u_2 = (\frac{1}{2},\frac{\sqrt{3}}{2})$, $u_3 = (-\frac{1}{2},\frac{\sqrt{3}}{2})$, $u_4 = (-1,0)$.

\begin{definition}[Zig-zag paths]\label{def: zig-zag}
(i) A tuple $(p_1,\ldots,p_m) \subset \mathcal{L}$ is called a \emph{zig-zag path} if there exists $k \in \lbrace 1,2,3 \rbrace$ such that 
\begin{align}\label{eq: zig-zag}
(a)& \ \ p_j - p_{j-1} \in \lbrace u_k, u_{k+1} \rbrace  \ \ \ \text{ for all } j \in \lbrace 2,\ldots,m\rbrace, \notag\\
(b) & \ \  p_{j+1} - p_{j} \neq p_j - p_{j-1} \ \ \ \text{ for all } j \in \lbrace 2,\ldots,m-1\rbrace.
\end{align}

(ii) We say that two zig-zag paths have the same \emph{orientation} if the same $k \in \lbrace 1,2,3 \rbrace$ appears in \eqref{eq: zig-zag}.

(iii) Given a configuration $C_n$ with $X_n \subset \mathcal{L}$, we say the zig-zag path $(p_1,\ldots,p_m) \subset \mathcal{L}$, $m \ge 3$, is a \emph{bridging zig-zag path} for $C_n$ if $p_1, p_m \in X_n\EEE$ and $p_2,\ldots,p_{m-1} \notin  X_n\EEE$. 

\end{definition}

Let $C_n$ be a ground state for $n \ge 30$ with no acyclic bonds. \EEE By Theorem \ref{TheoremGroundstatesleq31} we get that \EEE  $X_n$ is defect-free and satisfies $X_n \subset \mathcal{L}$  (up to isometry). \EEE Let $(p_1,\ldots,p_m)$ be a bridging zig-zag path for $C_n$.  Since  $C_n$ \EEE is connected and defect-free, we find that $\mathcal{L} \setminus (X_n \cup \bigcup_{j=2}^{m-1} p_j)$ consists of two (one of them possibly empty) connected components. Exactly one of these components is bounded which we denote by  $\mathcal{B}( X_n; \EEE ( p_1, \ldots, p_m ))$.

\begin{lemma}[Bridging zig-zag paths of ground states]\label{lemma: zig-zag paths}
Let $n \ge 30$. Each ground state  with no acyclic bonds \EEE has at most one bridging zig-zag path.
\end{lemma}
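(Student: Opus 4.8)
The plan is to argue by contradiction. Suppose a ground state $C_n$ with no acyclic bonds admits two distinct bridging zig-zag paths; I will exhibit a competitor with the same number of atoms but strictly more bonds, which is impossible, since by Theorem \ref{TheoremEnergyGroundstates} and Proposition \ref{PropositionDaisy} the maximal bond number of an $n$-atom configuration is $b=\lfloor\beta(n)\rfloor$, attained precisely by ground states.

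First I would record the structural consequences available for $n\ge 30$. By Theorem \ref{TheoremGroundstatesleq31} and Lemma \ref{lemma: bridges}, $C_n$ is a connected, defect-free subset of $\mathcal{L}$ with no acyclic bonds, so its bond graph is delimited by a single simple cycle, the maximal polygon; every atom is at most $3$-bonded by Lemma \ref{LemmaNeighborhood}, and every interior angle of the boundary lies in $\{\tfrac{2\pi}{3},\tfrac{4\pi}{3}\}$ by Lemma \ref{LemmaHexagon}. Since these angles sum to $\pi(d-2)$ along the $d$-gon, the numbers of $2$- and $3$-bonded boundary atoms satisfy $\#I_2-\#I_3=6$; in particular there are at least six $2$-bonded boundary atoms available for relocation. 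This reduces the statement to the combinatorics of the boundary cycle.

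Next I would fix the geometric meaning of a single bridging zig-zag path $P=(p_1,\dots,p_m)$. Because its interior vertices $p_2,\dots,p_{m-1}$ are empty while $p_1,p_m\in X_n$, and because $X_n$ is connected and defect-free, $P$ runs just outside a maximal zig-zag segment of the boundary and the enclosed region $\mathcal{B}(X_n;P)$ consists of empty lattice sites trapped against that segment. I would show $\mathcal{B}(X_n;P)$ contains no site with three occupied neighbours: otherwise, relocating a suitable non-separating $2$-bonded boundary atom onto such a site (which keeps the alternating charge distribution, all same-charge distances staying $\ge\sqrt3$ inside $\mathcal{L}$) destroys exactly two bonds and creates at least three, so $b>\lfloor\beta(n)\rfloor$, contradicting Theorem \ref{TheoremEnergyGroundstates}. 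Hence each bridging zig-zag path sits tightly against an \emph{incomplete edge} of the otherwise hexagonal boundary, in the sense of the recursive construction of Section \ref{sec: daisy}.

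Finally, assuming two bridging zig-zag paths $P$ and $P'$, I would distinguish whether they share their orientation (the index $k$ of Definition \ref{def: zig-zag}) and whether their bounded regions are nested or separated; transversal crossings are ruled out by the alternation condition in \eqref{eq: zig-zag}. In each case both associated incomplete edges terminate in $2$-bonded atoms, and I would relocate the atoms of the shorter incomplete edge, one at a time, onto the end of the other so as to extend it toward a complete hexagonal edge; each such move turns a formerly $2$-bonded endpoint into a $3$-bonded bulk atom and hence does not decrease, and eventually strictly increases, the number of bonds. The resulting $n$-atom configuration has more than $\lfloor\beta(n)\rfloor$ bonds, the desired contradiction. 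I expect the main obstacle to be precisely the middle step together with this last bookkeeping: identifying how $P$ meets the boundary cycle, proving that $\mathcal{B}(X_n;P)$ carries no fillable site, and then verifying uniformly over the finitely many relative positions and orientations of $P$ and $P'$ that the relocation never loses bonds — an analysis for which the hypothesis $n\ge30$ provides the room needed to separate the two incomplete edges.
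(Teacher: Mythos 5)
Your overall strategy (a fixed-$n$ relocation competitor) is genuinely different from the paper's proof, but as written it has a real gap at the decisive step. The paper never relocates: it first shows one may choose a bridging zig-zag path whose enclosed region is empty (iteratively replacing the path by an inner path of the same orientation, strictly shrinking $\mathcal{B}$), then \emph{adds} the $m-2$ path atoms with freely assigned charges, counts $(m-1)+\min\{\lfloor (m-1)/2\rfloor,\lfloor m/2\rfloor-1\}$ new bonds (every other new atom bonds to $C_n$ precisely because the region is empty), and verifies by $\beta$-arithmetic that the enlarged configuration is \emph{still a ground state}; only filling the \emph{second} path pushes the energy strictly below $-\lfloor\beta(n+m+m'-4)\rfloor$, contradicting Theorem \ref{TheoremEnergyGroundstates}. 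The quantitative heart is that each filled path yields about half a bond beyond the bulk rate $\frac{3}{2}$ per atom, so one bridging path is perfectly consistent with minimality. This already signals why your target — same $n$, strictly more bonds — is delicate: no single local surplus is available, and all the gain must be extracted from the interaction of the two paths.

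Concretely, your final bookkeeping does not go through as stated. First, the claim that each relocation "does not decrease" the bond number is false move-by-move: extending a zig-zag row requires alternately placing an atom at a site with only one occupied neighbour (pay $2$ bonds for the $2$-bonded donor, gain $1$: net $-1$) and at a site with two (net $+1$ only because the previous removal left a $1$-bonded flag as the next donor); only the pairwise tally is neutral, at rate $\frac{3}{2}$ per atom, and if you harvest only $2$-bonded donors at cost $2$ each the net is strictly negative. Second, the terminal strict gain ($+1$ when the receiving bay closes) is asserted, not proved: you never verify that the receiving gap can be closed with the available donors, that the parity works out, or that dismantling the donor row does not interfere with the receiving path. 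Third, relocations must respect charges: a relocated atom keeps its charge, so it can only be placed on sites of its own sublattice of $\mathcal{L}$ — otherwise it sits at distance $1$ from same-charge atoms and $V_{\mathrm{r}}(1)=+\infty$ — and your scheme never matches donors to target sublattices. Finally, the reduction "each bridging zig-zag path sits tightly against an incomplete edge in the sense of Section \ref{sec: daisy}" is a leap: your (correct) observation that no empty site has three occupied neighbours does not imply that the bay bounded by a bridging path is a single straight incomplete row, so the announced "finitely many relative positions" case analysis is not actually set up. The paper's empty-region reduction plus atom-adding comparison across particle numbers replaces all of this bookkeeping at once.
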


\begin{proof}
As a preparation, we observe the following: suppose that  there exists a bridging zig-zag path for a ground state  $C_n$. Then we can choose a bridging zig-zag path $(p_1,\ldots,p_m) \subset \mathcal{L}$ for $C_n$ such that $\mathcal{B}(X_n;(p_1,\ldots,p_m)) = \emptyset$. To see this, we proceed as follows. Take an arbitrary bridging zig-zag path $(\bar{p}_1,\ldots,\bar{p}_{\bar{m}})$ and consider $\bar{\mathcal{B}} := \mathcal{B}(X_n;(\bar{p}_1,\ldots,\bar{p}_{\bar{m}}))$. If $\mathcal{\bar{B}} = \emptyset$, we have concluded.  If $\mathcal{\bar{B}} \neq \emptyset$, we can consider another  bridging zig-zag path $(\tilde{p}_1,\ldots, \tilde{p}_{\tilde{m}})$ for $C_n$ having the same orientation as $(\bar{p}_1,\ldots,\bar{p}_{\bar{m}})$ and satisfying $\tilde{p}_2,\ldots, \tilde{p}_{\tilde{m}-1} \in \mathcal{\bar{B}}$. Define $\tilde{\mathcal{B}} := \mathcal{B}(X_n;(\tilde{p}_1,\ldots,\tilde{p}_{\tilde{m}}))$ and note that $\tilde{\mathcal{B}} \subset \bar{\mathcal{B}}$,  $\#\tilde{\mathcal{B}} < \# \bar{\mathcal{B}}$. This construction can be iterated and after a finite number of iteration steps \EEE we find a  bridging zig-zag path $(p_1,\ldots,p_m) \subset \mathcal{L}$ for $C_n$ such that $\mathcal{B}(X_n;(p_1,\ldots,p_m)) = \emptyset$. \EEE

Now suppose  by contradiction \EEE that there was a ground state $C_n$, $n \ge 30$,  with no acyclic bonds \EEE which contains two bridging zig-zag paths. Consider a bridging zig-zag path $(p_1,\ldots, p_m)$ satisfying $\mathcal{B}(X_n;(p_1,\ldots,p_m)) = \emptyset$ and define $X_n' = X_n \cup \bigcup_{j=2}^{m-1} p_j$. \EEE Clearly, we can assign charges to the atoms  $p_2 ,\ldots, p_{m-1}$ \EEE to obtain a configuration $C_n'$ with \EEE alternating charge distribution. Note that $C_n'$ consists of  $n+m-2$ atoms. We now estimate the energy of $C_n'$.

First, we observe that between the atoms of the path $(p_1,\ldots,p_m)$ there are $m-1$ bonds. Since $\mathcal{B}(X_n;(p_1,\ldots,p_m)) = \emptyset$, either each \EEE $p_2,p_4, \ldots, p_{2 \lfloor (m-1)/2 \rfloor  }$ or  each \EEE $p_3,p_5,\ldots, p_{2 \lfloor m/2 \rfloor -1 }$  is \EEE bonded to an atom of $C_n$. (The latter set  is \EEE empty if $m=3$.) Thus, we obtain
\begin{align*}
\mathcal{E}(C_n') \le \mathcal{E}(C_n) - (m-1) - \min\lbrace \lfloor (m-1)/2 \rfloor , \lfloor m/2 \rfloor -1  \rbrace \le \mathcal{E}(C_n) - \frac{3}{2}m + \frac{5}{2}.
\end{align*}
In particular, as $C_n$ was supposed to be a ground state, this implies $\mathcal{E}(C_n') \le - \lfloor \beta(n) \rfloor - \frac{3}{2}m + \frac{5}{2} \le  - \lfloor \beta(n+m-2) \rfloor$. Here the second inequality is elementary to check. This shows that $C_n'$ is a ground state.

As $C_n$ has two bridging zig-zag paths, there is at least one \EEE bridging zig-zag path for $C_n'$. We now repeat the above procedure. Choose $(p'_1,\ldots, p'_{m'})$ with $\mathcal{B}(X'_n;(p'_1,\ldots,p'_{m'})) = \emptyset$ and  define a configuration $C_n''$ with $\#  X_n'' \EEE = n+m+m'-4$, with alternating charge distribution and consisting of the atoms $X_n' \cup \bigcup_{j=2}^{m'-1} p'_j$. Arguing as before, we calculate
\begin{align*}
\mathcal{E}(C_n'') \le \mathcal{E}(C'_n) - \frac{3}{2}m' + \frac{5}{2} \le \mathcal{E}(C_n) - \frac{3}{2}(m+m') + 5.
\end{align*}
Since $C_n$ was supposed to be a ground state, this implies 
\begin{align*}
\mathcal{E}(C_n'') &\le - \left\lfloor \frac{3}{2}n - \sqrt{\frac{3}{2}n} \right\rfloor - \frac{3}{2}(m+m') + 5 \\&\le - \left\lfloor\frac{3}{2} (n+ m + m'-4) - \sqrt{\frac{3}{2}(n + m + m'-4) }\right\rfloor  -\frac{1}{2}.
\end{align*}  
This implies $\mathcal{E}(C_n'') < -\lfloor \beta(n + m + m'-4) \rfloor$ which contradicts Theorem \ref{TheoremEnergyGroundstates}. This concludes the proof. 
\end{proof}

Recall the construction of daisies in Section \ref{sec: daisy}. From \cite{Davoli15} we obtain the following result.

\begin{proposition}[Deviation from Wulff-shape]\label{prop: davoli}
Let $ n \ge 30$ and let $C_n$ be a ground state with no acyclic bonds. \EEE Then, possibly after translation, we find two daisies $X^{\rm daisy}_{6k_1^2} \subset \mathcal{L}$ and $X^{\rm daisy}_{6k_2^2} \subset \mathcal{L}$ with $X^{\rm daisy}_{6k_1^2} \subset X_n \EEE \subset X_{6k_2^2}^{\rm daisy}$ such that
$$0 < k_2 - k_1 \le cn^{1/4}, $$  
where $c>0$ is a universal constant independent of $n$ and $C_n$.
\end{proposition}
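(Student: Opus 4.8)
The plan is to reduce the statement to the sharp $n^{3/4}$-law for the hexagonal lattice established in \cite{Davoli15}, and then to rephrase the resulting Hausdorff-distance estimate as a sandwiching between two concentric daisies.

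First I would fix the geometric setting. Since $n \ge 30$ and $C_n$ has no acyclic bonds, in particular no flags, Theorem \ref{TheoremGroundstatesleq31} shows that, up to an isometry, $X_n$ is a connected, defect-free subset of $\mathcal{L}$ with alternating charge distribution. As already observed in the proof of Claim 1 of Proposition \ref{TheoremGroundstatesleq-new}, on such configurations the energy reduces to $\mathcal{E}(C_n) = -b$, i.e.\ to the negative edge-counting functional on the hexagonal lattice, which is precisely the one minimized in \cite{Mainini, Davoli15}. By Theorem \ref{TheoremEnergyGroundstates} we have $b = \lfloor \beta(n)\rfloor$, so $X_n$ realizes the maximal bond number for its cardinality; hence $X_n$ is a ground state in the sense of \cite{Davoli15}, and the full strength of the results there is available.

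Next I would invoke the $n^{3/4}$-law. The continuum Wulff shape for the edge functional on $\mathcal{L}$ is a regular hexagon, and its discrete realizations are exactly the daisies $X^{\rm daisy}_{6k^2}$ of Section \ref{sec: daisy}. The sharp estimate of \cite{Davoli15} yields, after a suitable translation aligning $X_n$ with the lattice, a regular hexagonal Wulff shape $W_n$ scaled so that it corresponds to $n$ atoms, with
\[
 d_{\mathcal{H}}(X_n, W_n) \le c_0\, n^{1/4}
\]
for a universal constant $c_0>0$, where $d_{\mathcal{H}}$ denotes the Hausdorff distance. I would then convert this into the daisy sandwiching: centering the daisies at the cell-center dictated by the alignment above, the generation-$k$ daisy fills the concentric regular hexagon of circumradius $\rho_k$, and $\rho_{k+1}-\rho_k$ is bounded above and below by universal positive constants corresponding to one lattice ring. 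Defining $k_1$ as the largest $k$ with $X^{\rm daisy}_{6k^2}\subset X_n$ and $k_2$ as the smallest $k$ with $X_n \subset X^{\rm daisy}_{6k^2}$, both are finite with $k_1 \le k_2$, the inner and outer circumradii differ by at most $2\,d_{\mathcal{H}}(X_n,W_n)+O(1)$, and the $O(1)$-spacing of the $\rho_k$ gives $k_2-k_1 \le c\, n^{1/4}$. For strict positivity one notes that if $X_n$ is not itself a perfect daisy the inscribed daisy is strictly smaller than the circumscribed one, while in the degenerate case $X_n = X^{\rm daisy}_{6k^2}$ one simply replaces $k_2$ by $k_2+1$, affecting only the constant.

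The main obstacle I expect is not the counting but the geometric bookkeeping in this last step. One must guarantee that the translation furnished by \cite{Davoli15} can be taken so that genuinely concentric daisies exist and both the inscribed and circumscribed daisies are well defined, and one must check that the combinatorial generation parameter $k$ is comparable to the continuum radius: since $6k^2 \approx n$, hence $k \approx \sqrt{n/6}$, a Hausdorff fluctuation of order $n^{1/4}$ translates \emph{faithfully} into a generation gap of order $n^{1/4}$ rather than a larger power. Verifying that the Wulff shape of the edge functional is exactly the regular hexagon underlying the daisies—the very content of \cite{Davoli15}—is what makes the two daisy families the correct comparison objects.
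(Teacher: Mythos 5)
Your high-level route is the same as the paper's: Theorem \ref{TheoremGroundstatesleq31} together with the absence of acyclic bonds gives $X_n \subset \mathcal{L}$ up to isometry, Remark \ref{rem: repulsionsfree} gives unit bonds and repulsion-freeness, so $\mathcal{E}(C_n) = -b = -\lfloor \beta(n)\rfloor$ coincides with the edge-counting functional of \cite{Davoli15} and $X_n$ maximizes bonds among lattice subsets; the paper then simply cites \cite[Theorem 1.2]{Davoli15}, which is invoked in a form that already delivers the two-daisy sandwich, so no conversion step is performed in the paper at all.

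The place where your write-up has a genuine hole is exactly the step you identified as ``geometric bookkeeping'': deducing the sandwich from the Hausdorff estimate $d_{\mathcal{H}}(X_n, W_n) \le c_0 n^{1/4}$. The outer containment is unproblematic, but the inner containment $X^{\rm daisy}_{6k_1^2} \subset X_n$ with $k_2 - k_1 \le c\, n^{1/4}$ does \emph{not} follow from Hausdorff closeness, even combined with everything you have established at that point (connected, defect-free subset of $\mathcal{L}$). Consider a daisy-like configuration with a thin fjord of width $\sqrt{3}$ and depth of order $\sqrt{n}$ cut into it: it is connected, all its elementary polygons are still hexagons (the fjord belongs to the exterior face, so the configuration is defect-free), and it is Hausdorff-close to $W_n$, since every point inside the fjord lies within distance ${\rm O}(1)$ of the atoms on its two banks. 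For such a set the largest inscribed daisy is dramatically smaller than the smallest circumscribed one, and your claim that the inner and outer circumradii differ by at most $2\, d_{\mathcal{H}}(X_n,W_n) + {\rm O}(1)$ fails. What rules out fjords is not Hausdorff proximity but the exact bond-maximality $b = \lfloor \beta(n)\rfloor$: a fjord of depth $L$ creates order $L$ extra $2$-bonded boundary atoms and hence loses order $L$ bonds (compare the equality analysis in Lemma \ref{LemmaBoundaryEnergy}), and this boundary rigidity is precisely what \cite{Davoli15} exploit to obtain the sandwich form of their result. So to close your argument you must either cite \cite[Theorem 1.2]{Davoli15} in its sandwich form, as the paper does, or supply the fjord-exclusion argument via bond counting; as written, the conversion from the Hausdorff bound alone does not go through.
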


 \begin{proof}
From Theorem \ref{TheoremGroundstatesleq31} and the fact that $C_n$ does not have acyclic bonds \EEE we get that the ground state $C_n$ is a subset of the hexagonal lattice. Moreover, $C_n$ is repulsion-free, see Remark \ref{rem: repulsionsfree}. Thus, the energy of a ground state  coincides with the one in \cite{Davoli15}, see  \cite[Equation (6)]{Davoli15}. The claim then follows from \cite[Theorem 1.2]{Davoli15}.  
 \end{proof}

We are now in the position to prove Theorem \ref{TheoremCharge}. 

\begin{proof}[Proof of Theorem \ref{TheoremCharge}]
As discussed at the beginning of the section, \EEE it remains to prove part (i) of the statement. In view of Theorem \ref{TheoremGroundstatesleq31} and Remark \ref{rem: main}(i), it suffices to treat the case that $C_n$ does not have acyclic bonds. \EEE We apply Proposition \ref{prop: davoli} to find two daisies with  $X^{\rm daisy}_{6k_1^2} \subset X_n \subset X^{\rm daisy}_{6k_2^2}$. It is elementary to see that $X^{\rm daisy}_{6k_2^2} \setminus X^{\rm daisy}_{6k_1^2}$ can be written as the union of $6(k_2-k_1)$ zig-zag paths as introduced in Definition \ref{def: zig-zag}. We claim that  $X_n \setminus X^{\rm daisy}_{6k_1^2}$ can be written as the union of at most $6(k_2-k_1)+ 1 \EEE$ zig-zag paths. 

 To see this, thanks to Lemma \ref{lemma: zig-zag paths}, observe that at most one of the $6(k_2-k_1)$ zig-zag paths may contain a bridging zig-zag path. Denote this zig-zag path by $\mathcal{P}_1=(p_1,\ldots,p_m)$ and the bridging zig-zag path by $(p_{k_1},\ldots,p_{k_2})$, where  $1 \leq k_1 < k_2\leq m$. We indicate   the two zig-zag paths $(p_1,\ldots,p_{k_1}) \cap X_n $ and $(p_{k_2},\ldots,p_{m}) \cap X_n$ by $ \gamma_0$ and $\gamma_1  $ respectively. 
For the remaining zig-zag paths segmenting $X^{\rm daisy}_{6k_2^2} \setminus X^{\rm daisy}_{6k_1^2}$, denoted by $\mathcal{P}_k$, $k=2,\ldots,6(k_2-k_1)$, we define $\gamma_k:=\mathcal{P}_k \cap X_n$.  Recalling that only  $\mathcal{P}_1$ may contain a bridging zig-zag path, we get that  $\gamma_k$ is a zig-zag path for all $k=0,\ldots,6(k_2-k_1)$. Moreover, $\gamma_k \subset X_n$ and  thus $X_n \setminus X_{6k^2_1}^\mathrm{daisy}$ can be written as the union of the $6(k_2-k_1)+ 1 \EEE$ (possibly empty) zig-zag paths $\gamma_k$, $k=0,\ldots,6(k_2-k_1)$.  \EEE 

Recall that $C_n$ has alternating charge distribution and therefore the net charge of each zig-zag path is in $\lbrace -1,0,1\rbrace$. Also recall from Section \ref{sec: daisy} that daisies always have net charge zero. This implies that the net charge of the configuration $C_n$ satisfies
$$|\mathcal{Q}(C_n) | \le 6(k_2-k_1)+1\EEE.$$
The statement follows from the fact that $k_2 - k_1 \le cn^{1/4}$, see Proposition \ref{prop: davoli}. 
\end{proof}

\section*{Acknowledgements} 
  M.\ F.\ acknowledges support from the Alexander von Humboldt Stiftung. 
 L.\ K.\ acknowledges support
from the Austrian Science Fund (FWF) project P~29681, and from the Vienna Science and Technology Fund (WWTF), the
City of Vienna, and the Berndorf Private Foundation through Project MA16-005. The authors would like to thank Ulisse Stefanelli for turning their attention to this problem. 

\EEE



\begin{thebibliography}{99}
%




\bibitem{Molecular}
 {\sc   N.L. Allinger}.
\newblock {\em Molecular structure: understanding steric and electronic effects from molecular mechanics}.
\newblock John Wiley \& Sons
\newblock (2010).

 

 


\bibitem{Yuen}
 {\sc   Y.~Au Yeung, G.~Friesecke, B.~Schmidt}.
\newblock {\em Minimizing atomic configurations of short range
pair potentials in two dimensions: crystallization in the Wulff-shape}.
\newblock Calc.\ Var.\ Partial Differential Equations
\newblock {\bf 44} (2012), 81--100.

 
\bibitem{Betermin}
{\sc L.~B\'etermin, H.~Kn\"upfer, F.~Nolte}.
\newblock  {\em Crystallization of  one-dimensional alternating two-component systems}.  
\newblock Preprint at arXiv:1804.05743.  
 
 


 
 
 \bibitem{Betermin2}
{\sc L.~B\'etermin, H.~Kn\"upfer}.
\newblock  {\em  On Born's conjecture about optimal distribution of charges for an infinite
ionic crystal.}  
\newblock J.\ Nonlinear Sci.\
\newblock {\bf 28} (2018),  1629--1656.
 
\EEE 
 

\bibitem{Blanc}
{\sc X.~Blanc, M.~Lewin}.
\newblock {\em The crystallization conjecture: a review}.
\newblock EMS Surv.\ Math.\ Sci.\  
\newblock {\bf 2} (2015), 255--306.

 
\bibitem{B43}
{\sc D.~C.~Brydges, P.~A.~Martin}. 
\newblock {\em Coulomb systems at low   density: A review}.
\newblock J.\ Stat.\ Phys.\
\newblock {\bf 96} (1999),  1163--1330.

  



 

  




\bibitem{Davoli15}
{\sc E.~Davoli, P.~Piovano, U.~Stefanelli}. 
\newblock {\em Wulff shape emergence in graphene}. 
\newblock Math.\ Models Methods Appl.\ Sci.\ 
\newblock {\bf 26} (2016), 12:2277--2310.


\bibitem{Davoli16}
{\sc E.~Davoli, P.~Piovano, U.~Stefanelli}. 
\newblock {\em Sharp $n^{3/4}$ 
 law for the minimizers of the edge-isoperimetric problem in the
 triangular lattice}. 
\newblock J.\  Nonlin.\ Sci.\
\newblock  {\bf 27} (2017),  627--660.



\bibitem{Lucia}
{\sc L.~De Luca, G.~Friesecke}. 
\newblock {\em  Crystallization in two dimensions and a discrete Gauss–Bonnet Theorem}. 
\newblock 
J.\ Nonlinear Sci.\
\newblock {\bf 28} (2017), 69--90.



 
 
 \bibitem{ELi}
 {\sc   W.~E, D.~Li}.
\newblock {\em  On the crystallization of 2D hexagonal lattices}.
\newblock Comm.\ Math.\ Phys.\
\newblock {\bf 286} (2009), 1099--1140.



 

\bibitem{Smereka15}
{\sc B.~Farmer, S.~Esedo\={g}lu, P.~Smereka}. 
\newblock {\em Crystallization for a
Brenner-like potential}.  
\newblock  Comm.\ Math.\ Phys.\
\newblock {\bf 349} (2017),  1029--1061. 
  

 
 
 \bibitem{Flateley1}
{\sc L.~Flatley, M.~Taylor, A.~Tarasov, F~ Theil}. 
\newblock {\em Packing twelve spherical caps to maximize tangencies}. 
\newblock J.\ Comput.\ Appl.\ Math.\
\newblock {\bf 254} (2013),220--225.



 \bibitem{Flateley2}
{\sc L.~Flatley, F.~Theil}. 
\newblock {\em Face-centered cubic crystallization of atomistic configurations}. 
\newblock Arch.\ Ration.\ Mech.\ Anal.\
\newblock {\bf 218} (2015), 363--416.



  

 

 \bibitem{emergence}
 {\sc  M.~Friedrich, U.~Stefanelli}.
 \newblock {\em Graphene ground states}.
 \newblock Z.\ Angew.\ Math.\ Phys.\
\newblock  {\bf 69} (2018): 70.




 \bibitem{FriedrichKreutzSquare}
 {\sc  M.~Friedrich, L.~Kreutz}.
 \newblock {\em Finite crystallization and Wulff shape emergence for ionic compounds in the square lattice}.
 \newblock Submitted, 2019. Preprint at  https://arxiv.org/abs/1903.00331.
 \EEE





\bibitem{Friesecke-Theil15}
{\sc G.~Friesecke, F.~Theil}. 
\newblock {\em Molecular geometry optimization,
  models}. In the Encyclopedia of Applied and Computational Mathematics,
B. Engquist (Ed.), Springer, 2015.











 



 \bibitem{Gardner}
{\sc C.~S.~Gardner, C.~Radin}. 
\newblock {\em The infinite-volume ground state of the Lennard-Jones potential}. 
\newblock J.\ Stat.\ Phys.\
\newblock {\bf 20} (1979), 719--724.



  
  
\bibitem{Geim}
 {\sc   A.~K.~Geim, K.~S.~Novoselov}.
\newblock {\em The rise of graphene}.
\newblock Nat.\ Mater.\
\newblock {\bf 6} (2007), 183--191. 




\bibitem{Hamrick}
{\sc G.~C.~Hamrick and C.~Radin}. 
\newblock {\em The symmetry of ground states under perturbation}. 
\newblock J.\ Stat.\ Phys.\
\newblock {\bf 21} (1979), 601--607.
  

\bibitem{HR}
{\sc  R.~Heitman, C.~Radin}. 
\newblock {\em Ground states for sticky disks}. 
\newblock J.\ Stat.\ Phys.\
\newblock {\bf 22} (1980), 3:281--287. 


 





\bibitem{DresselhausM}
{\sc K.~K.~Kim, A.~Hsu, X.~Jia, S.~M.~Kim, Y.~Shi, M.~Dresselhaus, T.~Palacios, J.~Kong}. 
\newblock {\em Synthesis and characterization of hexagonal boron nitride film as a dielectric layer for graphene devices}. 
\newblock Acs Nano 
\newblock {\bf 6} (2012), 8583--8590.

 





\bibitem{cronut}
{\sc G.~Lazzaroni, U.~Stefanelli}. {\em Chain-like minimizers in three
  dimensions}.
  \newblock Transactions of Mathematics and Its Applications
  \newblock {\bf 2} (2018), 1--22. \EEE


\bibitem{Lewars}
{\sc E.~G.~Lewars}.
\newblock {\em Computational Chemistry}. 2nd edition, Springer, 2011.

 
 

\bibitem{Mainini-Piovano}
{\sc E.~Mainini, P.~Piovano, U.~Stefanelli}. 
\newblock {\em Finite crystallization in the square lattice}. 
\newblock Nonlinearity
\newblock {\bf 27} (2014), 717--737.
 

 
\bibitem{Mainini}
{\sc E.~Mainini, U.~Stefanelli}. 
\newblock {\em Crystallization in carbon nanostructures}. 
\newblock Comm.\ Math.\ Phys.\
\newblock {\bf 328} (2014), 545--571. 



 
 
 

 \bibitem{Pauling}
{\sc L.~Pauling}. 
\newblock {\em The nature of the chemical bond and the structure of molecules and crystals: an introduction to modern structural chemistry}.
\newblock Ithaca, New York: Cornell University Press 1960. 


 
\bibitem{Radin}
{\sc  C.~Radin}. 
\newblock {\em The ground state for soft disks}. 
\newblock J.\ Stat.\ Phys.\
\newblock {\bf 26} (1981), 2:365--373. 




  \bibitem{Radi}
{\sc C.~Radin}. 
\newblock {\em Classical ground states in one dimension}. 
\newblock  J.\ Stat.\ Phys.\
\newblock {\bf 35} (1983), 109--117.
  

 

 

 
 \bibitem{B195}
{\sc C.~Radin}. 
\newblock {\it Crystals and quasicrystals: a continuum
  model}. 
  \newblock Comm.\ Math.\ Phys.\
  \newblock {\bf 105} (1986),  385--390.
  


 
 
 

 

 
 
\bibitem{Schmidt-disk}
{\sc  B.~Schmidt}. 
\newblock {\em Ground states of the 2D sticky disc model: fine properties and $N^{3/4}$ law for the
deviation from the asymptotic Wulff-shape}. 
\newblock J.\ Stat.\ Phys.\
\newblock {\bf 153} (2013), 727--738.

\bibitem{Suto06}
{\sc A.~S\"ut\H{o}}. 
\newblock {\em From bcc to fcc: Interplay between oscillation long-range
and repulsive short
range forces}.  
\newblock Phys.\ Rev.\ B
\newblock  {\bf 74} (2006), 104117.

 
 \bibitem{Theil}
{\sc   F.~Theil}. 
\newblock {\em A proof of crystallization in two dimensions}. 
\newblock Comm.\ Math.\ Phys.\ 
\newblock {\bf 262} (2006), 209--236.

 
 
  
  \bibitem{Ventevogel}
{\sc W.~J.~Ventevogel, B.~R.~A.~Nijboer}. 
\newblock {\em On the configuration of systems of interacting atom with
minimum potential energy per  particle\EEE}. 
\newblock  Phys.\ A
\newblock  {\bf 98} (1979), 274--288.\EEE


  

 \bibitem{Wang}
{\sc J.~Wang, F.~Ma, M.~Sun}. 
\newblock {\em Graphene, hexagonal boron nitride, and their heterostructures: properties and applications}. 
\newblock Rsc Adv.\
\newblock {\bf 7} (2017), 16801--16822.





 


\bibitem{Wagner83} {\sc H. J. Wagner}. {\em Crystallinity in two dimensions: a note on a paper of C. Radin}.  J.\ Stat.\ Phys.\   {\bf 33},  (1983),  523--526.




\end{thebibliography}
\end{document}